\newif\ifreport\reporttrue
\newcommand{\age}{\Delta}
\DeclareMathOperator*{\argmax}{argmax}
\def\blue{\color{blue}}
\newcommand{\ignore}[1]{}
\newtheorem{lemma}{Lemma}
\newtheorem{theorem}{Theorem}
\newtheorem{corollary}{Corollary}
\theoremstyle{definition}
\newcommand{\EE}{\mathbb{E}}
\begin{document}
%
% paper title
% Titles are generally capitalized except for words such as a, an, and, as,
% at, but, by, for, in, nor, of, on, or, the, to and up, which are usually
% not capitalized unless they are the first or last word of the title.
% Linebreaks \\ can be used within to get better formatting as desired.
% Do not put math or special symbols in the title.
\title{Sampling and Remote Estimation for the Ornstein-Uhlenbeck Process through Queues: Age of Information and Beyond}
%
%
% author names and IEEE memberships
% note positions of commas and nonbreaking spaces ( ~ ) LaTeX will not break
% a structure at a ~ so this keeps an author's name from being broken across
% two lines.
% use \thanks{} to gain access to the first footnote area
% a separate \thanks must be used for each paragraph as LaTeX2e's \thanks
% was not built to handle multiple paragraphs
%
\author{Tasmeen Zaman Ornee, \emph{Student Member, IEEE} and Yin Sun, \emph{Senior Member, IEEE}\\% <-this % stops a space
\thanks{Tasmeen Zaman Ornee and Yin Sun are with Dept. of ECE,  Auburn University, Auburn, AL (emails: tzo0017@auburn.edu, yzs0078@auburn.edu).}% <-this % stops a space
\thanks{This work was supported in part by NSF grant CCF-1813050 and ONR grant N00014-17-1-2417.}% <-this % stops a space
}

\maketitle
%\blindtext[5]
% As a general rule, do not put math, special symbols or citations
% in the abstract or keywords.
\begin{abstract}
Recently, a connection between the age of information and remote estimation error was found in a sampling problem of Wiener processes: If the sampler has no knowledge of the signal being sampled,  the optimal sampling strategy is to minimize the age of information; however, by exploiting causal knowledge of the signal values, it is possible to achieve a smaller estimation error. 
In this paper, we generalize the previous study by investigating a problem of sampling a stationary Gauss-Markov process named the Ornstein-Uhlenbeck (OU) process, where we aim to find useful insights for solving the problems of sampling more general signals. The optimal sampling problem is formulated as a constrained continuous-time Markov decision process (MDP) with an uncountable state space. We provide an exact solution to this MDP: The optimal sampling policy is a threshold policy on \emph{instantaneous estimation error} and the threshold is found. Further, if the sampler has no knowledge of the OU process, the optimal sampling problem reduces to an MDP for minimizing a \emph{nonlinear} age of information metric. The age-optimal sampling policy is a threshold policy on \emph{expected estimation error} and the threshold is found. In both problems, the optimal sampling policies can be computed by low-complexity algorithms (e.g., bisection search and Newton's method), and the curse of dimensionality is circumvented. These results hold for (i) general service time distributions of the queueing server and (ii) sampling problems both with and without a sampling rate constraint. 
Numerical results are provided to compare different sampling policies.
\end{abstract}

% Note that keywords are not normally used for peerreview papers.
\begin{IEEEkeywords}
Age of information, Ornstein-Uhlenbeck process, sampling policy, threshold policy.
\end{IEEEkeywords}

% For peer review papers, you can put extra information on the cover
% page as needed:
% \ifCLASSOPTIONpeerreview
% \begin{center} \bfseries EDICS Category: 3-BBND \end{center}
% \fi
%
% For peerreview papers, this IEEEtran command inserts a page break and
% creates the second title. It will be ignored for other modes.
\IEEEpeerreviewmaketitle

\section{Introduction}
% The very first letter is a 2 line initial drop letter followed
% by the rest of the first word in caps.
% 
% form to use if the first word consists of a single letter:
% \IEEEPARstart{A}{demo} file is ....
% 
% form to use if you need the single drop letter followed by
% normal text (unknown if ever used by the IEEE):
% \IEEEPARstart{A}{}demo file is ....
% 
% Some journals put the first two words in caps:
% \IEEEPARstart{T}{his demo} file is ....
% 
% Here we have the typical use of a "T" for an initial drop letter
% and "HIS" in caps to complete the first word.
\IEEEPARstart{T}{imely} updates of the system state are of significant importance for state estimation and decision making in networked control and cyber-physical systems, such as UAV navigation, robotics control, mobility tracking, and environment monitoring systems. To evaluate the freshness of state updates, the concept of \emph{Age of Information}, or simply \emph{age}, was introduced to measure the timeliness of state samples received from a remote transmitter \cite{KaulYatesGruteser-Infocom2012, 139341, 3326507}. Let $U_t$ be the generation time of the freshest  received state sample at time $t$. The age of information, as a function
of $t$, is defined as $\Delta_t = t -U_t$, which is the time difference between the freshest samples available at the transmitter and receiver.

Recently, the age of information concept has received significant attention, because of the extensive applications of state updates among systems connected over communication networks. The states of many systems, such as UAV mobility trajectory and sensor measurements, are in the form of a  signal $X_t$, that may change slowly at some time and vary more dynamically later. Hence, the time difference described by the age $\Delta_t = t - U_t$ only partially characterize the variation $X_t -X_{U_t}$ of the system state, and the state update policy that minimizes the age of information does not minimize the state estimation error. This result was first shown in \cite{2020Sun}, where a  sampling problem of Wiener processes was solved and the optimal sampling policy was shown to have an intuitive structure. As the results therein hold only for signals that can be modeled as a Wiener process, one would wonder how to, and whether it is possible to, extend \cite{2020Sun} for handling more general signal models.  
 
In this paper, we generalize \cite{2020Sun} by exploring a problem of sampling an Ornstein-Uhlenbeck (OU) process $X_t$.
From the obtained results, we hope to find useful structural properties of the optimal sampler design that can be potentially applied to more general signal models.
The OU process $X_t$ is the continuous-time analogue of the well-known first-order autoregressive process, i.e., AR(1) process. The OU process is defined as the solution to the stochastic differential equation (SDE) \cite{PhysRev.36.823,Doob1942}
\begin{align}\label{eq_SDE}
dX_t =\theta (\mu-X_t) dt + \sigma dW_t, 
\end{align}
where $\mu$,  $\theta >0$,  and $\sigma >0$ are parameters and $W_{t}$ represents a Wiener process. It is  the only nontrivial continuous-time process that is stationary, Gaussian, and Markovian \cite{Doob1942}. Examples of first-order systems that can be described as the OU process include interest rates, currency exchange rates, and commodity prices (with modifications) \cite{Evans1994}, control systems such as node mobility in mobile ad-hoc networks, robotic swarms, and UAV systems \cite{41c4ac91,Kim2018}, and physical processes such as the transfer of liquids or gases in and out of a tank \cite{Nuno2011}.

As shown in Fig. \ref{fig_model}, samples of an OU process are forwarded to a remote estimator through a channel in a first-come, first-served (FCFS) fashion. The samples experience \emph{i.i.d.} random transmission times over the channel, which is caused by random sample size, channel fading, interference, congestions, and etc. For example, UAVs flying close to WiFi access points may suffer from long communication delay and instability issues, because they receive strong interference from the WiFi access points \cite{Vinogradov_2018}.
%\footnote{This example was suggested to the authors by Thaddeus A. Roppel.} 
We assume that at any time only one sample can be served by the channel. The samples that are waiting to be sent are stored in a queue at the transmitter. Hence, the channel is modeled as an FCFS queue with \emph{i.i.d.} service times. The service time distributions considered in this paper are quite general: they are only required to have a finite mean. This queueing model is helpful to analyze the robustness of remote estimation systems with occasionally long transmission times.

\begin{figure}
\centering
\includegraphics[width=0.4\textwidth]{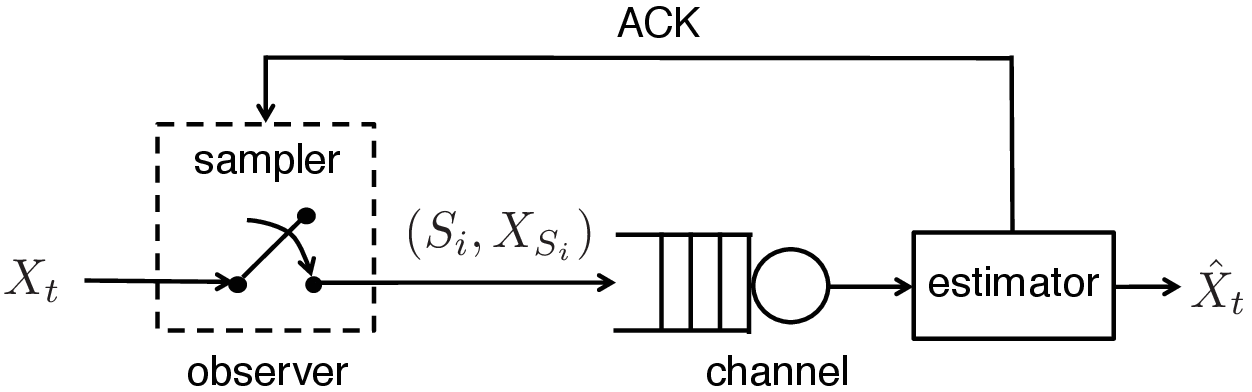}   
\caption{System model.}\vspace{-0.0cm}
\label{fig_model}
\vspace{-1em}
\end{figure}  
The estimator utilizes causally received samples to construct an estimate $\hat X_t$ of the real-time signal value $X_t$. The quality
of remote estimation is measured by  the time-average mean-squared estimation error, i.e.,
\begin{align}
{\mathsf{mse}} = \limsup_{T\rightarrow \infty}\frac{1}{T}\mathbb{E}\left[\int_0^{T} (X_t - \hat X_t)^2dt\right].
\end{align}
Our goal is to find the optimal sampling policy that minimizes $\mathsf{mse}$ by causally choosing the sampling times subject to a maximum sampling rate constraint. In practice, the cost (e.g., energy, CPU cycle, storage) for state updates increases with the average sampling rate. Hence, we are striking to find the optimum tradeoff between estimation error and update cost. In addition, the unconstrained problem is also solved. The contributions of this paper are summarized as follows:

\begin{itemize}

\item The optimal sampling problem for minimize the ${\mathsf{mse}}$ under a sampling rate constraint is formulated as a constrained continuous-time Markov decision process (MDP) with an  uncountable state space. Because of the curse of dimensionality, such problems are often lack of low-complexity solutions that are arbitrarily accurate. However, we were able to solve this MDP exactly: The optimal sampling policy is proven to be a threshold policy on \emph{instantaneous} estimation error, where the threshold is a non-linear function  $v(\beta)$ of a parameter $\beta$. The value of $\beta$ is equal to  the summation of the optimal objective value of the MDP and the optimal Lagrangian dual variable associated to the sampling rate constraint. If there is no sampling rate constraint, the Lagrangian dual variable is zero and hence $\beta$ is exactly the optimal objective value. 
Among the technical tools developed to prove this result is a free boundary method \cite{Peskir2006}, \cite{Bernt2000} for finding the optimal stopping time of the OU process. 

\item The optimal sampler design of Wiener process in \cite{2020Sun} is a limiting case of the above result. By comparing the optimal sampling policies of OU process and Wiener process, we find that the threshold function $v(\beta)$ changes according to the signal model, where the parameter $\beta$ is determined in the same way for both signal models.

\item Further, we consider a class of signal-agnostic sampling policies, where the sampling times are determined without using knowledge of the signal value of the observed OU process; the parameters of the OU process are known.
The optimal signal-agnostic sampling problem is equivalent to an MDP for minimizing the time-average of a nonlinear age function $p(\Delta_t) $, which has been solved recently in \cite{SunNonlinear2019}. The age-optimal sampling policy is a threshold policy on \emph{expected} estimation error, where the threshold function is simply $v(\beta) =\beta$ and the parameter $\beta$ is determined in the same way as above.  

\item  The above results hold for (i) general service time distributions with a finite mean and (ii) sampling problems both with and without a sampling rate constraint. Numerical results suggest that the optimal sampling policy is better than zero-wait sampling and the classic uniform sampling.

\end{itemize}

One interesting observation from these results is that the threshold function $v(\beta)$ varies with respect to the signal model and sampling problem, but the parameter $\beta$ is determined in the same way.

\subsection{Related Work}

The results in this paper are tightly related to recent studies on the age of information $\Delta_t$, e.g., \cite{KaulYatesGruteser-Infocom2012, SunSPAWC2018, SunNonlinear2019,Xiao2018, Ahmed2019, KamTIT2016,AgeOfInfo2016,KamTIT2018,yates2019age,
HeTIT2018,JooTON2018,BedewyJournal2017,BedewyJournal2017_2,multiflow18,KadotaINFOCOM2018,Talak:2018,Lu:2018,maatouk2020status,ZhouTCOM2019,ZhouTWC2020,NET-060, 8940930_1,Ahmed2020ArXiv}, which does not have a signal model. The average age and average peak age have been analyzed for various queueing systems in, e.g., \cite{KaulYatesGruteser-Infocom2012,KamTIT2016, KamTIT2018,yates2019age}. The optimality of the Last-Come, First-Served (LCFS) policy, or more generally the Last-Generated, First-Served (LGFS) policy, was established for various queueing system models in 
\cite{BedewyJournal2017,BedewyJournal2017_2,multiflow18, maatouk2020status}. Optimal sampling policies for minimizing non-linear age functions were developed in, e.g., \cite{AgeOfInfo2016, SunSPAWC2018, SunNonlinear2019,Ahmed2020ArXiv}. 
Age-optimal transmission scheduling of wireless networks were investigated in, e.g., \cite{HeTIT2018,JooTON2018,KadotaINFOCOM2018,Talak:2018,Lu:2018,ZhouTCOM2019,ZhouTWC2020}. %Some recent results on the age of information were summarized in \cite{8940930_1,NET-060}. 

On the other hand, this paper is also a contribution to the area of remote estimation, e.g., \cite{Hajek2008,Nuno2011,Rabi2012,nayyar2013,Basar2014,GAO201857,ChakravortyTAC2020}. In \cite{Rabi2012,Basar2014},  optimal sampling of Wiener processes was studied, where the transmission time from the sampler to the estimator is zero. Optimal sampling of OU processes was also considered in \cite{Rabi2012}, which is solved by discretizing time and using dynamic programming to solve the discrete-time optimal stopping problems. However, our optimal sampler of OU processes is obtained analytically. Remote estimation over several different channel models was recently studied in, e.g., \cite{GAO201857,ChakravortyTAC2020}. In \cite{Hajek2008,Nuno2011,Rabi2012,nayyar2013,Basar2014,GAO201857,ChakravortyTAC2020}, the optimal sampling policies were proven to be threshold policies.
Because of the queueing model, our optimal sampling policy has a different structure from those in \cite{Hajek2008,Nuno2011,Rabi2012,nayyar2013,Basar2014,GAO201857,ChakravortyTAC2020}. Specifically, in our optimal sampling policy,   sampling is suspended when the server is busy and is reactivated once the server becomes idle. In addition, we are able to characterize the threshold precisely.
%We note that the  service time distribution in our queueing model is general, which may even have an infinite support. 
The optimal sampling policy for the Wiener process in \cite{2020Sun} is a limiting case of ours. Remote estimation of the Wiener process with random two-way delay was recently considered in \cite{TsaiINFOCOM2020}. 
In \cite{GuoISIT2020}, a jointly optimal sampler, quantizer, and estimator design was found for a class of continuous-time Markov processes under a bit-rate constraint. A recent survey on remote estimation systems was presented in \cite{jog2019channels}.

%Bedewy2016,Bedewy2017

\section{Model and  Formulation}

%\cite{PhysRev.36.823,Doob1942,VASICEK1977177}
{\ignore
{\blue
\subsection{Notations and Definitions}

For any random variable $Y$ and an event $A$, $[Y|A]$ denotes a random variable with conditional distribution of $Y$ for a given A, and $\mathbb{E} [Y|A]$ denotes the conditional expectation of $Y$ for given A.

We will further need the following definitions:

\textbf{Definition 1. $\sigma$-field:} If $(\Omega, \mathcal{F}, P)$ represents a probability space, where $\Omega$ is a set of outcomes, $\mathcal{F}$ is a set of events, and $P : \mathcal{F} \to [0,1]$ is a function assigns probabilities to events, then $\mathcal{F}$ is a $\sigma$-field, i.e., a collection of subsets of $\Omega$ which satisfy

(i) if $A \in \mathcal{F}$, then $A^{c} \in \mathcal{F}$,

(ii) if $A_i \in \mathcal{F}$ is a countable sequence of sets, then $ {U_i} {A_i} \in \mathcal{F}$.

\textbf{Definition 2. Filtration:} If $(\Omega, \mathcal{F}, P)$ represents a probability space, then filtration is defined as an increasing sequence of $\sigma$-fields. Let $I$ be an index set, then for all $i \in I$, if $\mathcal{F}_{i}$ be a sub-$\sigma$-field of $\mathcal{F}$ where
\begin{align}
\mathbb{F} := {\mathcal{F}_i}_{i \in I}
\end{align}
is called a filtration if $\mathcal{F}_k {\subseteq} \mathcal{F}_l {\subseteq} \mathcal{F}$ for all $k {\leq} l$. Then $(\Omega, \mathcal{F}, \mathbb{F}, P)$ is called a filtered probability space.

\textbf{Definition 3. Stopping Time}: In $\tau$ is a random variable defined in the filtered probability space $(\Omega, \mathcal{F}, \mathbb{F}, P)$, then $\tau$ is called a stopping time with respect to the filtration $\mathbb{F}$ if for all $i \in I$,
\begin{align}
\{\tau {\leq} i\} {\in} {\mathcal{F}_i}.
\end{align}
}}
%\begin {align}
%{\tau {\leq} i} \in {\mathcal{F}_i. \\
%\end{align}

\subsection{System Model}
We consider the remote estimation  system illustrated in Fig. \ref{fig_model}, where an observer takes samples from an OU process $X_t$ and forwards the samples to an estimator through a communication channel. The channel is modeled as a single-server FCFS queue with \emph{i.i.d.} service times. 
The system starts to operate at time $t=0$. The $i$-th sample is generated at time $S_i$ and is delivered to the estimator at time $D_i$ with a service time $Y_i$, which satisfy $S_i \leq S_{i+1}$, $S_i +Y_i \leq D_i$,  $D_i +Y_{i+1}\leq D_{i+1}$, and $0<\mathbb{E}[Y_i] < \infty$ for all $i$. Each sample packet $(S_i, X_{S_i})$ contains  the sampling time $S_i$ and the sample value $X_{S_i}$.
Let $U_t= \max\{S_{i}: D_{i} \leq t\}$ be the sampling time of the latest received sample at time  $t$. The \emph{age of information}, or simply  \emph{age}, at time  $t$ is defined as  \cite{KaulYatesGruteser-Infocom2012, 139341}
\begin{align}\label{eq_age}
\Delta_{t} = t-U_t = t- \max\{S_{i}: D_{i} \leq t\},
\end{align}
which is shown in Fig. \ref{fig:age1}. 
Because $D_i \leq D_{i+1}$, $\Delta_t$ can be also expressed as
\begin{align}\label{eq_age2}
\Delta_t = t- S_i, ~\text{if}~t\in[D_i,D_{i+1}),~i=0,1,2,\ldots
\end{align}
The initial state of the system is assumed to satisfy $S_0 = 0$, $D_0 = Y_0$, $X_0$ and $\age_0$ are finite constants. The parameters $\mu$, $\theta$, and $\sigma$ in \eqref{eq_SDE} are known at both the sampler and estimator.

Let $I_t \in\{0,1\}$ represent the idle/busy state of the server at time $t$.  We assume that whenever a sample is delivered, an acknowledgement is sent back to the sampler with zero delay. By this, the idle/busy state $I_t$ of the server is known at the sampler. Therefore, the information
that is available at the sampler at time $t$ can be expressed as $\{X_s, I_s: 0\leq s\leq t\}$.

\begin{figure}%[t]
\centering
\begin{tikzpicture}[scale=0.23]
\draw [<-|] (0,11)  -- (0,0) -- (14.5,0);
\draw [|->] (15,0) -- (30.5,0) node [below] {\small$t$};
%\draw (0.2,2) node [left] { $Y_0$};
\draw (-2,12) node [right] {\small$\age_t$};
\draw
(0,0) node [below] {\small$S_0$}
(8,0) node [below] {\small$S_1$}
(17,0) node [below] {\small$S_{j-1}$}
(24,0) node [below] {\small$S_{j}$};
\fill
(8,0)  circle[radius=4pt]
(17,0)  circle[radius=4pt]
(24,0)  circle[radius=4pt]
(0,0)  circle[radius=4pt]
(4,0)  circle[radius=4pt]
(11,0)  circle[radius=4pt]
(20,0)  circle[radius=4pt]
(27,0)  circle[radius=4pt];
\draw
(4,0) node [below] {\small$D_0$}
(11,0) node [below] {\small$D_1$}
(21,0) node [below] {\small$D_{j-1}$}
(27,0) node [below] {\small$D_{j}$};
%\draw[<-] (2,3) to [out=110,in=250] (2,7) node [above] {${Q}_1$};
%\draw[<-] (8,3) to [out=110,in=250] (8,7) node [above] {$Q_2$};
%\draw[<-] (25,3) to [out=110,in=250] (25,8) node [above] {$Q_{j}$};
%\fill[fill=lightgray, thin] (4,0) -- plot [domain=4:11] (\x, {\x-2}) -- (11,0) -- cycle;
%\fill[fill=lightgray, thin] (20,0) -- plot [domain=20:27] (\x, {\x-17}) -- (27,0) -- cycle;
\draw[ thick, domain=0:4] plot (\x, {\x+3})  -- (4, {4});
 \draw [ thick, domain=4:11] plot (\x, {\x})  -- (11, {3});
\draw[ thick, domain=11:14] plot (\x, {\x-8});
\draw[ thick, domain=16:20] plot (\x, {\x-14}) -- (20, {3});
\draw[ thick, domain=20:27] plot (\x, {\x-17}) -- (27, {3});
\draw[ thick, domain=27:29] plot (\x, {\x-24});
\draw[  thin,dashed,  domain=0:4] plot (\x, {\x})-- (4, 0);
\draw[  thin,dashed,  domain=8:11] plot (\x, {\x-8})-- (11, 0);
\draw[  thin,dashed,  domain=17:20] plot (\x, {\x-17})-- (20, 0);
\draw[  thin,dashed,  domain=24:27] plot (\x, {\x-24})-- (27, 0);
%\draw  [decorate,decoration={brace,amplitude=4pt}]  (27,3) --  (27,0);
%\draw (27.3,1.5) node [right] {\small$Y_{j}$};
%\draw  [decorate,decoration={brace,amplitude=4pt}]  (11,3) --  (11,0);
%\draw (11.3,1.5) node [right] {\small$Y_{2}$};
\end{tikzpicture}
\caption{Evolution of the age $\age_t$ over time.}
\label{fig:age1}
\vspace{-0.5cm}
\end{figure}

\subsection{Sampling Policies}
In causal sampling policies, each sampling time $S_i$ is determined based on the up-to-date information that is available at the sampler, without using any future information. In probability theory, such sampling times are represented by \emph{stopping times}. 

To define \emph{stopping time} precisely, the concepts of $\sigma$-field and filtration are needed. Let us define the $\sigma$-field
\begin{align}
\mathcal{N}_t &= \sigma(X_s, I_s: 0\leq s\leq t), \nonumber
\end{align}
which is the set of events whose occurrence are determined by the realization of the process $\{X_s, I_s, 0\leq s\leq t\}$ up to time $t$. A filtration is a non-decreasing sequence of $\sigma$-fields. Our analysis requires a strong Markov property, which is satisfied when the filtration is right-continuous. Define
\begin{align}
\mathcal{N}_t^+ = \cap_{s>t}\mathcal{N}_s, 
\end{align}
%where \{$\mathcal{N}_t^+,~t\geq0$\} is the filtration of the information available at the sampler.
then  $\{\mathcal N_t^+, t\geq 0\}$ is a right-continuous filtration of the information process $\{X_s, I_s, t \geq 0\}$ \cite{Durrettbook10}. In a causal sampling policy, each sampling time is a stopping time with respect to $\{\mathcal N_t^+, t\geq 0\}$, i.e.,
%In general, $\sigma$-filed contains all the information available upto the current time $t$. Filtration is a non-decreasing sequence of $\sigma$-fileds. For this case, the filtration need to be right-continuous for utilizing the strong Markov property. For any causal sampling policy, each sampling time $S_i$ must be a stopping time with respect to the filtration, i.e.,
\begin{align}\label{eq_sampling_policy}
\{S_{i}\leq t\} \in \mathcal{N}_t^+,~\forall t\geq0.
\end{align}
In other words, whether sample $i$ has been generated by time $t$ (i.e., whether \{$S_i \leq t$\} or \{$S_i > t$\}) is determined by the realization of the process $\{X_s, I_s, 0\leq s\leq t\}$ up to time $t$.
%where $\mathcal{N}_t^+,~\forall t\geq0$ is the filtration of the information available at the sampler.}

%{\ignore 
%To characterize this statement precisely, let us define the $\sigma$-fields
%\begin{align}
%\mathcal{N}_t &= \sigma(X_s, I_s: 0\leq s\leq t),~\mathcal{N}_t^+ = \cap_{s>t}\mathcal{N}_s. 
%\end{align}
%Hence, $\{\mathcal{N}_t^+,t\geq0\}$ is a \emph{filtration} (i.e., a {non-decreasing} and {right-continuous} family of $\sigma$-fields) of the  information available at the sampler. 
%Each sampling time $S_i$ is a \emph{stopping time} with respect to the {filtration} $\{\mathcal{N}_t^+,t\geq0\}$ such that
%%%Mathematically speaking, 
%%%For each $i=0,1,\ldots$, 
%%Each sampling time $S_{i} $ is a \emph{stopping time} with respect to the filtration $\{\mathcal{N}_t^+,t\geq0\}$, i.e., 
%\begin{align}\label{eq_sampling_policy}
%\{S_{i}\leq t\} \in \mathcal{N}_t^+,~\forall t\geq0.
%\end{align}
%}

Let $\pi = (S_1, S_2, . . . )$ represent a sampling policy. We use $\Pi$ to represent the set of \emph{causal} sampling policies that satisfy two conditions: (i) Each sampling policy $\pi\in\Pi$ satisfies \eqref{eq_sampling_policy} for all $i$. (ii) The sequence of inter-sampling times $\{T_i = S_{i+1}-S_i, i=0,1,\ldots\}$ forms a \emph{regenerative process} \cite[Section 6.1]{Haas2002}: There exists an increasing sequence  $0\leq {k_1}<k_2< \ldots$ of almost surely finite random integers such that the post-${k_j}$ process $\{T_{k_j+i}, i=0,1,\ldots\}$ has the same distribution as the post-${k_0}$ process $\{T_{k_0+i}, i=0,1,\ldots\}$ and is independent of the pre-$k_j$ process $\{T_{i}, i=0,1,\ldots, k_j-1\}$; further, we assume that $\mathbb{E}[{k_{j+1}}-{k_j}]<\infty$, $\mathbb{E}[S_{k_{1}}]<\infty$, and $0<\mathbb{E}[S_{k_{j+1}}-S_{k_j}]<\infty, ~j=1,2,\ldots$\footnote{We will optimize $\limsup_{T\rightarrow \infty}\mathbb{E}[\int_0^{T} (X_t - \hat X_t)^2dt]/T$, but operationally a nicer criterion is $\limsup_{i\rightarrow \infty}\mathbb{E}[\int_{0}^{D_i} (X_t - \hat X_t)^2dt]/{\mathbb{E}[D_i]}$. These criteria are corresponding to two definitions of ``average cost per unit time'' that are widely used in the literature of semi-Markov decision processes. These two criteria are equivalent, if $\{T_1,T_2,\ldots\}$ is a regenerative process, or more generally, if $\{T_1,T_2,\ldots\}$ has only one ergodic class. If not condition is imposed, however, they are different. The interested readers are referred to \cite{Ross1970,Mine1970,Hayman1984,Feinberg1994,Bertsekas2005bookDPVol1} for more discussions.} 

From this, 
%{\blue (3) may be changed.} By Condition (ii), 
we can obtain that $S_i$ is finite almost surely for all $i$.
%\begin{align}\label{eq_finite}
%<\infty, ~\forall~ i.
%\end{align}
We assume that the OU process $\{X_t, t \geq 0\}$ and the service times $\{Y_i, i= 1, 2, \dots\}$ are mutually independent, and do not change according to the  sampling policy. 

A sampling policy $\pi\in\Pi$ is said to be \emph{signal-agnostic} (\emph{signal-aware}), if $\pi$ is (not necessarily) independent of $\{X_t,t\geq 0\}$. 
Let $\Pi_{\text{signal-agnostic}}\subset \Pi$ denote the set of signal-agnostic sampling policies, defined as   
\begin{align}
\!\!\Pi_{\text{signal-agnostic}} \!=\! \{\pi\!\in\Pi: \pi \text{ is independent of }\{X_t, t\geq 0\}\}.\!\!\!
\end{align}

\subsection{MMSE Estimator}

According to \eqref{eq_sampling_policy}, $S_i$ is a finite stopping time. By using \cite[Eq. (3)]{Maller2009} and the strong Markov property of the OU process \cite[Eq. (4.3.27)]{Peskir2006},  $X_t$ is expressed as 
\begin{align}\label{eq_process}
X_t %&= x e^{-\theta t} + \mu(1-e^{-\theta t} )+ \frac{\sigma}{\sqrt{2\theta}}e^{-\theta t} W_{e^{2 \theta t}-1} \nonumber \\
= &X_{S_i} e^{-\theta (t-S_i)}+ \mu\big[1-e^{-\theta (t-S_i)} \big] \nonumber\\
&+ \frac{\sigma}{\sqrt{2\theta}}e^{-\theta (t-S_i)} W_{e^{2 \theta (t-S_i)}-1}, \text{ if }t \in [S_i,\infty).
\end{align}

At any time $t\geq0$, the estimator uses causally received samples to construct an estimate $\hat X_t$ of the real-time signal value $X_t$. The information available to the estimator consists of two parts: (i) $M_t = \{(S_i, X_{S_i}, D_i): D_i \leq t\}$, which contains the sampling time $S_i$, sample value $X_{S_i}$, and delivery time $D_i$ of the samples that have been delivered by time $t$ and (ii) the fact that no sample has been received after the last  delivery time $\max\{D_i: D_i\leq t\}$. Similar to \cite{Rabi2012,SOLEYMANI20161, 2020Sun}, we assume that the estimator neglects the second part of information.\footnote{We note that this assumption can be removed by considering a joint sampler and estimator design problem. Specifically, it was shown in \cite{Hajek2008,Nuno2011,nayyar2013,GAO201857,ChakravortyTAC2020} that when the sampler and estimator are jointly optimized in discrete-time systems, the optimal estimator has the same expression no matter with or without the second part of information. As pointed out in \cite[p. 619]{Hajek2008}, such a structure property of the MMSE estimator can be also established for continuous-time systems. The goal of this paper is to find the closed-form expression of the optimal sampler under this assumption. The remaining task of finding the jointly optimal sampler and estimator design can be done by further using the majorization techniques developed in \cite{Hajek2008,Nuno2011,nayyar2013,GAO201857,ChakravortyTAC2020}; see \cite{GuoISIT2020} for a recent treatment on this task.}
\ifreport
Then, as shown in Appendix \ref{app_MMSE}, the minimum mean square error (MMSE)  estimator is determined by
\else
Then, as shown in our technical report \cite{Ornee2018}, the minimum mean square error (MMSE)  estimator is
\fi
\begin{align}\label{eq_esti}
\hat{X}_{t}  =  \mathbb{E}[{X}_t | M_t ] =& X_{S_i} e^{-\theta (t-S_i)}+ \mu\big[1-e^{-\theta (t-S_i)} \big], \nonumber\\
                &\text{if}~t\in[D_i,D_{i+1}),~i=0,1,2,\ldots
\end{align}
Hence, the estimation error of the MMSE estimator is
\begin{align} \label{eq_esti_2}
X_t - \hat X_t =& \frac{\sigma}{\sqrt{2\theta}}e^{-\theta t} W_{e^{2 \theta t}-1}, \nonumber\\
&\text{if}~ t\in[D_i, D_{i+1}), i = 0,1,2,\ldots\!\!\!\!\!\!\!\!\!\!\!\!
\end{align}
\subsection{Problem Formulation}

%We measure the quality of estimation via the MMSE:
%\begin{align}%\label{eq_MMSE}
%\limsup_{T\to \infty} {1\over T} \EE \left[\int_0^T (X_t - \hat X_t)^2 dt\right].\nonumber
%\end{align}

The goal of this paper is to find the optimal sampling policy that minimizes the mean-squared estimation error subject to an average sampling-rate constraint, which is formulated as the following problem: 
%The optimal sampling problem for minimizing the MMSE subject to a  average sampling-rate constraint is formulated as 
\begin{align}\label{eq_DPExpected}
{\mathsf{mse}}_{\text{opt}}=
\inf_{\pi\in\Pi}~& \limsup_{T\rightarrow \infty}\frac{1}{T}\mathbb{E}\left[\int_0^{T} (X_t - \hat X_t)^2dt\right] \\
~\text{s.t.}~&\liminf_{n\rightarrow \infty} \frac{1}{n} 
\mathbb{E}\left[\sum_{i=1}^n (S_{i+1}-S_i)\right]\geq \frac{1}{f_{\max}},\label{eq_constraint}
\end{align}
where ${\mathsf{mse}}_{\text{opt}}$ is the optimum value of \eqref{eq_DPExpected} and  $f_{\max}$ is the maximum allowed sampling rate. When $f_{\max} = \infty$, this problem becomes an unconstrained problem.   %Later on, the unconstrained problem with $f_{\max} = \infty$ will also be solved.

%Given the value of the sample $X_{S_i}$, for each $t\geq S_i$, $X_t$ is the solution of equation (1) and has the same distribution as \cite[Eq. (3)]{Maller2009}
%\begin{align}\label{eq_process}
%X_t %&= x e^{-\theta t} + \mu(1-e^{-\theta t} )+ \frac{\sigma}{\sqrt{2\theta}}e^{-\theta t} W_{e^{2 \theta t}-1} \nonumber \\
%&= X_{S_i} e^{-\theta (t-S_i)}+ \mu\big[1-e^{-\theta (t-S_i)} \big] \nonumber\\
%&~~~+ \frac{\sigma}{\sqrt{2\theta}}e^{-\theta (t-S_i)} W_{e^{2 \theta (t-S_i)}-1}.
%\end{align}
%
%At time t, if the parameters of the OU process $\theta$, $\mu$, and $\sigma$ are known at the estimator, it gives $\hat{X}_t$ with $D_i \leq t$ by minimum mean square error (MMSE) estimation,  %\cite{Schervish1995}
%
%\begin{align}\label{eq_esti}
%\hat{X}_{t}  = & \mathbb{E}[{X}_t | X_{S_j}, D_j \leq t ] \nonumber\\
%                = & X_{S_i} e^{-\theta (t-S_i)}+ \mu\big[1-e^{-\theta (t-S_i)} \big], \nonumber\\
%                &~~~~~~~~~~~~~~~~~~~\text{if}~t\in[D_i,D_{i+1}),~i=0,1,2,\ldots
%\end{align}

%\begin{algorithm}[t!]
%\caption{Bisection search method for solving \eqref{thm1_eq22}} \label{alg1}
%\begin{algorithmic}[]
%\STATE \textbf{given} $l=\mathsf{mse}_{Y_i}$, $u=\mathsf{mse}_{\infty}$, tolerance $\epsilon>0$.
%\REPEAT
%\STATE $\beta:= (l+u)/2$.
%\STATE $o :={\blue {f_1(\beta) - {\beta} f_2(\beta)}.}$
%\STATE \textbf{if} $o\geq 0$, $l:=\beta$; \textbf{else}, $u:=\beta$.
%\UNTIL $u-l\leq \epsilon$.
%\STATE \textbf{return} $\beta$.
%\end{algorithmic}
%\end{algorithm}
\section{Main Results}\label{sec_main_result}
\subsection{Signal-aware Sampling without Rate Constraint}
Problem \eqref{eq_DPExpected} is a constrained continuous-time MDP with a continuous state space. However, we found an exact solution to this problem. 
%\ifreport
%{}
%\else
%Due to space limitation, the proofs are relegated to our online technical report \cite{Ornee2018}.
%\fi
To present this solution, let us consider an OU process $O_t$ with the initial state $O_t = 0$ and parameter $\mu = 0$. According to \eqref{eq_process}, $O_t$ can be expressed as
\begin{align}\label{eq_OU}
O_t = \frac{\sigma}{\sqrt{2\theta}}e^{-\theta t} W_{e^{2 \theta t}-1}.
\end{align} 
%Let us define\footnote{When $x=0$, $G(x)$ is defined as its right limit, i.e., $G(0)=\lim_{x\rightarrow 0^+}G(x)= 1$.}
%\begin{align}\label{eq_gamma}
%& G(x) = \frac{e^{x^2}}{x}\int_0^x e^{-t^2} dt,~x\in[0,\infty).%\\
%\end{align}
Define
\begin{align}\label{eq_mse_Yi}
& \mathsf{mse}_{Y_i} = \mathbb{E}[O_{Y_i}^2] = \frac{\sigma^2}{2\theta} \mathbb{E} [1 - e^{-{2 \theta {Y_i}}}], \\
& \mathsf{mse}_{\infty} = \mathbb{E}[O_{\infty}^2] = \frac{\sigma^2}{2\theta}.\label{eq_mse_infty}
\end{align}
In the sequel, we will see that $\mathsf{mse}_{Y_i}$ and $\mathsf{mse}_{\infty}$ are the lower and upper bounds of $\mathsf{mse}_\text{opt}$, respectively. According to \eqref{eq_esti_2} and \eqref{eq_OU}-\eqref{eq_mse_infty}, $\mathsf{mse}_{Y_i}$ represents the estimation error when the estimation is made based on a sample that was generated $Y_i$ seconds ago, and $\mathsf{mse}_{\infty}$ represents the estimation error for the case that no sample has been delivered to the estimator before.
We will also need to use the function\footnote{If $x=0$, $G(x)$ is defined as its right limit $G(0)=\lim_{x\rightarrow 0^+}G(x)= 1$.}
\begin{align}\label{eq_g1}
& G(x) =\!\! \frac{e^{x^2}}{x} {\int_{0}^{x} e^{-{t^2}} dt}\!\! \thinspace = \thinspace \frac{e^{x^2}}{x} \frac{\sqrt{\pi}}{2} \thinspace {\text{erf}}(x), ~x\in[0,\infty),
\end{align}
%From \eqref{eq_g1}, it follows
%\begin{align}  \label{eq_g}
%& G(x) = \frac{e^{x^2}}{x} \frac{\sqrt{\pi}}{2} \thinspace \emph{\text{erf}}(x),~x\in[0,\infty),
%\end{align}
where $\text{erf}(\cdot)$ is the  error function \cite{2007247}, defined as
\begin{align}\label{eq_erf}
{\text{erf}}(x) = \frac{2}{\sqrt \pi} \int_0^x e^{-t^2} dt.
\end{align}
%In this sequel, we also need to introduce generalized hypergeometric function {\blue \cite[Eq. 16.2.1] {hyp10}} which is denoted by the following form
%If the inter-sampling time is $\infty$, when no sample has been received for a long period of time, the mean squared error is given by
%\begin{align}\label{mse_inf}
%& \mathsf{mse}_{\infty} = \frac{\sigma^2}{2\theta}.
%\end{align}
%Again, the error at time $Y_i$ is
%\begin{align}
%& \mathsf{mse}_{Y_i} = \frac{\sigma^2}{2\theta} \mathbb{E} [1 - e^{-{2 \theta {Y_i}}}].
%\end{align}

%where $\gamma(\theta)$ does not change with  $i$, because the $Y_i$'s are \emph{i.i.d.}
%Because the $Y_i$'s are \emph{i.i.d.}, $\gamma$ does not change with $i$.
%{\red A better explanation of \eqref{eq_threshold} should be provided.}

%\begin{lemma}
%The parameter $\beta$ is lower bounded by $\frac{\sigma^2}{2\theta} \mathbb{E}[1-e^{-2{\theta}Y_i}]$ and is upper bounded by $\frac{\sigma^2}{2\theta}$.
%\end{lemma}
%\begin{proof}
%%\ifreport
%%See Appendix {}
%%\else
%See \cite{Ornee2018}, Appendix
%%\fi
%\end{proof}

%\subsubsection{Signal-Aware Sampling without Rate Constraint}

%We first provide the following theorem which is a solution to \eqref{eq_DPExpected} and does not consider any sampling rate constraint, i.e., $f_{\max}=\infty$ in this scenerio.

We first consider the unconstrained optimal sampling problem, i.e., $f_{\max} = {\infty}$, such that the rate constraint \eqref{eq_constraint} can be removed. In this scenario, the optimal sampler is provided in the following theorem. 
%{\blue We first consider the problem \eqref{eq_DPExpected} without sampling rate-constraint. This is the case when \eqref{eq_constraint} is inactive, i.e., $f_{\max}=\infty$. We introduce the following Theorem which is a solution to \eqref{eq_DPExpected} for unconstrained scenerio.}

\begin{theorem}\label{thm1} (Sampling without Rate Constraint).
If $f_{\max} = {\infty}$ and the $Y_i$'s are i.i.d. with $0<\mathbb{E}[Y_i] < \infty$, then $(S_1(\beta),S_2(\beta),\ldots)$  with a parameter $\beta$ is an optimal solution to \eqref{eq_DPExpected}, 
%\item[4.] \emph{Threshold policy in signal difference}: 
%The sampling times are given by
%
%\end{itemize} 
%\begin{align}\label{eq_DPExpected}
%%\pi_{\text{opt}}=\arg
%&\min_{\pi\in\Pi}~ \limsup_{T\rightarrow \infty}\frac{1}{T}\mathbb{E}\left[\int_0^{T} (X_t - \hat X_t)^2dt\right] \\
%&~\text{s.t.}~~ \liminf_{n\rightarrow \infty} \frac{1}{n} 
%\mathbb{E}[S_n]\geq \frac{1}{f_{\max}}.\label{eq_constraint}
%\end{align
where 
\begin{align}\label{eq_opt_solution}
S_{i+1} (\beta)= \inf \left\{ t \geq D_i(\beta):\! \big|X_t - \hat X_t\big| \!\geq\! {v}(\beta)\right\},
\end{align}
$D_i (\beta)= S_i (\beta)+ Y_i$, ${v}(\beta)$ is defined by 
%\footnote{In Section \ref{sec_pre}, it is shown that $\beta$ is upper bounded by $\frac{\sigma^2}{2\theta}$.}
%a  non-negative implicit function of $\beta$ defined by  equation \eqref{eq_threshold}:
 % (11): the optimal $v_*$ is determined by solving
\begin{align}\label{eq_threshold}
{v}(\beta) = \frac{\sigma}{\sqrt \theta}G^{-1}\left(\frac{\mathsf{mse}_{\infty} - \mathsf{mse}_{Y_i}}{\mathsf{mse}_{\infty} - {\beta}}\right),
%-\frac{v}{\theta} + \left(\frac{\sigma}{\theta^{3/2}}-\frac{2\beta}{\sqrt \theta \sigma}\right) F\left(\frac{\sqrt \theta}{\sigma}v\right) = 2 \gamma(\theta)  v,
\end{align}
$G^{-1}(\cdot)$ is the inverse function of $G(\cdot)$ in \eqref{eq_g1} and $\beta$ is the unique root of
\begin{align}\label{thm1_eq22}
\!\!\! \mathbb{E}\left[\int_{D_i(\beta)}^{D_{i+1}(\beta)}(X_t-\hat X_t)^2dt\right]\! - \!{\beta} {\mathbb{E}[D_{i+1}(\beta)\!-\!D_i(\beta)]} \!=\! 0.\!\!\! 
\end{align}
The optimal objective value to \eqref{eq_DPExpected} is given by
\emph{\begin{align}\label{thm1_eq23}
\mathsf{mse}_{\text{opt}} = \frac{\mathbb{E}\left[\int_{D_i(\beta)}^{D_{i+1}(\beta)}\! (X_t-\hat X_t)^2dt\right]}{\mathbb{E}[D_{i+1}(\beta)\!-\!D_i(\beta)]}. {\noindent}
\end{align}}
{\!\!Furthermore}, $\beta$ is exactly the optimal value to \eqref{eq_DPExpected}, i.e., \emph{${\beta} = {\mathsf{mse}}_{\text{opt}}$}.
\end{theorem}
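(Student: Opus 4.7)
The plan is a Dinkelbach-style renewal-ratio reformulation. Because the inter-sample times form a regenerative process and the OU process together with the i.i.d.\ service times has stationary dynamics, the objective in \eqref{eq_DPExpected} reduces (via the renewal-reward theorem) to the ratio $\mathbb{E}[\int_{D_i}^{D_{i+1}}(X_t-\hat X_t)^2 dt]/\mathbb{E}[D_{i+1}-D_i]$ over a single cycle. A standard fractional-programming argument then shows that a policy $\pi$ attains the infimum iff it minimizes the auxiliary per-cycle cost $J(\pi,\beta):=\mathbb{E}[\int_{D_i}^{D_{i+1}}(X_t-\hat X_t)^2 dt-\beta(D_{i+1}-D_i)]$ at the unique root $\beta=\mathsf{mse}_{\text{opt}}$ of $\inf_\pi J(\pi,\beta)=0$. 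This immediately identifies the parameter $\beta$ in the theorem with $\mathsf{mse}_{\text{opt}}$, provided one can solve $\inf_\pi J(\pi,\beta)$ and exhibit a root.

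The next step is to reduce the per-cycle minimization to an optimal stopping problem for the OU error. Using \eqref{eq_process} and \eqref{eq_esti}, one sees that $X_t-\hat X_t$ for $t\in[S_i,D_{i+1})$ has the same distribution as the mean-zero OU process $O_{t-S_i}$ in \eqref{eq_OU}, independent of the choice of $S_{i+1}$. Splitting the cycle at $S_{i+1}$ and applying the strong Markov property at $S_{i+1}$ together with independence of $Y_{i+1}$, the per-cycle cost becomes a functional of $\tau:=S_{i+1}-S_i$ and of $O_{\tau}^2$; the cost accrued during the trailing service interval $[S_{i+1},D_{i+1}]$, where the estimator is still based on $X_{S_i}$, can be evaluated in closed form via the OU transition density and absorbed into a terminal penalty depending only on $O_\tau^2$. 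This leaves a single-cycle optimal stopping problem over the filtration generated by $O$.

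The third step is to solve this stopping problem by the free-boundary / smooth-pasting method \cite{Peskir2006,Bernt2000}. Symmetry of the integrand and terminal penalty about the origin suggests a continuation region $\{|O|<v\}$ for some threshold $v>0$. The value function $V_\beta(x)$ satisfies, on $(-v,v)$, the ODE $\mathcal{L}V_\beta(x)=-(x^2-\beta)$ with the OU generator $\mathcal{L}=-\theta x\partial_x+(\sigma^2/2)\partial_x^2$, and matches the terminal penalty at $\pm v$. Imposing the smooth-fit condition $V_\beta'(\pm v)=0$ singles out the threshold. The homogeneous and particular solutions of $\mathcal{L}V_\beta=-(x^2-\beta)$ involve integrals of $e^{\theta y^2/\sigma^2}$ and $e^{-\theta y^2/\sigma^2}$, so the smooth-fit identity can be rearranged into the compact form $G(v\sqrt{\theta}/\sigma)=(\mathsf{mse}_{\infty}-\mathsf{mse}_{Y_i})/(\mathsf{mse}_{\infty}-\beta)$ in terms of the function $G$ of \eqref{eq_g1}, which yields \eqref{eq_threshold}.

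The fourth and final step is a verification argument. One extends $V_\beta$ outside $[-v,v]$ by the terminal penalty, applies It\^{o}'s formula, and invokes optional sampling to show that $V_\beta$ is a lower bound for $J$ over all admissible stopping times and that the threshold policy attains it; this proves optimality of $S_{i+1}(\beta)$. Plugging the optimal stopping time back into $J(\pi,\beta)=0$ gives the defining equation \eqref{thm1_eq22}, and \eqref{thm1_eq23} together with the Dinkelbach equivalence yields $\beta=\mathsf{mse}_{\text{opt}}$. The main obstacle I anticipate is the rigorous handling of the free-boundary step: verifying that the smooth-fit principle produces a \emph{genuinely} optimal threshold (rather than merely a stationary point of a restricted family), and confirming the integrability conditions needed for optional sampling. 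A secondary subtlety is the need to absorb the ``service-time tail'' $[S_{i+1},D_{i+1}]$ into the terminal cost, since during that interval the outdated estimator is still in use and the OU error is no longer adapted only to pre-$S_{i+1}$ information.
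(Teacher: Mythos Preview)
Your plan matches the paper's proof almost exactly: renewal-reward reduction, Dinkelbach parametrization by $\beta$, reduction to a one-dimensional optimal stopping problem for the OU error $O_t$, free-boundary solution, and It\^{o}-based verification. Two points deserve tightening.

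First, the paper inserts an explicit preliminary step (its Lemma~\ref{lem_zeroqueue}) showing that it is never beneficial to generate a sample while the server is still busy. This is what justifies restricting attention to policies with $S_{i+1}\geq D_i$ and hence to a single stopping time $Z_i=S_{i+1}-D_i\geq 0$ per cycle; without it, the ``cycle'' $[D_i,D_{i+1}]$ could contain several queued samples and the reduction to a single optimal-stopping problem does not follow. You implicitly assume this structure; the paper proves it via the strong Markov property and the orthogonality principle of the MMSE estimator.

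Second, your smooth-fit condition $V_\beta'(\pm v)=0$ is not the right one. Once the post-$S_{i+1}$ service interval is absorbed into a terminal cost, that cost is $-\gamma\,O_\tau^2$ with $\gamma=\tfrac{1}{2\theta}\mathbb{E}[1-e^{-2\theta Y}]$ (this is exactly your ``secondary subtlety''), so the free-boundary problem is \eqref{eq_free1}--\eqref{eq_free3} with boundary data $H(\pm v_*)=-\gamma v_*^2$ and smooth fit $H'(\pm v_*)=\mp 2\gamma v_*$. It is this nonzero boundary derivative, not $0$, that produces the relation $G(\sqrt{\theta}\,v_*/\sigma)=(\mathsf{mse}_\infty-\mathsf{mse}_{Y_i})/(\mathsf{mse}_\infty-\beta)$ and hence \eqref{eq_threshold}. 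With these two corrections your outline coincides with the paper's argument.
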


The proof of Theorem \ref{thm1}  is explained in Section \ref{sec_proof}. The optimal sampling policy in Theorem \ref{thm1} has a nice structure. Specifically, the $(i+1)$-th sample is taken at the earliest time $t$ satisfying two conditions: {(i)} The $i$-th sample has already been delivered by time $t$, i.e., $t\geq D_i(\beta)$, and {(ii)} the estimation error $|X_t - \hat X_t|$ is no smaller than a pre-determined threshold ${v}(\beta)$, where $v(\cdot)$ is a non-linear function defined in \eqref{eq_threshold}. In Section \ref{sec_proof}, it is shown that \emph{$\mathsf{mse}_{Y_i}  \leq \beta < \mathsf{mse}_{\infty}$.}
%In addition, the following lemma is true.
%We have established the following lemma which explains a relation among the parameter $\beta$ with \eqref{eq_DPExpected}, \eqref{eq_mse_Yi}, and \eqref{eq_mse_infty}.
%\begin{lemma}\label{boundary}
%In Theorem \ref{thm1}, it holds that \emph{$\mathsf{mse}_{Y_i}  \leq \beta < \mathsf{mse}_{\infty}$.}
%\end{lemma}
%\begin{proof}
%See Appendix \ref{boundary_1}.
%\end{proof}
%By Lemma \ref{boundary},  ${\mathsf{mse}_{\infty} - \mathsf{mse}_{Y_i}} {\geq} {\mathsf{mse}_{\infty} -{\beta}}$. 
Further, it is not hard to show that $G(x)$ is strictly increasing on $[0, \infty)$ and $G(0) = 1$. Hence, its inverse function $G^{-1}(\cdot)$ and the threshold $v(\beta)$ are properly defined and $v(\beta)\geq 0$.
%We note that the service time distribution affects the optimal sampling policy in \eqref{eq_opt_solution} and \eqref{eq_threshold} through $\mathsf{mse}_{Y_i}$ and $\beta$.

\subsubsection{Three Algorithms for Solving \eqref{thm1_eq22}} \label{algo_sec}
We now present three algorithms for computing the root of \eqref{thm1_eq22}. Because the $S_i(\beta)$'s are stopping times, numerically calculating the expectations in \eqref{thm1_eq22} appears to be a difficult task. Nonetheless, this challenge can be solved by resorting to the following lemma, which is obtained by using Dynkin's formula \cite[Theorem 7.4.1]{Bernt2000} and the optional stopping theorem.
%We have found several Algorithms for the computation of $\beta$. In Algorithm \ref{alg1}, $\beta$ is computed by solving \eqref{thm1_eq22} via the bisection search method. The upper and lower bounds for bisection search in Algorithm \ref{alg1} are chosen based on  Lemma \ref{boundary}.
%The calculation of the expectations in Algorithm \ref{alg1} can be greatly simplified by using the following lemma, which is obtained \ifreport
%\else
%in \cite{Ornee2018}
%\fi by using Dynkin's formula \cite[Theorem 7.4.1]{Bernt2000} and the optional stopping theorem.

\begin{figure}[!t]
\vspace{-3mm}
	\centering
	\includegraphics[width=0.5\textwidth]{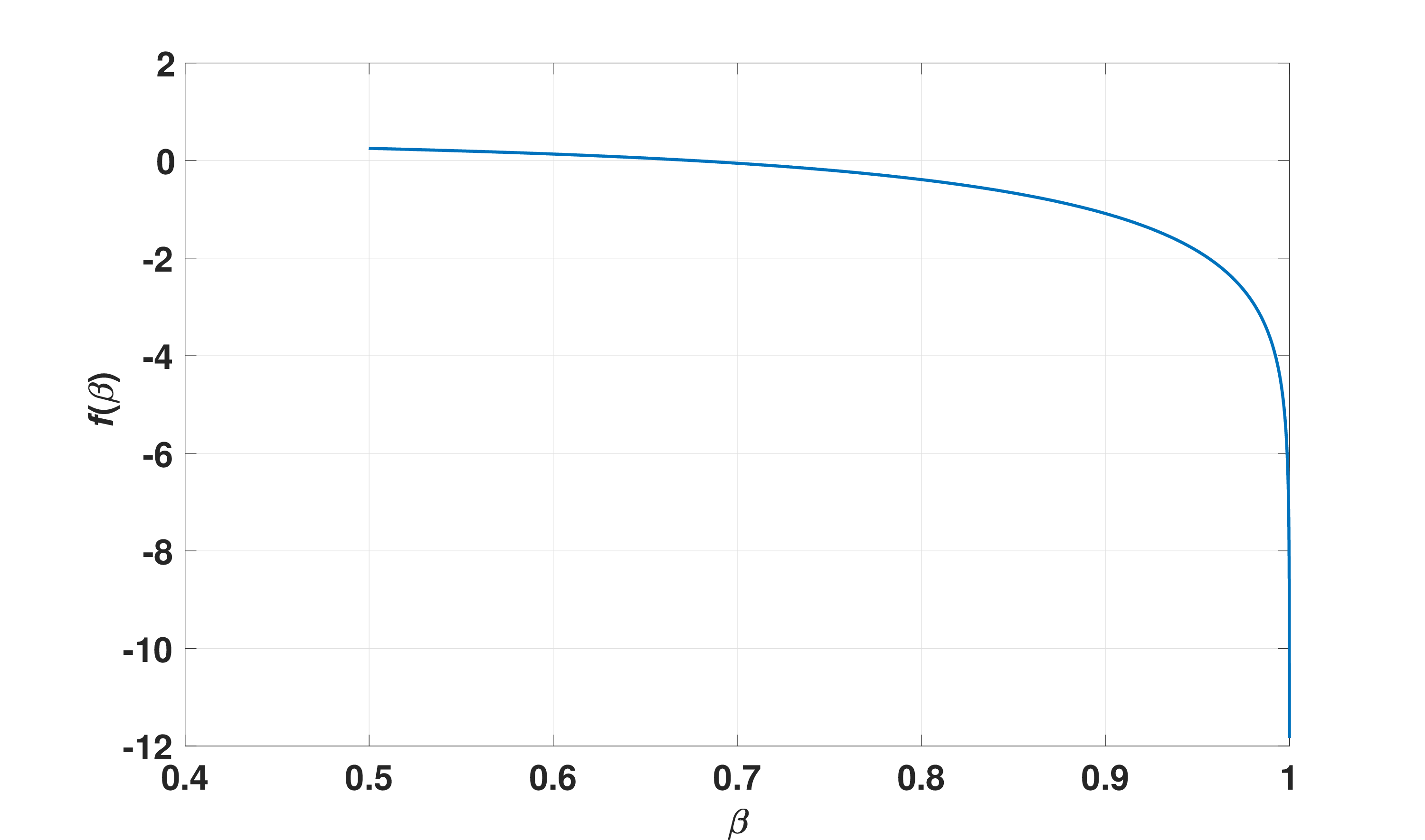}   
	\caption{$f(\beta)$ in \eqref{two_root_1} for \emph{i.i.d.} exponential service time with $\mathbb{E}[Y_i] = 1$, where the parameters of the OU process are $\sigma=1$ and $\theta=0.5$. For these parameters, $\mathsf{mse}_{Y_i} = 0.5$ and $\mathsf{mse}_{\infty}=1$.}
	\label{f_one_root}
	\vspace{-3mm}
\end{figure} 
\begin{lemma}\label{eq_expectation}
%The expectations in Algorithms \ref{alg1}-\ref{alg2} can be evaluated as
In Theorem \ref{thm1}, it holds that
\begin{align}\label{eq_expectation1}
\!\!\!\!\!\! & \mathbb{E}[D_{i+1}(\beta) - D_i(\beta)] \nonumber\\
=& \mathbb{E}[\max\{R_1 (v(\beta)) - R_1 (O_{Y_i}), 0\}] + \mathbb{E}[Y_i],\!\!\!\!\!\! 
\end{align}
\begin{align}
\!\!\!\!\!\! & {\mathbb{E}\left[\int_{D_i(\beta)}^{D_{i+1}(\beta)}\! (X_t-\hat X_t)^2dt\right]} \nonumber\\
= & \mathbb{E}[\max\{R_2 (v(\beta)) - R_2 (O_{Y_i}), 0\}] \nonumber\\
& + \mathsf{mse}_{\infty} [\mathbb{E} (Y_i)-\gamma] + \mathbb{E} \left[\max\{v^2(\beta), O_{Y_i}^2\}\right] \gamma, \!\!\!\!\!\!\label{eq_expectation2}
\end{align}
where
\begin{align}
& \gamma = \frac{1}{2 \theta} \mathbb{E} [1 - e^{-2 \theta Y_i}] \label{eq_gamma1}, \\
& R_1 (v) = {\frac{v^2}{\sigma^2}} \thinspace {}_2F_2\left(1,1;\frac{3}{2},2;\frac{\theta}{\sigma^2}v^2\right), \label{eq_R_1}\\
& R_2 (v) = -\frac{v^2}{2\theta} + {\frac{v^2}{2\theta}} \thinspace {}_2F_2\left(1,1;\frac{3}{2},2;\frac{\theta}{\sigma^2}v^2\right).\label{eq_R_2}
\end{align}
%{\blue In this sequel, we also need to define generalized hypergeometric function {\blue \cite[Eq. 16.2.1] {hyp10}} which is denoted by the following form
%\end{align}} 
\end{lemma}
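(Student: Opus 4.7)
The plan is to reduce both identities to quantities involving an auxiliary mean-zero OU process and then apply Dynkin's formula together with the strong Markov property. By \eqref{eq_process} and \eqref{eq_esti}, on each interval $[S_i, S_{i+1})$ the error admits the representation $X_t - \hat X_t = \tilde O_{t-S_i}$, where $\tilde O_s = \frac{\sigma}{\sqrt{2\theta}} e^{-\theta s} W_{e^{2\theta s}-1}$ is a zero-mean OU process with $\tilde O_0 = 0$, distributionally equivalent to the process $O$ in \eqref{eq_OU}. In particular, at time $D_i = S_i + Y_i$ the error equals $\tilde O_{Y_i} \stackrel{d}{=} O_{Y_i}$. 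Setting $\tau^* = S_{i+1}(\beta) - S_i(\beta) = \inf\{s \geq Y_i : |\tilde O_s| \geq v(\beta)\}$ and using $D_{i+1} - D_i = (\tau^* - Y_i) + Y_{i+1}$ with $Y_{i+1}$ independent of $\{\tilde O_s\}_{s\leq \tau^*}$, it suffices to compute $\mathbb{E}[\tau^* - Y_i]$, $\mathbb{E}\int_{Y_i}^{\tau^*}\tilde O_s^2\,ds$, and $\mathbb{E}\int_{\tau^*}^{\tau^*+Y_{i+1}}\tilde O_s^2\,ds$.

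The first two are handled by Dynkin's formula. The infinitesimal generator of $\tilde O$ is $\mathcal{L}f(x) = -\theta x f'(x) + \frac{\sigma^2}{2} f''(x)$, and I would take $R_1,R_2$ to be the even functions with $R_1(0) = R_2(0) = 0$ satisfying
\begin{align*}
\mathcal{L}R_1(x) = 1, \qquad \mathcal{L}R_2(x) = x^2,
\end{align*}
whose confluent-hypergeometric solutions are precisely \eqref{eq_R_1} and \eqref{eq_R_2}. Applying Dynkin's formula and the optional stopping theorem to the stopped martingales $R_j(\tilde O_{s\wedge \tau^*}) - \int_0^{s\wedge\tau^*}\mathcal{L}R_j(\tilde O_u)du$, and using the strong Markov property at time $Y_i$, gives
\begin{align*}
\mathbb{E}[R_1(\tilde O_{\tau^*}) - R_1(\tilde O_{Y_i})] &= \mathbb{E}[\tau^* - Y_i],\\
\mathbb{E}[R_2(\tilde O_{\tau^*}) - R_2(\tilde O_{Y_i})] &= \mathbb{E}\!\left[\int_{Y_i}^{\tau^*}\tilde O_s^2\,ds\right].
\end{align*}
Since $\tilde O_{\tau^*} = \pm v(\beta)$ whenever $|\tilde O_{Y_i}| < v(\beta)$ and $\tilde O_{\tau^*} = \tilde O_{Y_i}$ otherwise, and since each $R_j$ is even and strictly increasing on $[0,\infty)$, both differences collapse to $\max\{R_j(v(\beta)) - R_j(O_{Y_i}),\,0\}$. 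This yields \eqref{eq_expectation1} after adding back $\mathbb{E}[Y_{i+1}] = \mathbb{E}[Y_i]$, and yields the first summand of \eqref{eq_expectation2}.

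For the residual integral $\mathbb{E}\int_{\tau^*}^{\tau^*+Y_{i+1}}\tilde O_s^2\,ds$, I would invoke the strong Markov property at $\tau^*$ together with the independence of $Y_{i+1}$. Conditional on $\tilde O_{\tau^*} = x$, the process $(\tilde O_{\tau^*+u})_{u\geq 0}$ is an OU process started at $x$, whose second moment is $x^2 e^{-2\theta u} + \frac{\sigma^2}{2\theta}(1 - e^{-2\theta u})$. Integrating over $u \in [0, Y_{i+1}]$, averaging over the independent $Y_{i+1}$, and identifying the constants $\gamma$ in \eqref{eq_gamma1} and $\mathsf{mse}_{\infty}$ in \eqref{eq_mse_infty} produces $\gamma\,\mathbb{E}[\tilde O_{\tau^*}^2] + \mathsf{mse}_{\infty}(\mathbb{E}[Y_i] - \gamma)$. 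Finally, the identity $\tilde O_{\tau^*}^2 = \max\{v^2(\beta),\,O_{Y_i}^2\}$, which follows from the two exhaustive cases above, delivers the last two terms of \eqref{eq_expectation2}.

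The main technical obstacle is twofold. First, one must verify that the ${}_2F_2$ expressions \eqref{eq_R_1} and \eqref{eq_R_2} are in fact the even $C^2$ solutions of $\mathcal{L}R_1 = 1$ and $\mathcal{L}R_2 = x^2$, and that they are strictly increasing on $[0,\infty)$ so that the collapse to a $\max$ is valid; this reduces to matching coefficients in the power series defining ${}_2F_2$. Second, one must justify applicability of the optional stopping theorem, which requires $\mathbb{E}[\tau^*] < \infty$ and suitable integrability of $R_j(\tilde O_{\cdot\wedge \tau^*})$; these follow because $v(\beta)$ is finite and the one-dimensional OU first-passage time out of a bounded interval has exponential moments, so that the standard localization argument closes the proof.
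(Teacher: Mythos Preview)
Your proposal is correct and follows essentially the same approach as the paper: represent the error as a zero-mean OU process, split into the two cases $|O_{Y_i}|\gtrless v(\beta)$, apply Dynkin's formula with the functions $R_1,R_2$ solving $\mathcal{L}R_1=1$, $\mathcal{L}R_2=x^2$, and handle the residual integral over $[\tau^*,\tau^*+Y_{i+1}]$ via the explicit OU second-moment formula and the independence of $Y_{i+1}$. The only organizational difference is that you apply Dynkin directly between the random times $Y_i$ and $\tau^*$ via the strong Markov property, whereas the paper applies it from time $0$ to each of $Y_i$ and $\tau^*$ separately and then subtracts; both routes yield the same $\max\{R_j(v(\beta))-R_j(O_{Y_i}),0\}$ expressions.
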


\ifreport
\begin{proof}
See Appendix \ref{eq_expectation_1}.
\end{proof}
\else
{}
\fi
%\cite{Eq. 16.2.1] {2007247}}
In \eqref{eq_R_1} and \eqref{eq_R_2}, we have used the generalized hypergeometric function,  which is defined by \cite[Eq. 16.2.1]{200720017}
\begin{align}
& {}{_pF_q} (a_1, a_2, \cdots, a_p; b_1, b_2, \cdots b_q; z)\nonumber\\
=& {\sum_{n=0}^{\infty}} {\frac{{(a_1)_n} {(a_2)_n} \cdots{(a_p)_n}}{{(b_1)_n}{(b_2)_n} \cdots {(b_p)_n}}}{\frac{z^n}{n!}},
\end{align}
where
\begin{align}
& (a)_{0}=1,\\
& (a)_{n}=a(a+1)(a+2) {\cdots} (a+n-1),~ n \geq 1.
\end{align} 
%Because the ${Y_i}$'s are \emph{i.i.d.}, the expectations in Lemma \ref{eq_expectation} are functions of $\beta$ and are irrelevant of $i$. One can improve the accuracy of the solution in Algorithm \ref{alg1} by (i) reducing the tolerance $\epsilon$ and (ii) increasing the number of Monte Carlo realizations for computing the expectations. Such a low-complexity solution for solving a constrained continuous-time MDP with a continuous state space is rare.
%{\blue Let consider, $f_1(\beta)={\mathbb{E}\left[\int_{D_i(\beta)}^{D_{i+1}(\beta)}\! (X_t-\hat X_t)^2dt\right]}$ and $f_2(\beta)={\mathbb{E}[D_{i+1}(\beta)\!-\!D_i(\beta)]}$. Therefore, \eqref{thm1_eq22} can be expressed as
%\begin{align} \label{two_root_1}
%f(\beta) = {f_1(\beta) - {\beta} f_2(\beta)} = 0.
%\end{align} 
%The following lemma explains the overall scenerio of the function $f(\beta)$.
Using Lemma 1, the expectations in \eqref{thm1_eq22} can be evaluated by Monte Carlo simulations of scalar random variables $O_{Y_i}$ and $Y_i$, which is much simpler than directly simulating the entire random process $\{O_t, t\geq 0\}$. 

For notational simplicity, we rewrite \eqref{thm1_eq22} as 
\begin{align} \label{two_root_1}
f(\beta) = {f_1(\beta) - {\beta} f_2(\beta)} = 0,
\end{align} 
where $f_1(\beta)={\mathbb{E}\left[\int_{D_i(\beta)}^{D_{i+1}(\beta)}\! (X_t-\hat X_t)^2dt\right]}$ and $f_2(\beta)={\mathbb{E}[D_{i+1}(\beta)\!-\!D_i(\beta)]}$.
The function $f(\beta)$ has several nice properties, which are asserted in the following lemma and illustrated in Fig. \ref{f_one_root}.

\begin{lemma}\label{beta_unique}
The function $f(\beta)$ has the following properties:

(i) $f(\beta)$ is concave, continuous, and strictly decreasing in $\beta$,

(ii) $f(\mathsf{mse}_{Y_i}) > 0$ and $\displaystyle \lim_{{{\beta} {\to} {\mathsf{mse}^{-}_{\infty}}}} f(\beta) = -{\infty}$.

%(iii) At ${\beta} {\to} {\mathsf{mse}_{\infty}}^{-}$, $f(\beta)$ is negative.
\end{lemma}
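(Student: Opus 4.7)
My plan is to use the closed-form expressions in Lemma~\ref{eq_expectation} together with the envelope theorem applied to the one-parameter family $\beta \mapsto v(\beta)$. The starting observation is that \eqref{eq_threshold} can be recognized as the first-order optimality condition $f_1'(v) = \beta\, f_2'(v)$ for the scalar minimization of $v \mapsto f_1(v) - \beta f_2(v)$ (this is exactly how $v(\beta)$ will be produced by the free-boundary argument of Theorem~\ref{thm1}), so the envelope theorem gives
\[
f'(\beta) = -f_2(v(\beta))
\]
at every $\beta$ where $v(\beta)$ is differentiable. This identity will drive both part~(i) and the asymptotic analysis in part~(ii).

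For part~(i), strict monotonicity is immediate from Lemma~\ref{eq_expectation}: $f_2(v(\beta)) \geq \mathbb{E}[Y_i] > 0$, so $f'(\beta) < 0$ on $[\mathsf{mse}_{Y_i}, \mathsf{mse}_{\infty})$. For concavity I would observe that $v(\beta)$ is strictly increasing (since $G$ is strictly increasing on $[0,\infty)$ and $(\mathsf{mse}_{\infty}-\mathsf{mse}_{Y_i})/(\mathsf{mse}_{\infty}-\beta)$ is strictly increasing in $\beta$), while Lemma~\ref{eq_expectation} together with the monotonicity of $R_1$ in $|x|$ shows that $f_2$ is strictly increasing in $v$. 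Hence $\beta \mapsto f_2(v(\beta))$ is strictly increasing, so $f'(\beta)$ is strictly decreasing and $f$ is concave.

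For part~(ii), the two endpoints are treated separately. At $\beta = \mathsf{mse}_{Y_i}$ the identity $G(0)=1$ combined with \eqref{eq_threshold} gives $v(\mathsf{mse}_{Y_i}) = 0$, and so the threshold policy reduces to the zero-wait policy $S_{i+1}=D_i$. Using \eqref{eq_process}, the squared error on $[D_i, D_{i+1})$ has conditional mean $\mathsf{mse}_{\infty}(1-e^{-2\theta(t-S_i)})$, which, together with the identity $\mathsf{mse}_{\infty}\,\mathbb{E}[e^{-2\theta Y_i}] = \mathsf{mse}_{\infty}-\mathsf{mse}_{Y_i}$, yields the closed form
\[
f(\mathsf{mse}_{Y_i}) = (\mathsf{mse}_{\infty}-\mathsf{mse}_{Y_i})\bigl(\mathbb{E}[Y_i] - \gamma\bigr).
\]
The elementary inequality $a + e^{-a} - 1 > 0$ for $a>0$ gives $\mathbb{E}[Y_i] > \gamma$ whenever $\mathbb{E}[Y_i] > 0$, so $f(\mathsf{mse}_{Y_i}) > 0$.

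The main obstacle is the second assertion $f(\beta) \to -\infty$ as $\beta \to \mathsf{mse}_{\infty}^{-}$, since both $\mathsf{mse}_{\infty}-\beta \to 0$ and $v(\beta)\to\infty$ and the leading terms of $f_1(\beta)$ and $\beta f_2(\beta)$ threaten to cancel. My plan is to substitute the formulas of Lemma~\ref{eq_expectation} and exploit the algebraic identity $R_2(v) = \mathsf{mse}_{\infty}\,R_1(v) - v^2/(2\theta)$, which is immediate from \eqref{eq_R_1}--\eqref{eq_R_2}. After simplification (the maxima inside Lemma~\ref{eq_expectation} are attained by the $v$-terms asymptotically, with a negligible correction since $O_{Y_i}$ is Gaussian-tailed while $R_1, R_2$ grow only exponentially in $v^2$), one obtains
\[
f(\beta) = (\mathsf{mse}_{\infty}-\beta)\,R_1(v(\beta)) - \frac{v(\beta)^2}{2\theta}\,\mathbb{E}[e^{-2\theta Y_i}] + O(1).
\]
The crucial asymptotic input is that solving $\tfrac{\sigma^2}{2}u''(x) - \theta x\,u'(x) = -1$ on $(-v,v)$ with $u(\pm v)=0$ shows $R_1(v) = \frac{2}{\theta}\int_0^{y}\eta\,G(\eta)\,d\eta$ with $y = v\sqrt{\theta}/\sigma$, and hence $R_1(v) \sim G(y)/\theta$ as $y \to \infty$. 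Combined with $\mathsf{mse}_{\infty}-\beta = (\mathsf{mse}_{\infty}-\mathsf{mse}_{Y_i})/G(y)$ from \eqref{eq_threshold}, the product $(\mathsf{mse}_{\infty}-\beta)\,R_1(v(\beta))$ converges to the \emph{finite} constant $(\mathsf{mse}_{\infty}-\mathsf{mse}_{Y_i})/\theta$, while $-v^2\,\mathbb{E}[e^{-2\theta Y_i}]/(2\theta)\to -\infty$ because $\mathbb{E}[e^{-2\theta Y_i}] > 0$. Together these force $f(\beta) \to -\infty$ and complete part~(ii).
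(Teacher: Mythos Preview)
Your proof is correct, and the endpoint analyses in part~(ii) match the paper's closely---you even handle the potential cancellation at $\beta\to\mathsf{mse}_\infty^{-}$ more carefully than the paper does, by tracking the fate of $(\mathsf{mse}_\infty-\beta)\,R_1(v(\beta))$ explicitly rather than simply asserting which terms dominate.

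The one genuine difference is in part~(i). The paper does not differentiate at all: it observes that once the threshold policy with threshold $v(\beta)$ is known to achieve the infimum in the per-sample problem, $f(\beta)$ coincides (up to additive constants) with $\inf_{\pi\in\Pi_1} L(\pi,\beta)$, and $L(\pi,\beta)$ is affine and strictly decreasing in $\beta$ for each fixed $\pi$. Concavity and strict monotonicity of $f$ then follow in one line from the standard fact that a pointwise infimum of affine decreasing functions is concave and decreasing. Your route instead recognizes \eqref{eq_threshold} as the first-order condition $\tilde f_1'(v)=\beta\,\tilde f_2'(v)$ and uses the resulting identity $f'(\beta)=-f_2(\beta)$ to reduce concavity to the monotonicity of $\beta\mapsto f_2(\beta)$, which you verify from Lemma~\ref{eq_expectation}. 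This buys you the explicit derivative (useful, e.g., for the Newton iteration of Algorithm~\ref{alg3}) and is in a sense more self-contained: it only needs the first-order condition, which can be checked directly from the closed forms, whereas the paper's argument logically rests on the full optimality statement of Theorem~\ref{thm_optimal_stopping} to identify $f(\beta)$ with the Lagrangian infimum. The paper's version is shorter; yours is more constructive.
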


\begin{proof}
See Appendix \ref{beta_unique_proof}.
\end{proof}

The uniqueness of the root of $f(\beta)$ follows immediately from Lemma \ref{beta_unique}.

%An immediate consequence of Lemma \ref{beta_unique} is 
%
%\begin{corollary}
%The equation \eqref{thm1_eq22} has a unique root in \emph{$[{\mathsf{mse}_{Y_i}}, \mathsf{mse}_{\infty})$}.
%\end{corollary}

Because $f(\beta)$ is decreasing and has a unique root, one can use a bisection search method to solve \eqref{thm1_eq22}, which is illustrated in Algorithm \ref{alg1}. The bisection search method has a globally linear convergence speed.

\begin{algorithm}[t!]
\caption{Bisection search method for solving \eqref{thm1_eq22}} \label{alg1}
\begin{algorithmic}[]
\STATE \textbf{given} $l=\mathsf{mse}_{Y_i}$, $u=\mathsf{mse}_{\infty}$, tolerance $\epsilon>0$.
\REPEAT
\STATE $\beta:= (l+u)/2$.
\STATE $o :={f_1(\beta) - {\beta} f_2(\beta)}.$
\STATE \textbf{if} $o\geq 0$, $l:=\beta$; \textbf{else}, $u:=\beta$.
\UNTIL $u-l\leq \epsilon$.
\STATE \textbf{return} $\beta$.
\end{algorithmic}
\end{algorithm}

\begin{algorithm}[t!]
\caption{Newton's method for solving \eqref{thm1_eq22}} \label{alg3}
\begin{algorithmic}[]
\STATE \textbf{given} tolerance $\epsilon>0$.
\STATE Pick initial value ${\beta}_0 {\in} [\mathsf{mse}_{\text{opt}}, \mathsf{mse}_{\infty})$.
\REPEAT
%\STATE $\beta_{1}:= {\beta}_{0} - \frac{f({\beta}_0)}{f'({\beta}_0)}$.
\STATE $\beta_{k+1}:= {\beta}_{k} - \frac{f({\beta}_k)}{f'({\beta}_k)}$.
\UNTIL $|\frac{f({\beta_k})}{f'({\beta_k})}| \leq \epsilon$.
\STATE \textbf{return} $\beta_{k+1}$.
\end{algorithmic}
\end{algorithm}

\begin{algorithm}[t!]
\caption{Fixed-point iterations for solving \eqref{thm1_eq22}} \label{alg5}
\begin{algorithmic}[]
\STATE \textbf{given} tolerance $\epsilon>0$.
\STATE Pick initial value $\beta_0 {\in} [{\mathsf{mse}}_{\text{opt}}, \mathsf{mse}_{\infty})$.
\REPEAT
%\STATE ${\beta}_1 := {\frac{f_1 (\beta_0)}{f_2 (\beta_0)}}$.
\STATE ${\beta}_{k+1} := {\frac{f_1 (\beta_k)}{f_2 (\beta_k)}}$.
\UNTIL $| {\beta}_{k+1} - {\frac{f_1 (\beta_k)}{f_2 (\beta_k)}} | \leq \epsilon$.
\STATE \textbf{return} $\beta_{k+1}$.
\end{algorithmic}
\end{algorithm}

To achieve an even faster convergence speed, we can use Newton's method \cite{Mathews2004Numerical}
\begin{align} \label{newton_1}
{\beta}_{k+1} = {\beta}_{k} - \frac{f({\beta}_{k})}{f'({\beta}_{k})}
\end{align}
to solve \eqref{thm1_eq22}, as shown in Algorithm \ref{alg3}. We suggest choosing the initial value $\beta_{0}$ of Newton's method from the set $[\mathsf{{mse}_{\text{opt}}}, \mathsf{mse}_{\infty})$, i.e., $\beta_0$ is larger than the root $\mathsf{mse}_{\text{opt}}$. Such an initial value $\beta_{0}$ can be found by taking a few bisection search iterations, or by using the ${\mathsf{mse}}$ of a sub-optimal sampling policy \cite{CCWang2020}. Because $f(\beta)$ is a concave function, the choice of initial value $\beta_{0}\in[\mathsf{{mse}_{\text{opt}}}, \mathsf{mse}_{\infty})$ ensures that $\beta_{k}$ is a decreasing sequence converging to $\mathsf{mse}_{\text{opt}}$ \cite{spivak08}. 
%Since Newton's method is a fixed-point iterative algorithm, it  has a global linear convergence speed. In addition, 
Moreover, because $R_1(\cdot)$ and $R_2(\cdot)$ are twice continuously differentiable, the function $f(\beta)$ is twice continuously differentiable. Therefore, Newton's method is known to have a locally quadratic convergence speed in the neighborhood of the root $\mathsf{mse}_{\text{opt}}$ \cite[Chapter 2]{Mathews2004Numerical}.

\begin{figure}%[!t]
	\centering
	\includegraphics[width=0.5\textwidth]{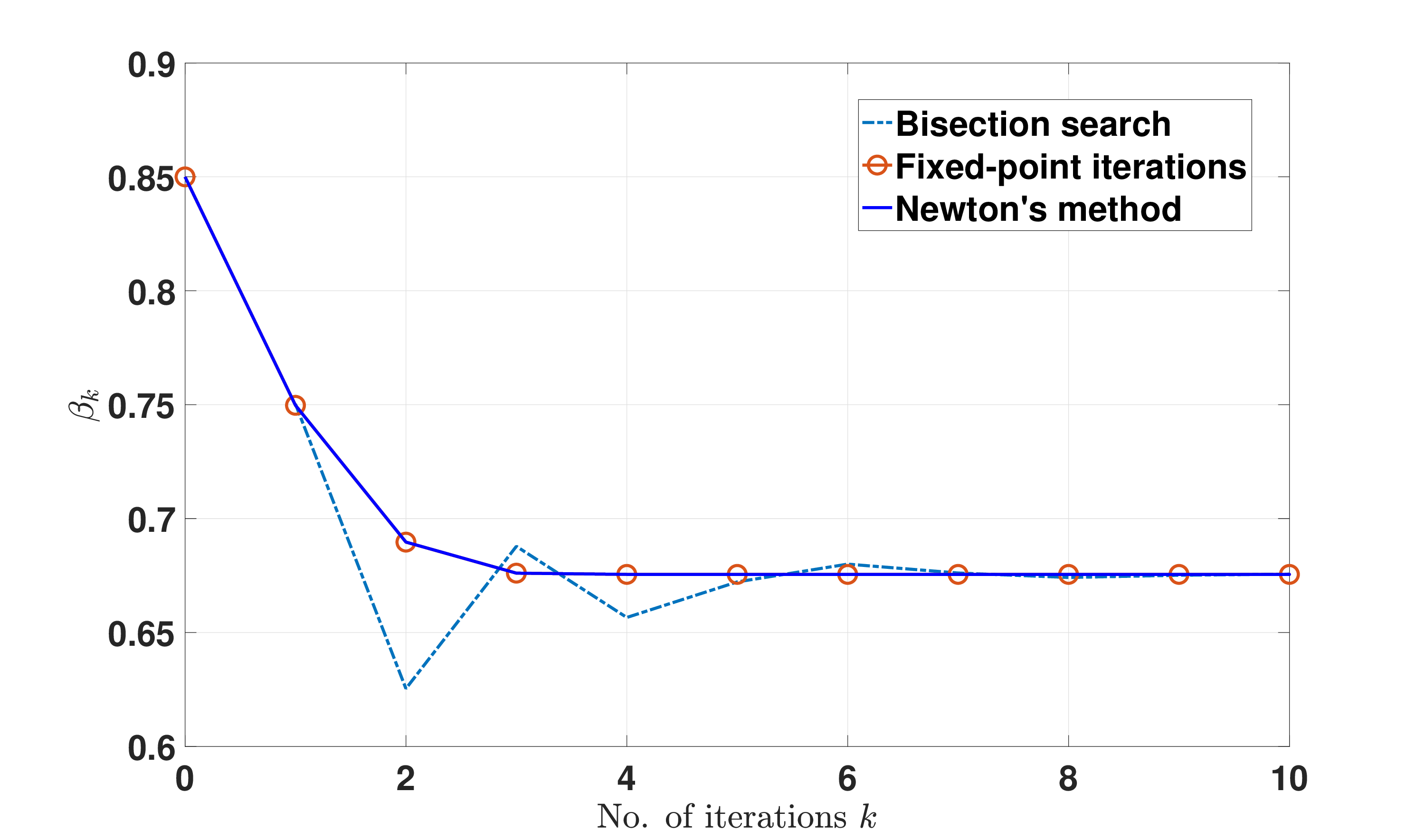}   
	\caption{Convergence of three algorithms for solving \eqref{thm1_eq22}, where the service times are \emph{i.i.d.} exponential with mean $\mathbb{E}[Y_i] = 1$, 
	the parameters of the OU process are $\sigma=1$ and $\theta=0.5$.}
	\label{figure_4}
	\vspace{-3mm}
\end{figure} 

\begin{figure}
	\centering
	\includegraphics[width=0.5\textwidth]{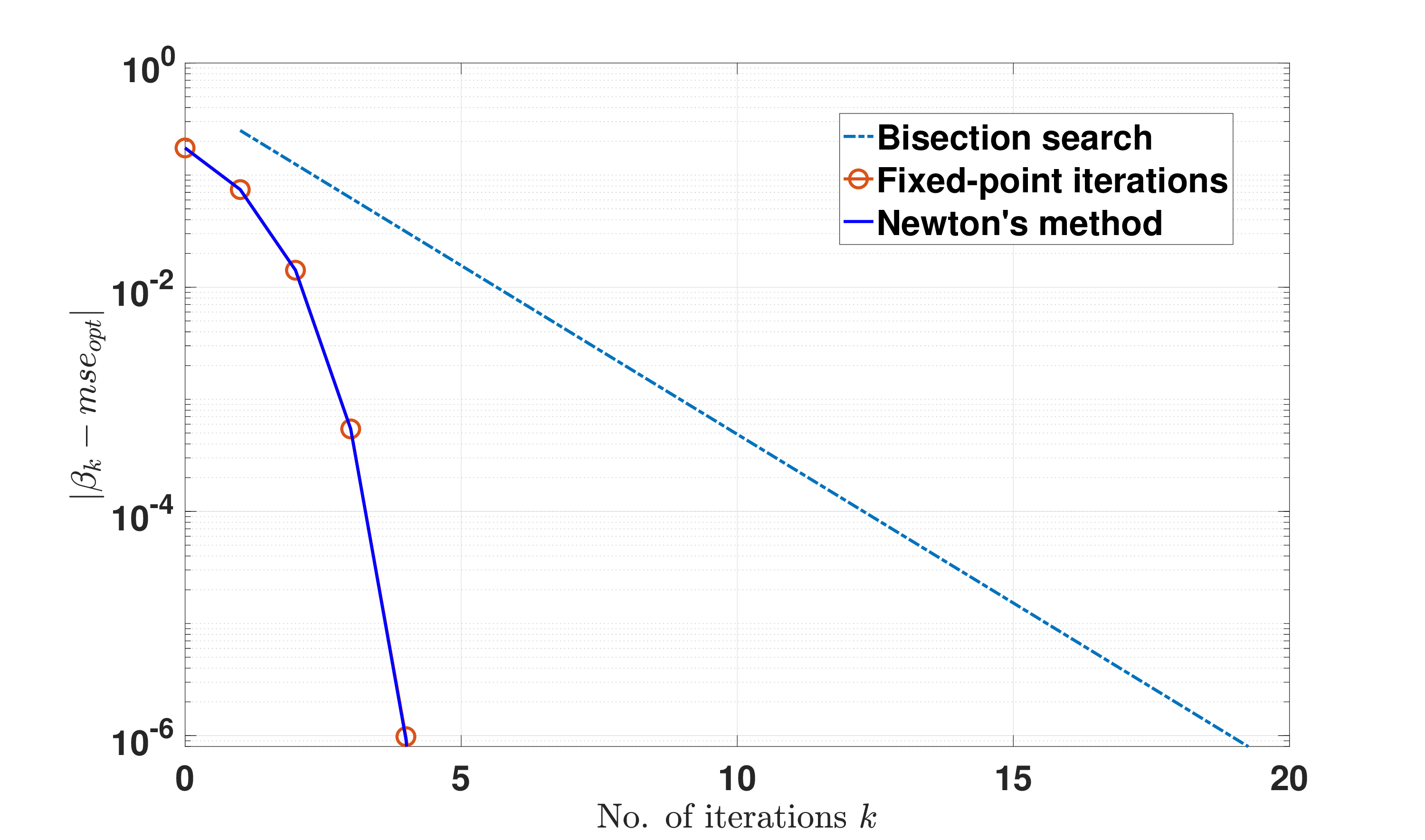}   
	\caption{Convergence of three algorithms for solving \eqref{thm1_eq22}, where the service times are \emph{i.i.d.} exponential with mean $\mathbb{E}[Y_i] = 1$, 
	the parameters of the OU process are $\sigma=1$ and $\theta=0.5$. For bisection search, we plot the difference $|u-l|$ between the upper bound $u$ and lower bound $l$, which is an upper bound of $| \beta_k - \mathsf{mse}_\text{opt} |$.}
	\label{figure5}
	\vspace{-3mm}
\end{figure} 

Newton's method requires to compute the gradient $f'(\beta_k)$, which can be solved by a finite-difference approximation, as in the secant method \cite{Mathews2004Numerical}. In the sequel, we introduce another approximation approach of Newton's method, which is of independent interest. In Theorem \ref{thm1}, we have shown that
\begin{align} \label{newton_11}
{\mathsf{mse}_{\text{opt}}} = \displaystyle \argmax_{{\beta} \in [\mathsf{{mse}_{\it{Y_i}}}, \mathsf{{mse}_{\infty}})} \frac{f_1(\beta)}{f_2(\beta)}.
\end{align}
Hence, the gradient of ${f_1(\beta)}/{f_2(\beta)}$ is equal to zero at the optimal solution $\beta = \mathsf{mse}_{\text{opt}}$, which leads to 
\begin{align} \label{newton_3}
f_1'(\mathsf{mse}_{\text{opt}}) f_2(\mathsf{mse}_{\text{opt}}) - f_1(\mathsf{mse}_{\text{opt}}) f_2'(\mathsf{mse}_{\text{opt}}) =0.
\end{align}
Therefore,
\begin{align} \label{opt_diff}
\mathsf{mse}_{\text{opt}} = \frac{f_1(\mathsf{mse}_{\text{opt}})}{f_2(\mathsf{mse}_{\text{opt}})} = \frac{f_1'(\mathsf{mse}_{\text{opt}})}{f_2'(\mathsf{mse}_{\text{opt}})}.
\end{align}
Because $f_1(\beta)$ and $f_2(\beta)$ are smooth functions, when $\beta_k$ is in the neighborhood of $\mathsf{mse}_{\text{opt}}$, \eqref{opt_diff} implies that $f_1'(\beta_k) - {\beta_k} f_2'(\beta_k) \approx {f_1'(\mathsf{mse}_{\text{opt}})} - {\mathsf{mse}_{\text{opt}}} {f_2'(\mathsf{mse}_{\text{opt}})} = 0$. Substituting this into \eqref{newton_1}, yields
\begin{align} \label{newton_2}
{\beta}_{k+1} = & {\beta}_{k} - \frac{f_1({\beta}_{k}) - {{\beta}_k} f_2({\beta}_k)}{f'_1({\beta}_{k}) - f_2(\beta_k) - {\beta_k} f'_2(\beta_k)} \nonumber\\
{\approx} & {\beta}_{k} - \frac{f_1({\beta}_{k}) - {{\beta}_k} f_2({\beta}_k)}{ - f_2(\beta_k)} \nonumber\\
= & \frac{f_1({\beta}_k)}{f_2({\beta}_k)},
\end{align}
which is a fixed-point iterative algorithm (see Algorithm \ref{alg5}) that was recently proposed in \cite{CCWang2020}. Similar to Newton's method, the fixed-point updates in \eqref{newton_2} converge to $\mathsf{mse}_{\text{opt}}$ if the initial value ${\beta_0} {\in} [\mathsf{mse}_{\text{opt}}, \mathsf{mse}_{\infty})$. Moreover, \eqref{newton_2} has a locally quadratic convergence speed, see 
\cite{CCWang2020} for a proof of this result. A numerical comparison of these three algorithms is shown in Fig.  \ref{figure_4} and Fig. \ref{figure5}. One can observe that the fixed-point updates and Newton's method converge faster than bisection search.

\begin{figure}%[!t]
	\centering
	\includegraphics[width=0.5\textwidth]{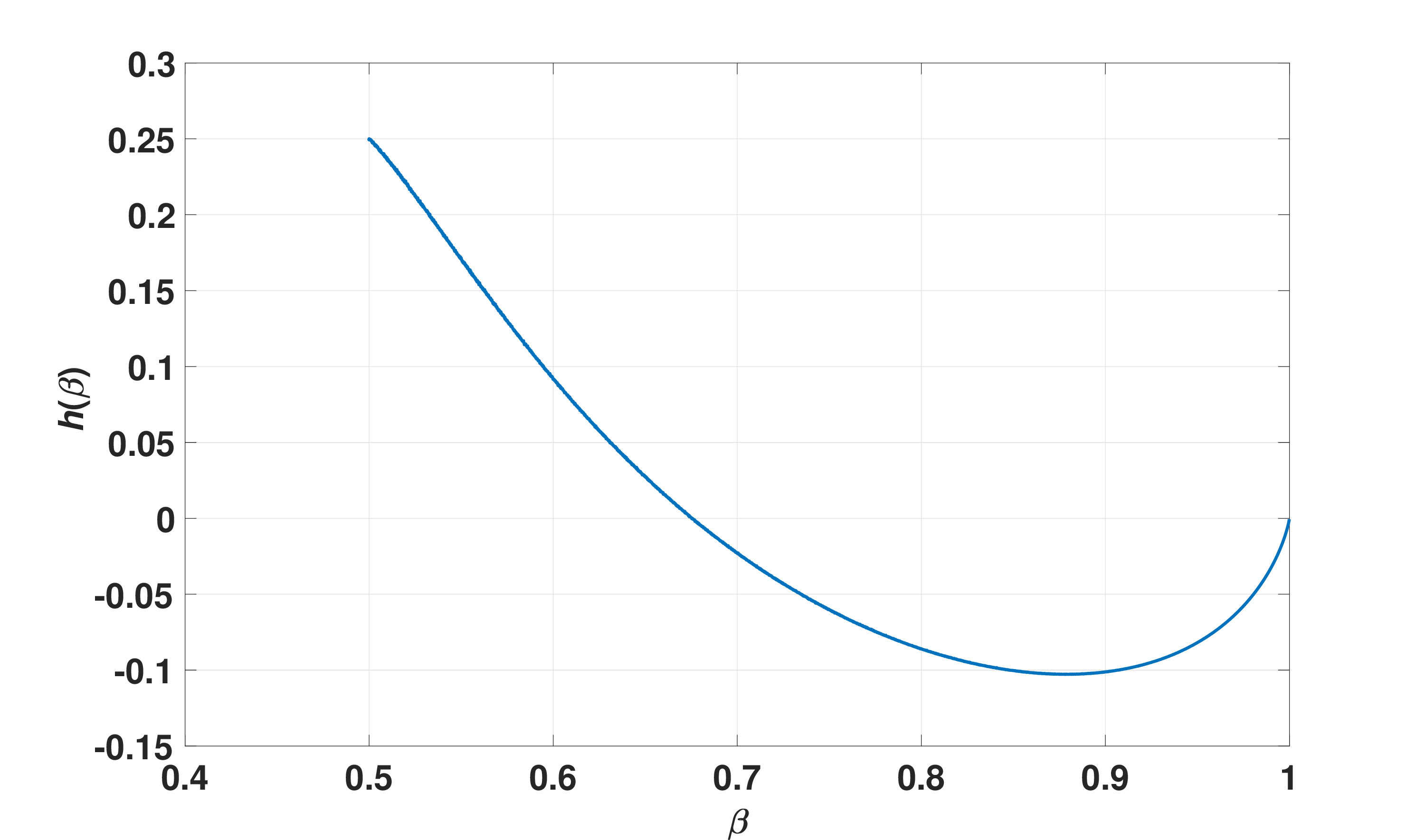}   
	\caption{The function $h(\beta)$ in \eqref{newton_4} for \emph{i.i.d.} exponential service time with $\mathbb{E}[Y_i] = 1$, where the parameters of the OU process are $\sigma=1$ and $\theta=0.5$. For these parameters, $\mathsf{mse}_{Y_i} = 0.5$ and $\mathsf{mse}_{\infty}=1$.}
	\label{f_two_root}
	\vspace{-3mm}
\end{figure} 

\begin{figure}%[!t]
	\centering
	\includegraphics[width=0.5\textwidth]{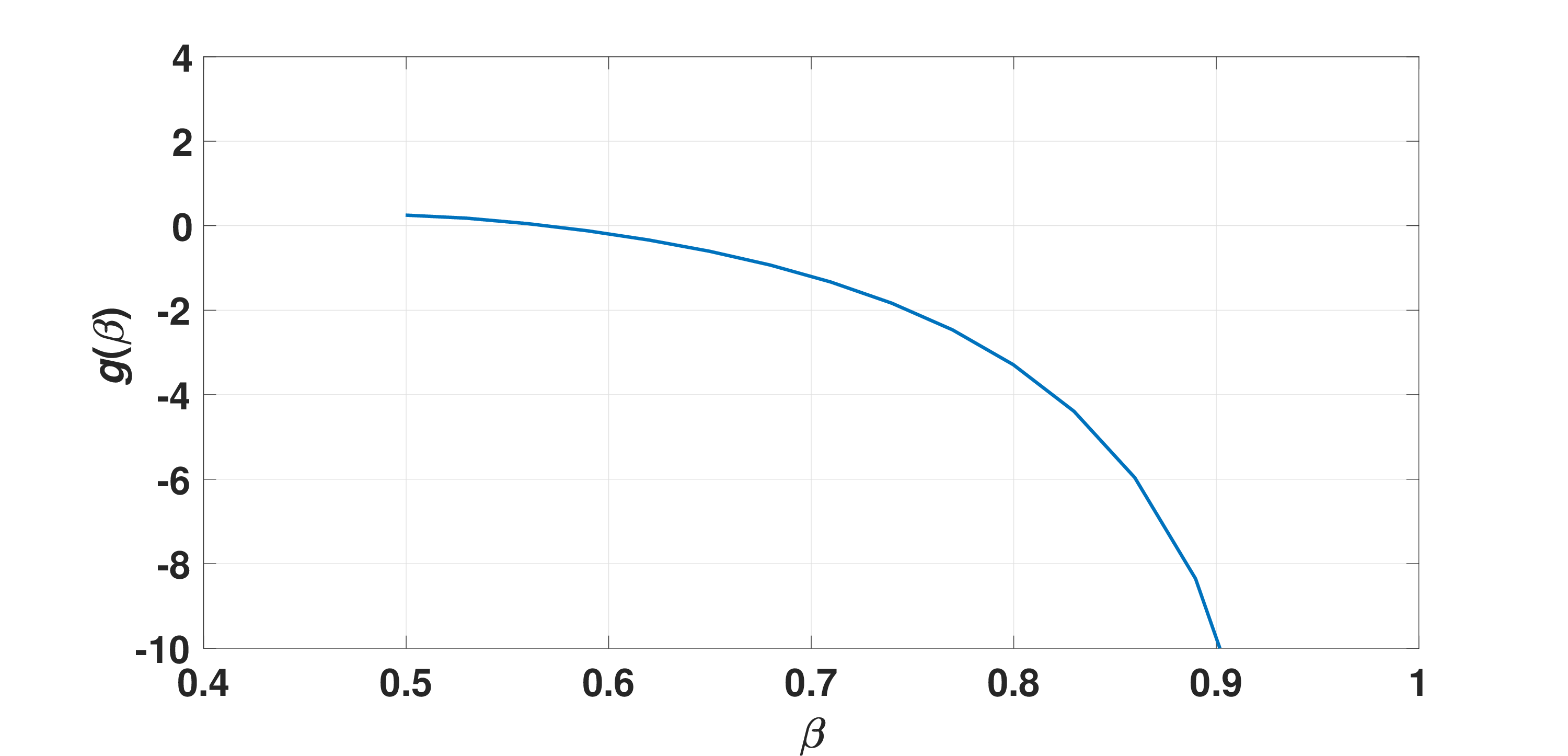}   
	\caption{The function $g(\beta)$ in \eqref{g} for \emph{i.i.d.} exponential service time with $\mathbb{E}[Y_i] = 1$ and $f_{\max} = 0.8$, where the parameters of the OU process are $\sigma=1$ and $\theta=0.5$. For these parameters, $\mathsf{mse}_{Y_i} = 0.5$ and $\mathsf{mse}_{\infty}=1$.}
	\label{g_one_root}
	\vspace{-3mm}
\end{figure}

We note that although \eqref{thm1_eq22}, and equivalently \eqref{two_root_1}, has a unique root $\mathsf{mse}_{\text{opt}}$, the fixed-point equation 
\begin{align} \label{newton_4}
h(\beta) = \frac{f_1(\beta)}{f_2(\beta)} - {\beta} = \frac{f_1(\beta)-\beta f_2(\beta)}{f_2(\beta)}  = 0
\end{align}
has two roots $\mathsf{mse}_{\text{opt}}$ and $\mathsf{mse}_{\infty}$. See  Fig. \ref{f_two_root} for an illustration of the two roots of $h(\beta)$. As shown in Appendix \ref{app_thm_strong_duality}, the correct root for computing the optimal threshold is $\mathsf{mse}_\text{opt}$. Interestingly, Algorithms \ref{alg1}-\ref{alg5} converge to the desired root $\mathsf{mse}_{\text{opt}}$, instead of $\mathsf{mse}_{\infty}$. Finally, we remark that these three algorithms can be used to find the optimal threshold in the age-optimal sampling problem studied in, e.g., \cite{SunSPAWC2018,SunNonlinear2019}.

\subsection{Signal-aware Sampling with Rate Constraint}

When the sampling rate constraint \eqref{eq_constraint} is taken into consideration, a solution to \eqref{eq_DPExpected} is expressed in the following theorem:

\begin{theorem} (Sampling with Rate Constraint).\label{thm_new}
If the $Y_i$'s are i.i.d. with $0<\mathbb{E}[Y_i] < \infty$, then \eqref{eq_opt_solution}-\eqref{thm1_eq22} is an optimal solution to \eqref{eq_DPExpected}.
The value of $\beta\geq0$ is determined in two cases: $\beta$ is the unique root  of \eqref{thm1_eq22}
%$G^{-1}(\cdot)$ is the inverse function of $G(\cdot)$  in \eqref{eq_gamma}, and $\beta\geq0$ is the root  of 
%\begin{align}\label{thm1_eq22_new}
%{\blue {\beta} {\mathbb{E}[D_{i+1}(\beta)\!-\!D_i(\beta)]} = {\mathbb{E}\left[\int_{D_i(\beta)}^{D_{i+1}(\beta)}\! (X_t-\hat X_t)^2dt\right]}},
%\end{align} 
if the root of \eqref{thm1_eq22} satisfies
\begin{align}\label{thm1_eq24}
\mathbb{E}[D_{i+1}(\beta)-D_i(\beta)] > {1}/{f_{\max}};
\end{align}  
otherwise, $\beta$ is the unique root  of 
\begin{align}\label{thm1_eq25}
\mathbb{E}[D_{i+1}(\beta)-D_i(\beta)] = {1}/{f_{\max}}.
\end{align}  
The optimal objective value to \eqref{eq_DPExpected} is  given by \emph{
\begin{align}\label{coro2_eq23}
{\mathsf{mse}}_{\text{opt}} = \frac{\mathbb{E}\left[\int_{D_i(\beta)}^{D_{i+1}(\beta)}\! (X_t-\hat X_t)^2dt\right]}{\mathbb{E}[D_{i+1}(\beta)\!-\!D_i(\beta)]}.
\end{align}} 
%The optimal objective value to \eqref{eq_DPExpected} {\blue for both constrained and unconstrained case} is then given by \emph{
%\begin{align}\label{thm1_eq23}
%{\mathsf{mse}}_{\text{opt}} = \frac{\mathbb{E}\left[\int_{D_i(\beta)}^{D_{i+1}(\beta)}\! (X_t-\hat X_t)^2dt\right]}{\mathbb{E}[D_{i+1}(\beta)\!-\!D_i(\beta)]}.
%\end{align} }
\end{theorem}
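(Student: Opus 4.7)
The plan is to reduce Theorem~\ref{thm_new} to Theorem~\ref{thm1} via Lagrangian duality for the sampling-rate constraint \eqref{eq_constraint}. Introducing a dual variable $\lambda \geq 0$ for \eqref{eq_constraint}, I would form the Lagrangian
\begin{align*}
L(\pi,\lambda) = \mathsf{mse}(\pi) + \lambda\bigg(\frac{1}{f_{\max}} - \liminf_{n\to\infty}\frac{1}{n}\mathbb{E}\bigg[\sum_{i=1}^n (S_{i+1}-S_i)\bigg]\bigg),
\end{align*}
and establish strong duality $\mathsf{mse}_{\text{opt}} = \sup_{\lambda\geq 0}\inf_{\pi\in\Pi} L(\pi,\lambda)$. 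Strong duality should follow from convexity of the achievable $(\mathsf{mse},\mathbb{E}[T_i])$ region, which is produced by time-sharing among regenerative policies, paralleling the duality machinery used to prove Theorem~\ref{thm1}.

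For each fixed $\lambda\geq 0$, I would next solve the inner problem $\inf_{\pi} L(\pi,\lambda)$. View the constrained problem equivalently as finding the smallest $\beta$ for which there exists $\pi$ with both $f_1(\pi)-\beta f_2(\pi)\leq 0$ and $f_2(\pi)\geq 1/f_{\max}$, where $f_1(\pi)=\mathbb{E}\left[\int_{D_i}^{D_{i+1}}(X_t-\hat X_t)^2 dt\right]$ and $f_2(\pi)=\mathbb{E}[D_{i+1}-D_i]$, using that every reasonable candidate has $S_{i+1}\geq D_i$ and hence $\mathbb{E}[S_{i+1}-S_i]=\mathbb{E}[D_{i+1}-D_i]$. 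Applying the Lagrangian to this parametric feasibility problem produces the inner problem $\inf_\pi f_1(\pi)-(\beta+\lambda)f_2(\pi)+\lambda/f_{\max}$, which is exactly of the form handled by Theorem~\ref{thm1}: the free-boundary analysis there shows the minimizer is the threshold policy \eqref{eq_opt_solution}--\eqref{eq_threshold} in which the threshold parameter absorbs the multiplier. Combining KKT stationarity with the per-cycle optimality relation then forces $\beta$ to satisfy the Theorem~\ref{thm1}-style equation \eqref{thm1_eq22}, while complementary slackness on \eqref{eq_constraint} selects one of the two cases.

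Two cases then fall out. \emph{Case 1 (slack constraint):} If the root $\beta^{(0)}$ of \eqref{thm1_eq22} already satisfies \eqref{thm1_eq24}, then the unconstrained minimizer of Theorem~\ref{thm1} is feasible for \eqref{eq_constraint} and $\lambda=0$ is dual-optimal, so $\beta=\beta^{(0)}$. \emph{Case 2 (active constraint):} If \eqref{thm1_eq24} fails at $\beta^{(0)}$, complementary slackness forces $\lambda>0$ and the rate constraint to bind, yielding \eqref{thm1_eq25}. Uniqueness of the root there follows from the monotonicity of $v(\beta)$ in $\beta$ (inherited from $G^{-1}$ being monotone) together with the monotonicity of $\mathbb{E}[D_{i+1}(\beta)-D_i(\beta)]$ in $\beta$ (a larger threshold produces longer cycles), and the optimal objective value is then read off from \eqref{coro2_eq23}. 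The principal obstacle, as in Theorem~\ref{thm1}, is rigorously establishing strong duality for this constrained continuous-time MDP with uncountable state space; once that is in hand, the remainder is a KKT case analysis that piggybacks on the explicit threshold structure established in Theorem~\ref{thm1}.
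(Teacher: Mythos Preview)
Your proposal is essentially correct and tracks the paper's proof closely: Lagrangian duality for the rate constraint, the free-boundary threshold structure for the inner problem with parameter $\beta=\mathsf{mse}_{\text{opt}}+\lambda$, and a complementary-slackness case split yielding \eqref{thm1_eq22} versus \eqref{thm1_eq25}. Two minor points of divergence are worth noting. First, the paper's logic runs the other way: it proves Theorem~\ref{thm_new} directly (the four steps in Section~\ref{sec_proof} already carry the constraint through Dinkelbach plus Lagrangian) and obtains Theorem~\ref{thm1} as the special case $f_{\max}=\infty$, so your phrasing ``reduce to Theorem~\ref{thm1}'' is slightly backward---what you are really invoking is the free-boundary analysis that underlies both. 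Second, for strong duality the paper does not argue convexity/time-sharing of the achievable region; instead it uses the geometric-multiplier criterion \cite[Prop.~6.2.5]{Bertsekas2003}: it exhibits a candidate pair $(\pi^\star,\lambda^\star)$ satisfying primal feasibility, dual feasibility, Lagrangian stationarity, and complementary slackness, and then invokes \cite[Prop.~6.2.3(b)]{Bertsekas2003} to conclude the duality gap is zero. This constructive route sidesteps the need to verify that time-sharing between regenerative policies remains regenerative, which is the delicate point in your proposed convexity argument.
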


%\begin{algorithm}[t!]
%\caption{Bisection search method for solving \eqref{thm1_eq25}} \label{alg2}
%\begin{algorithmic}[]
%\STATE \textbf{given} $l=\mathsf{mse}_{Y_i}$, $u=\mathsf{mse}_{\infty}$, tolerance $\epsilon>0$.
%\REPEAT
%\STATE $\beta:= (l+u)/2$.
%\STATE $o :=\mathbb{E} [D_{i+1}(\beta)-D_i(\beta)]$.
%\STATE \textbf{if} $o\geq {1}/{f_{\max}}$, $u:=\beta$; \textbf{else}, $l:=\beta$.
%\UNTIL $u-l\leq \epsilon$.
%\STATE \textbf{return} $\beta$.
%\end{algorithmic}
%\end{algorithm}\!\!

The proof of Theorem \ref{thm_new} is explained in Section \ref{sec_proof}. One can see that Theorem \ref{thm1} is a special case of Theorem \ref{thm_new} when $f_{\max} = \infty$. 
%As shown in Appendix \ref{boundary_1}, Lemma \ref{boundary} also holds in the more general scenario in Theorem \ref{thm_new}.
\ifreport
\else
Due to space limitation, most proofs are relegated to our technical report \cite{Ornee2018}, unless specified otherwise.
\fi

In Theorem \ref{thm_new}, the calculation of $\beta$ falls into two cases:
In one case, $\beta$ can be computed by solving \eqref{thm1_eq22} via Algorithms \ref{alg1}-\ref{alg5}.
%the bisection search method {\blue similar to} Algorithm \ref{alg1}, {\blue via the Newton's method similar to Algorithm \ref{alg3} or via the Fixed-point iteration same as Algorithm \ref{alg5}}. 
For this case to occur, the sampling rate constraint \eqref{eq_constraint} needs to be inactive at the root of \eqref{thm1_eq22}. Because $D_i(\beta) = S_i(\beta) + Y_i$, we can obtain 
$\mathbb{E}[D_{i+1}(\beta) - D_i(\beta)] = \mathbb{E}[S_{i+1}(\beta) - S_i(\beta)]$ and hence \eqref{thm1_eq24} holds when the sampling rate constraint \eqref{eq_constraint} is inactive.

In the other case, $\beta$ is selected to satisfy the sampling rate constraint \eqref{eq_constraint} with equality, as required in \eqref{thm1_eq25}.
Before we solve \eqref{thm1_eq25}, let us first use $f_2(\beta)$ to express \eqref{thm1_eq25} as
\begin{align} \label{g}
g(\beta) = \frac{1}{f_{\max}} - f_2(\beta) = 0.
\end{align}
\begin{lemma}\label{beta_unique_2}
The function $g(\beta)$ has the following properties:

(i) $g(\beta)$ is continuous and strictly decreasing in $\beta$,

(ii)  $g(\mathsf{mse}_{Y_i}) \geq 0$ and $\displaystyle \lim_{{\beta} {\to} {\mathsf{mse}_{\infty}^{-}}} g(\beta) = -{\infty}$ if the root of \eqref{thm1_eq22} does not satisfy \eqref{thm1_eq24}.
\end{lemma}

\begin{proof}
See Appendix \ref{beta_unique_proof_2}.
\end{proof}
According to Lemma \ref{beta_unique_2}, \eqref{thm1_eq25} has a unique root in $[\mathsf{mse}_{Y_i}, \mathsf{mse}_{\infty})$, which is denoted as $\beta^*$. In addition, the numerical results in Fig. \ref{g_one_root} suggest that $g(\beta)$ should be concave, for which we do not have a proof. 

The root $\beta^*$ can be solved by using bisection search and Newton's method, which are explained in Algorithms \ref{alg2}-\ref{alg4}, respectively. 
Similar to the discussions in Section \ref{algo_sec}, the convergence of Algorithm \ref{alg2} is ensured by Lemma \ref{beta_unique_2}. Moreover, if 
$g(\beta)$ is concave and $\beta_0\in [\beta^*,\mathsf{mse}_{\infty})$, $\beta_{k}$ in Algorithm \ref{alg4} is a decreasing sequence converging to the root $\beta^*$ of \eqref{thm1_eq25} \cite{spivak08}.

%Next, we present two algorithms for computing the root of \eqref{thm1_eq25}. 

%which is implemented by using another bisection method in Algorithm \ref{alg2}. The upper and lower bounds for bisection search in Algorithm \ref{alg2} are chosen {\blue in the same way as in Algorithm \ref{alg1}. 
%{\blue In appendix \ref{beta_unique_proof}, it is proved that $\beta$ is unique and converge to the optimal root for the modification of $\mathsf{mse}_{\infty}$. 

%Because of the fraction term in \eqref{thm1_eq22} and \eqref{thm1_eq22_new}, $S_{i+1}(\beta)$ and $v(\beta)$ are quite sensitive in numerical error in $\beta$. To resolve this issue, a new positive tolerance $\delta$ is taken into account. The upper bound $\mathsf{mse}_{\infty}$ is then replaced by $\mathsf{mse}_{\infty} - \delta$. 

\begin{algorithm}[t!]
\caption{Bisection search method for solving \eqref{thm1_eq25}} \label{alg2}
\begin{algorithmic}[]
\STATE \textbf{given} $l=\mathsf{mse}_{Y_i}$, $u=\mathsf{mse}_{\infty}$, tolerance $\epsilon>0$.
\REPEAT
\STATE $\beta:= (l+u)/2$.
\STATE $o :=\mathbb{E} [D_{i+1}(\beta)-D_i(\beta)]$.
\STATE \textbf{if} $o\geq {1}/{f_{\max}}$, $u:=\beta$; \textbf{else}, $l:=\beta$.
\UNTIL $u-l\leq \epsilon$.
\STATE \textbf{return} $\beta$.
\end{algorithmic}
\end{algorithm}

\begin{algorithm}[t!]
\caption{Newton's method for solving \eqref{thm1_eq25}} \label{alg4}
\begin{algorithmic}[]
\STATE \textbf{given} tolerance $\epsilon>0$.
\STATE Pick initial value ${\beta}_0 {\in} [\beta^*, \mathsf{mse}_{\infty})$.
\REPEAT
%\STATE $\beta_{1}:= {\beta}_{0} - \frac{g({\beta}_0)}{g'({\beta}_0)}$.
\STATE $\beta_{k+1}:= {\beta}_{k} - \frac{g({\beta}_k)}{g'({\beta}_k)}$.
\UNTIL $|\frac{g({\beta_k})}{g'({\beta_k})}| \leq \epsilon$.
\STATE \textbf{return} $\beta_{k+1}$.
\end{algorithmic}
\end{algorithm}\!\!
ter

%An immediate consequence of Lemma \ref{beta_unique_2} is 

%\begin{corollary}
%The properties of lemma \ref{beta_unique_2} altogether prove that \eqref{thm1_eq25} has a unique root in ${\beta} {\in} [{\mathsf{mse}_{\text{opt}}}, \mathsf{mse}_{\infty})$.
%\end{corollary}

%This function $g(\beta)$ in Lemma \ref{beta_unique_2} also holds the properties of $f(\beta)$ explained in lemma \ref{beta_unique}. Another alternative method for finding the optimal root of \eqref{thm1_eq25} using Newton's method is provided in Algorithm \ref{alg4}
%and therefore, $g(\beta)$ has a unique root. This is confirmed from our simulation.
%The reason for the initial value of Algorithm \ref{alg4} lies in the open interval $(l, h)$ is that the derivative of $g({\beta})$ does not exist at $l$ and $h$. The initial value of $\beta$ should be chosen closer to the optimal value $\beta_{\text{opt}}$ for fast convergence.

\ignore{
One interesting observation is that the fixed point equation of \eqref{thm1_eq22_new} has two roots if the upper bound of bisection search algorithm is chosen $\mathsf{mse}_{\infty}$ where one root is optimal and another is at $\mathsf{mse}_{\infty}$.}

%If $f_{\max} =\infty$, because $\mathbb{E}[D_{i+1}(\beta)-D_i(\beta)] \geq \mathbb{E}[Y_i]>0$, \eqref{thm1_eq24} is always satisfied and only the first case can happen. By comparing \eqref{thm1_eq22} and \eqref{thm1_eq23}, it follows immediately that

%The calculation of $\beta$ falls into two cases:
%In one case, the sampling rate constraint \eqref{eq_constraint} is inactive such that \eqref{thm1_eq24} is satisfied at the optimal $\beta$,\footnote{Because $D_i(\beta) = S_i(\beta) + Y_i$, 
%$\mathbb{E}[D_{i+1}(\beta) - D_i(\beta)] = \mathbb{E}[S_{i+1}(\beta) - S_i(\beta)]$.}
% then $\beta$ can be computed by solving \eqref{thm1_eq22} via the bisection search method in Algorithm \ref{alg1}. 
%In the other case, $\beta$ is selected to satisfy the sampling rate constraint \eqref{eq_threshold} with equality, which is implemented by using another bisection method in Algorithm \ref{alg2}. The upper and lower bounds for bisection search in Algorithms \ref{alg1}-\ref{alg2} are determined by Lemma \ref{boundary}. 
%If $f_{\max} =\infty$, \eqref{thm1_eq24} is always satisfied and only the first case can happen. By comparing \eqref{thm1_eq22} and \eqref{thm1_eq23}, it is not hard to get

\subsection{Special Case: 
Sampling of the Wiener Process}
%In the special case of
In the limiting case that $\sigma = 1$ and ${\theta}\rightarrow0$, the OU process $X_t$ in \eqref{eq_SDE} becomes a Wiener process $X_t=W_t$. In this case, the MMSE estimator in \eqref{eq_esti} is given by 
\begin{align}\label{eq_Wiener}
\hat{X}_{t} = W_{S_i}, ~\text{if}~t\in[D_i,D_{i+1}).
\end{align}
\ifreport
As shown in Appendix \ref{app_Wiener}, $v(\cdot)$ defined by \eqref{eq_threshold} tends to 
\else
As shown in \cite{Ornee2018}, $v(\cdot)$ defined by \eqref{eq_threshold} tends to 
\fi
\begin{align}\label{eq_Wiener}
v(\beta) = \sqrt{3{(\beta-\EE[Y_i])}}.
\end{align}
%Hence, it follows from Theorem \ref{thm1} that
\begin{theorem} \label{coro2}
If $\sigma = 1$, ${\theta}\rightarrow0$, and the $Y_i$'s are i.i.d. with $0<\mathbb{E}[Y_i] < \infty$, then $(S_1(\beta),S_2(\beta),\ldots)$  with a parameter $\beta$ is an optimal solution to \eqref{eq_DPExpected}, where
%\item[4.] \emph{Threshold policy in signal difference}: 
%The sampling times are given by
%
\begin{align}\label{eq_coro2_opt_solution}
S_{i+1} (\beta)= \inf \left\{ t \geq D_i(\beta):\! \big|X_t - \hat X_t\big| \!\geq\! \sqrt{3(\beta-\EE[Y_i])}\right\},
\end{align}
%\end{itemize} 
%\begin{align}\label{eq_DPExpected}
%%\pi_{\text{opt}}=\arg
%&\min_{\pi\in\Pi}~ \limsup_{T\rightarrow \infty}\frac{1}{T}\mathbb{E}\left[\int_0^{T} (X_t - \hat X_t)^2dt\right] \\
%&~\text{s.t.}~~ \liminf_{n\rightarrow \infty} \frac{1}{n} 
%\mathbb{E}[S_n]\geq \frac{1}{f_{\max}}.\label{eq_constraint}
%\end{align
$D_i (\beta)= S_i (\beta)+ Y_i$. The value of ${\beta} {\geq} 0$ is determined in two cases: $\beta$ is the unique root  of 
\begin{align}\label{coro2_eq22}
\!\!\! \mathbb{E}\left[\int_{D_i(\beta)}^{D_{i+1}(\beta)}(X_t-\hat X_t)^2dt\right]\! - \!{\beta} {\mathbb{E}[D_{i+1}(\beta)\!-\!D_i(\beta)]} \!=\! 0,\!\!\!
\end{align}
%{\blue Theorem \ref{coro2} is an alternative form of Theorem \ref{thm1} in \cite{2020Sun}. In this sequel, an alternative form of theorem \ref{thm_new} can be expressed as follows}
%The value of $\beta\geq0$ is determined in two cases: $\beta$ is the root  of 
%\begin{align}\label{coro2_eq22_new}
%{\blue {\beta} {\mathbb{E}[D_{i+1}(\beta)\!-\!D_i(\beta)]} = {\mathbb{E}\left[\int_{D_i(\beta)}^{D_{i+1}(\beta)}\! (X_t-\hat X_t)^2dt\right]}},
%\end{align} 
if the root of \eqref{coro2_eq22} satisfies $\mathbb{E}[D_{i+1}(\beta)-D_i(\beta)] > {1}/{f_{\max}}$; otherwise, $\beta$ is the unique root  of $\mathbb{E}[D_{i+1}(\beta)-D_i(\beta)] = {1}/{f_{\max}}$. The optimal objective value to \eqref{eq_DPExpected} is  given by \emph{
\begin{align}\label{coro2_eq23}
{\mathsf{mse}}_{\text{opt}} = \frac{\mathbb{E}\left[\int_{D_i(\beta)}^{D_{i+1}(\beta)}\! (X_t-\hat X_t)^2dt\right]}{\mathbb{E}[D_{i+1}(\beta)\!-\!D_i(\beta)]}.
\end{align}}
\end{theorem}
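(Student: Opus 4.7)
\textbf{Proof proposal for Theorem \ref{coro2}.} The plan is to obtain Theorem \ref{coro2} as the degenerate limit of Theorem \ref{thm_new} under the specialization $\sigma=1$, $\theta \to 0^+$, in which case the SDE \eqref{eq_SDE} reduces to $dX_t = dW_t$, i.e., $X_t = X_0 + W_t$ is a Wiener process. The overall structure of the argument (threshold policy, two cases for $\beta$, optimal objective value) is inherited directly from Theorem \ref{thm_new}; only three items need to be verified in the limit: (i) the MMSE estimator \eqref{eq_esti} becomes $\hat X_t = W_{S_i}$ on $[D_i, D_{i+1})$; (ii) the non-linear threshold $v(\beta)$ in \eqref{eq_threshold} converges to $\sqrt{3(\beta - \mathbb{E}[Y_i])}$; (iii) the defining equations \eqref{thm1_eq22}--\eqref{thm1_eq25} and the expression \eqref{coro2_eq23} for $\mathsf{mse}_\text{opt}$ carry over verbatim.

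Item (i) is immediate from \eqref{eq_esti}: with $\sigma=1$ fixed and $\theta \to 0^+$, $e^{-\theta(t-S_i)} \to 1$ uniformly on any bounded interval, so $\hat X_t \to X_{S_i}$, which is the MMSE estimator of the Wiener process when second-order information about ``no new delivery'' is neglected. Item (ii) is the crux of the limiting calculation. Using $\mathsf{mse}_\infty = \tfrac{1}{2\theta}$ and $\mathsf{mse}_{Y_i} = \tfrac{1}{2\theta}\mathbb{E}[1-e^{-2\theta Y_i}]$, one checks
\begin{align*}
\frac{\mathsf{mse}_\infty - \mathsf{mse}_{Y_i}}{\mathsf{mse}_\infty - \beta} = 1 + \frac{\beta - \mathsf{mse}_{Y_i}}{\mathsf{mse}_\infty - \beta} = 1 + 2\theta\bigl(\beta - \mathbb{E}[Y_i]\bigr) + o(\theta),
\end{align*}
since $\mathsf{mse}_{Y_i} \to \mathbb{E}[Y_i]$ by Taylor expansion of $1-e^{-2\theta Y_i}$. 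A Taylor expansion of $G$ in \eqref{eq_g1} near $x=0$ yields $G(x) = 1 + \tfrac{2}{3}x^2 + O(x^4)$, so $G^{-1}(1+y) = \sqrt{3y/2} + o(\sqrt{y})$ as $y \to 0^+$. Plugging in $y = 2\theta(\beta - \mathbb{E}[Y_i]) + o(\theta)$ and multiplying by $\sigma/\sqrt{\theta} = 1/\sqrt{\theta}$ produces $v(\beta) \to \sqrt{3(\beta - \mathbb{E}[Y_i])}$, establishing \eqref{eq_Wiener} and \eqref{eq_coro2_opt_solution}.

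Item (iii) then follows because the equations \eqref{thm1_eq22}, \eqref{thm1_eq24}, \eqref{thm1_eq25}, and \eqref{coro2_eq23} do not explicitly depend on $\theta$ or $\sigma$ beyond through $(X_t, \hat X_t, v(\beta))$, all of which have well-defined limits. I would make this rigorous by applying dominated convergence to the expectations in \eqref{thm1_eq22}: the stopping-time representation $D_{i+1}(\beta) - D_i(\beta) = Y_{i+1} + \inf\{t \geq 0 : |\tilde W_t| \geq v(\beta)\}$ (where $\tilde W$ is an independent Brownian motion started from $X_{D_i} - \hat X_{D_i}$) has bounded $L^1$ norm as $\theta \to 0$, and the integrated squared error behaves analogously. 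The existence/uniqueness of $\beta$ in each of the two cases is inherited from Lemmas \ref{beta_unique} and \ref{beta_unique_2} applied at $\theta>0$ and passed to the limit.

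The main obstacle I anticipate is the order-of-limits issue: strictly speaking, the infimum in \eqref{eq_DPExpected} is over the class $\Pi$ defined for the Wiener process, not the OU process, so one cannot literally write $\mathsf{mse}_\text{opt}^{\text{Wiener}} = \lim_{\theta\to 0^+}\mathsf{mse}_\text{opt}^{\text{OU}}$ without justification. The cleanest way around this is not to take a limit of the problem at all, but to observe that the proof techniques of Theorem \ref{thm_new} (the free-boundary stopping argument and the Lagrangian strong-duality argument sketched in Section \ref{sec_proof} and Appendix \ref{app_thm_strong_duality}) apply verbatim to the Wiener process once the Dynkin-type identities in Lemma \ref{eq_expectation} are replaced by the corresponding Brownian identities $\mathbb{E}[\tau_v] = v^2$ and $\mathbb{E}[\int_0^{\tau_v} W_s^2\,ds] = v^4/3$ for $\tau_v = \inf\{t \geq 0 : |W_t| \geq v\}$. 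Substituting these closed forms into \eqref{thm1_eq22} recovers exactly the cubic threshold $\sqrt{3(\beta - \mathbb{E}[Y_i])}$ and confirms the statement of Theorem \ref{coro2} without any interchange-of-limits concern.
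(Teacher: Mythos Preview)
Your proposal is correct. The paper itself does not give an independent proof of Theorem~\ref{coro2}: it simply remarks that the statement is an alternative form of Theorem~1 in \cite{2020Sun} and omits the proof, while the threshold limit $v(\beta)\to\sqrt{3(\beta-\mathbb{E}[Y_i])}$ is verified separately in Appendix~\ref{app_Wiener} via the same Taylor expansion of $G$ and of $\mathbb{E}[e^{-2\theta Y_i}]$ that you carry out in item~(ii). Your ``cleaner'' alternative---re-running the free-boundary / Lagrangian-duality argument of Section~\ref{sec_proof} directly for Brownian motion with the Brownian exit identities $\mathbb{E}[\tau_v]=v^2$ and $\mathbb{E}[\int_0^{\tau_v}W_s^2\,ds]=v^4/3$---is precisely the content of the cited paper~\cite{2020Sun}, so both routes land in the same place. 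The only advantage your write-up has over the paper's treatment is that you flag the order-of-limits issue explicitly and explain why bypassing it (by proving the Wiener case from scratch) is preferable to literally passing $\theta\to 0$ inside the infimum \eqref{eq_DPExpected}.
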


Theorem \ref{coro2} is an alternative form of Theorem 1 in \cite{2020Sun} and hence its proof is omitted. 
%One can observe that the only difference between Theorem \ref{coro2}
The benefit of the new expression in Theorem \ref{coro2} is that it allows to character $\beta$ based on the optimal objective value ${\mathsf{mse}}_{\text{opt}}$ and the sampling rate constraint \eqref{eq_constraint}, in the same way as in Theorems \ref{thm1}-\ref{thm_new}. This appears to be more fundamental than the expression in \cite{2020Sun}. The new form of optimal sampling policy of Wiener processes was also discovered in \cite{TsaiINFOCOM2020} without considering the constraint on \eqref{eq_constraint}.
%We will revisit this point in Section \ref{}. 

\subsection{Signal-agnostic Sampling}

In signal-agnostic sampling policies, the sampling times $S_i$ are determined based only on the service times $Y_i$, but not on the observed OU process $\{X_t, t\geq 0\}$. 
%For each $\pi\in\Pi_{\text{signal-agnostic}}$, the estimation 

\begin{lemma}\label{lem_estimation_error}
If $\pi\in\Pi_{\text{signal-agnostic}}$, then the mean-squared estimation error of the OU process $X_t$ at  time $t$ is 
\begin{align}\label{eq_lem_estimation_error1}
p(\Delta_t) = \!\!\mathbb{E}\left[(X_t-\hat X_t)^2\big|\pi, Y_1, Y_2, \dots \right] =  \frac{\sigma^2}{{2\theta}}\left(1-e^{-2\theta \Delta_t}\right),\!\!
\end{align}
which is a strictly increasing function of the age $\Delta_t$. 
\end{lemma}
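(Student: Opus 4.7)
The plan is to combine the explicit expression \eqref{eq_process} for $X_t$ with the signal-agnostic property of $\pi$, so that once we condition on $\pi$ and $Y_1,Y_2,\ldots$ the sampling times $S_i$ become deterministic and independent of the driving Wiener process. Specifically, for $t\in[D_i,D_{i+1})$, subtracting \eqref{eq_esti} from \eqref{eq_process} I get
\begin{align*}
X_t-\hat X_t = \frac{\sigma}{\sqrt{2\theta}}\, e^{-\theta(t-S_i)}\, W_{e^{2\theta(t-S_i)}-1},
\end{align*}
and from \eqref{eq_age2} we have $t-S_i=\Delta_t$ on this interval.

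Next I would exploit the definition of $\Pi_{\text{signal-agnostic}}$: because $\pi$ is independent of $\{X_t,t\geq 0\}$, and the service times $\{Y_i\}$ are independent of $\{X_t\}$ by assumption, the sampling and delivery times $\{S_i,D_i\}$ as well as the age process $\Delta_t$ are measurable functions of $(\pi,Y_1,Y_2,\ldots)$ and are independent of the Wiener process $W$ driving the OU SDE \eqref{eq_SDE}. Conditioning on $(\pi,Y_1,Y_2,\ldots)$ therefore fixes $S_i$ and $\Delta_t$ to (random but measurable) values, while leaving the conditional law of $W_{e^{2\theta\Delta_t}-1}$ Gaussian with mean zero and variance $e^{2\theta\Delta_t}-1$.

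Taking the conditional second moment then gives
\begin{align*}
\mathbb{E}\!\left[(X_t-\hat X_t)^2\,\big|\,\pi,Y_1,Y_2,\ldots\right]
&= \frac{\sigma^2}{2\theta}\, e^{-2\theta\Delta_t}\,\bigl(e^{2\theta\Delta_t}-1\bigr) \\
&= \frac{\sigma^2}{2\theta}\bigl(1-e^{-2\theta\Delta_t}\bigr),
\end{align*}
which is exactly \eqref{eq_lem_estimation_error1}. Monotonicity in $\Delta_t$ is then immediate from differentiating: the derivative is $\sigma^2 e^{-2\theta\Delta_t}>0$ since $\sigma,\theta>0$.

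The only delicate point—and the one I would be most careful about—is justifying the conditional-variance computation rigorously when $S_i$ is itself random. Because the policy is signal-agnostic, $S_i$ is a stopping time with respect to a filtration generated by quantities independent of $W$, so the time-change representation of \eqref{eq_process} combined with a standard independence/strong Markov argument (as in the derivation of \eqref{eq_esti} in Appendix \ref{app_MMSE}) lets me treat $S_i$ as deterministic inside the conditional expectation. Once that is handled, the rest of the proof is a one-line Gaussian variance computation.
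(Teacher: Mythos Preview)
Your proposal is correct and follows essentially the same approach as the paper's proof: both subtract \eqref{eq_esti} from \eqref{eq_process} to express $X_t-\hat X_t$ via the time-changed Wiener increment, invoke the signal-agnostic assumption to make $S_i$ (equivalently $\Delta_t$) independent of $W$, and then compute the Gaussian second moment $\mathbb{E}[W_s^2]=s$ with $s=e^{2\theta\Delta_t}-1$. Your discussion of the ``delicate point'' is, if anything, slightly more careful than the paper's treatment, which simply conditions on $(S_i,D_i,D_{i+1})$ and notes that Step~(b) would fail for signal-aware policies.
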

\ifreport
\begin{proof}
See Appendix \ref{app_lem_estimation_error}. 
\end{proof}
\else
{}
\fi
According to Lemma \ref{lem_estimation_error}, for every policy $\pi\in\Pi_{\text{signal-agnostic}}$,
%\fi
%the objective function in \eqref{eq_DPExpected} satisfies 
%where $p(\Delta_t)$ is an increasing function of the age $\Delta_t$, given by
%\begin{align}\label{eq_age_penalty}
%p(\Delta_t) = .
%\end{align}
\begin{align}%\label{eq_age_MSE}
%{\mathsf{mse}}_\pi =
\mathbb{E}\left[\int_0^{T} (X_t - \hat X_t)^2dt\right] \!\!= \mathbb{E}\left[ \int_0^{T}p(\Delta_t) dt\right].
\end{align}
Hence, minimizing the mean-squared estimation error among signal-agnostic sampling policies can be  formulated as the following MDP for minimizing the expected time-average of the nonlinear age function $p(\Delta_t)$ in \eqref{eq_lem_estimation_error1}: 
%Hence, the optimal signal-agnostic sampling policy can be obtained by
%If the policy space in  \eqref{eq_DPExpected} is restricted from $\Pi$ to $\Pi_{\text{signal-agnostic}}$,  the optimal signal-agnostic sampling policy is obtained by solving the following MDP:
\begin{align}\label{eq_age}
%\pi_{\text{opt}}=\arg
%{\mathsf{mse}}_{\text{signal-agnostic}} = 
\mathsf{mse}_{\text{age-opt}}=\inf_{\pi \in\Pi_{\text{signal-agnostic}} }\!\!& \limsup_{T\rightarrow \infty}\frac{1}{T}\mathbb{E}\left[ \int_0^{T}p(\Delta_t) dt\right] \\
~\text{s.t.}\quad  \quad \!\!&\liminf_{n\rightarrow \infty} \frac{1}{n} 
\mathbb{E}\left[\sum_{i=1}^n (S_{i+1}-S_i)\right]\!\!\geq \frac{1}{f_{\max}},\label{eq_constraint_age}
\end{align}
where $\mathsf{mse}_{\text{age-opt}}$ is the optimal value of \eqref{eq_age}. 
By \eqref{eq_lem_estimation_error1}, $p(\Delta_t)$ and $\mathsf{mse}_{\text{age-opt}}$ are bounded. Because $\Pi_{\text{signal-agnostic}}\subset \Pi$, it follows immediately that {$\mathsf{mse}_{\text{opt}} \leq \mathsf{mse}_{\text{age-opt}}$.}

Problem \eqref{eq_age} is one instance of the problems recently solved in Corollary 3 of  \cite{SunNonlinear2019} for general strictly increasing functions $p(\cdot)$. 
From this,  a solution to \eqref{eq_age} for signal-agnostic sampling is given by

\begin{theorem} \label{thm2}
%$\mathbb{E}[(X_t-\hat X_t)^2] = p(\Delta_t)$ is strictly increasing in $\Delta_t$,
If  the $Y_i$'s are i.i.d. with $0<\mathbb{E}[Y_i] < \infty$, then 
$(S_1(\beta),S_2(\beta),\ldots)$  with a parameter $\beta$ is an optimal solution to \eqref{eq_age}, where
%\item[4.] \emph{Threshold policy in signal difference}: 
%The sampling times are given by
%
\begin{align}\label{eq_thm2_opt_solution}
S_{i+1} (\beta)= \inf \left\{ t \geq D_i(\beta):\! \mathbb{E}[ (X_{t+Y_{i+1}}\! -\! \hat X_{t+Y_{i+1}})^2] \!\geq\! \beta\right\}\!,\!\!
\end{align}
%\end{itemize} 
%\begin{align}\label{eq_DPExpected}
%%\pi_{\text{opt}}=\arg
%&\min_{\pi\in\Pi}~ \limsup_{T\rightarrow \infty}\frac{1}{T}\mathbb{E}\left[\int_0^{T} (X_t - \hat X_t)^2dt\right] \\
%&~\text{s.t.}~~ \liminf_{n\rightarrow \infty} \frac{1}{n} 
%\mathbb{E}[S_n]\geq \frac{1}{f_{\max}}.\label{eq_constraint}
%\end{align
$D_i (\beta)= S_i (\beta)+ Y_i$ and $\beta$ is the unique root  of 
\begin{align}\label{thm2_eq22_1}
\!\!\! \mathbb{E}\left[\int_{D_i(\beta)}^{D_{i+1}(\beta)}(X_t-\hat X_t)^2dt\right]\! - \!{\beta} {\mathbb{E}[D_{i+1}(\beta)\!-\!D_i(\beta)]} \!=\! 0,\!\!\!\end{align}
if the root of \eqref{thm2_eq22_1} satisfies $\mathbb{E}[D_{i+1}(\beta)-D_i(\beta)] > {1}/{f_{\max}}$; otherwise, $\beta$ is the unique root  of 
\begin{align}\label{thm2_eq22_2}
\mathbb{E}[D_{i+1}(\beta)-D_i(\beta)] = {1}/{f_{\max}}. 
\end{align}
The optimal objective value to \eqref{eq_age} is  given by \emph{
\begin{align}\label{thm2_eq23}
{\mathsf{mse}}_{\text{age-opt}} = \frac{\mathbb{E}\left[\int_{D_i(\beta)}^{D_{i+1}(\beta)}\! (X_t - \hat X_t)^2dt\right]}{\mathbb{E}[D_{i+1}(\beta)\!-\!D_i(\beta)]}.
\end{align} }
\end{theorem}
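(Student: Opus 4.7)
The plan is to reduce the signal-agnostic MSE problem \eqref{eq_age} to a nonlinear age-penalty minimization problem in the exact form already solved in Corollary 3 of \cite{SunNonlinear2019}, and then translate the resulting age-threshold back into the expected-estimation-error threshold that appears in \eqref{eq_thm2_opt_solution}.

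First, I would observe that for every $\pi \in \Pi_{\text{signal-agnostic}}$, Lemma \ref{lem_estimation_error} together with Fubini's theorem rewrites the restricted objective as $\limsup_{T\to\infty}\tfrac{1}{T}\mathbb{E}\left[\int_0^T p(\Delta_t)\,dt\right]$, where $p(\Delta)=\frac{\sigma^2}{2\theta}(1-e^{-2\theta\Delta})$ is strictly increasing and bounded. Because the rate constraint \eqref{eq_constraint_age} also depends only on the sampling epochs and not on the signal values, \eqref{eq_age} becomes precisely the constrained MDP considered in \cite{SunNonlinear2019}: the state is the age process, the action is the sampling time, and the service times $\{Y_i\}$ are i.i.d.\ with finite mean.

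Next, I would invoke Corollary 3 of \cite{SunNonlinear2019}, which asserts that the optimal policy has the threshold form
$S_{i+1}(\beta)=\inf\{t\geq D_i(\beta):\mathbb{E}[p(t+Y_{i+1}-S_i)]\geq \beta\}$,
with $\beta$ determined by a two-case rule: $\beta$ is the root of the ``per-renewal average cost equals $\beta$'' balance equation \eqref{thm2_eq22_1} if the sampling-rate constraint is slack, and otherwise $\beta$ is the root of the tight-constraint equation \eqref{thm2_eq22_2}. The expression \eqref{thm2_eq23} for $\mathsf{mse}_{\text{age-opt}}$ then emerges as the per-renewal average cost under this policy, and agrees with the overall time-average by the renewal-reward theorem combined with the regenerative assumption baked into the definition of $\Pi$.

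Finally, I would rewrite the age-valued threshold in the signal-based form appearing in \eqref{eq_thm2_opt_solution}. Using \eqref{eq_process} and the MMSE formula \eqref{eq_esti}, a direct calculation shows that for any $t\in[D_i,D_{i+1})$ under a signal-agnostic policy,
\begin{align*}
\mathbb{E}\bigl[(X_{t+Y_{i+1}}-\hat X_{t+Y_{i+1}})^2\bigr]=\mathbb{E}[p(t+Y_{i+1}-S_i)],
\end{align*}
because $Y_{i+1}$ is independent of $\{X_s\}$ and of $S_i$, and the signal-agnostic predictor based on the $i$-th sample reduces $X_{t+Y_{i+1}}-\hat X_{t+Y_{i+1}}$ to an OU innovation over a time span equal to the age $t+Y_{i+1}-S_i$. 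Substituting this identity into the age-threshold provided by the cited corollary yields exactly \eqref{eq_thm2_opt_solution}. The main obstacle is essentially bookkeeping: one must verify that the regenerativity and finite-mean hypotheses required by Corollary 3 of \cite{SunNonlinear2019} are satisfied by every $\pi\in\Pi_{\text{signal-agnostic}}$, both of which follow from the definition of $\Pi$ in Section II-B; once this is done, Theorem \ref{thm2} is a direct translation of the cited result into the language of expected estimation error.
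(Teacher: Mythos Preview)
Your proposal is correct and follows essentially the same route as the paper: the paper's proof consists of the single sentence ``Theorem \ref{thm2} follows from Corollary 3 of \cite{SunNonlinear2019} and Lemma \ref{lem_estimation_error},'' and you have simply unpacked this reduction and the translation of the age threshold into the expected-error threshold. Your additional bookkeeping (verifying the hypotheses of the cited corollary and spelling out the identity $\mathbb{E}[(X_{t+Y_{i+1}}-\hat X_{t+Y_{i+1}})^2]=\mathbb{E}[p(t+Y_{i+1}-S_i)]$) is exactly what is implicit in the paper's one-line justification.
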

%{\blue For signal-agnostic sampling with rate constraint, an optimal solution to \eqref{eq_age} is given by}
%\begin{theorem} {\blue (Signal-agnostic Sampling with Rate Constraint).}\label{thm2_new}
%If  the $Y_i$'s are i.i.d. with $0<\mathbb{E}[Y_i] < \infty$, then 
%\eqref{eq_thm2_opt_solution}-\eqref{thm2_eq22_1} is an optimal solution to \eqref{eq_age}.
%The value of $\beta\geq0$ is determined in two cases: $\beta$ is the root  of 
%\begin{align}\label{thm2_eq22_new}
%{\blue {\beta} {\mathbb{E}[D_{i+1}(\beta)\!-\!D_i(\beta)]} = {\mathbb{E}\left[\int_{D_i(\beta)}^{D_{i+1}(\beta)}\! (X_t-\hat X_t)^2dt\right]}},
%\end{align} 
%if the root of \eqref{thm2_eq22_new} satisfies $\mathbb{E}[D_{i+1}(\beta)-D_i(\beta)] > {1}/{f_{\max}}$; otherwise, $\beta$ is the root  of $\mathbb{E}[D_{i+1}(\beta)-D_i(\beta)] = {1}/{f_{\max}}$. The optimal objective value to \eqref{eq_age} is  given by \emph{
%\begin{align}\label{thm2_eq23}
%{\mathsf{mse}}_{\text{age-opt}} = \frac{\mathbb{E}\left[\int_{D_i(\beta)}^{D_{i+1}(\beta)}\! (X_t - \hat X_t)^2dt\right]}{\mathbb{E}[D_{i+1}(\beta)\!-\!D_i(\beta)]}.
%\end{align} }
%\end{theorem}

Theorem \ref{thm2} follows from Corollary 3 of \cite{SunNonlinear2019} and Lemma \ref{lem_estimation_error}. 
%Because $\Pi_{\text{signal-agnostic}}\subset \Pi$, it follows immediately that {$\mathsf{mse}_{\text{opt}} \leq \mathsf{mse}_{\text{age-opt}}$.}
Similar to the case of signal-aware sampling, the roots of \eqref{thm2_eq22_1} and \eqref{thm2_eq22_2} can be solved by using Algorithms \ref{alg1}-\ref{alg4}. In fact, Algorithms \ref{alg1}-\ref{alg4} can be used for minimizing general non-decreasing age penalty \cite{SunNonlinear2019}.

\subsection{Discussions of the Results} 

The  difference among Theorems \ref{thm1}-\ref{thm2} is only in the expressions \eqref{eq_opt_solution}, \eqref{eq_coro2_opt_solution}, \eqref{eq_thm2_opt_solution} of threshold policies. In  signal-aware sampling policies \eqref{eq_opt_solution} and \eqref{eq_coro2_opt_solution}, the sampling time is determined by the \emph{instantaneous} estimation error $\big|X_t - \hat X_t\big|$, and the threshold function $v(\cdot)$ is determined by the specific signal model. 
In the signal-agnostic sampling policy \eqref{eq_thm2_opt_solution}, the sampling time is determined by the \emph{expected} estimation error $\mathbb{E}[ (X_{t+Y_{i+1}}\! -\! \hat X_{t+Y_{i+1}})^2]$ at time $t+Y_{i+1}$. We note that  if $t =S_{i+1}(\beta)$, then $t+Y_{i+1}=S_{i+1}(\beta)+Y_{i+1} = D_{i+1}(\beta)$ is the  delivery time of the new sample. Hence,  \eqref{eq_thm2_opt_solution} requires that the expected estimation error upon the delivery of the new sample is no less than   $\beta$. The parameter $\beta$ in Theorems \ref{thm1}-\ref{thm2} is determined by the optimal objective value and the sampling rate constraint in the same manner. Later on in
\ifreport
\eqref{eq_beta_value1},
\else
\eqref{eq_beta_value} and in our technical report \cite{Ornee2018},
\fi we will further see that $\beta$ is exactly equal to the summation of the optimal objective value of the MDP and the optimal Lagrangian dual variable associated to the sampling rate constraint. 
Finally, it is worth noting  that Theorems \ref{thm1}-\ref{thm2} hold for  all distributions of the service times $Y_i$ satisfying $0<\mathbb{E}[Y_i]<\infty$, and for both constrained and unconstrained sampling problems.
%the optimal sampling policies for more general signal models, which will be considered in our future work. 

% Theorem \ref{thm1} and Theorem \ref{thm2}, one can observe that the 

\ignore{
{\blue \subsection{Additional discussions for possible future direction} \label{sec_quantization}

Suppose that we consider a quantizer after the sampler in Fig. \ref{fig_model}, then the sampled value is given as follows.
\begin{align}
X_{S_i} = X^{q}_{S_i} + Q_{S_i},
\end{align}
where ${X^{q}}_{S_i}$ is the quantized value of the sample and $Q_{S_i}$ is the quantization noise which is uniformly distributed within the interval $(-\Delta/2, \Delta/2)$ with a quantization step-size $\Delta$. Note that $Q_{S_i}$ has zero mean and variance of ${\Delta^2}/2$. 

Due to considering quantization, the expression of \eqref{eq_process} can be written as follows
\begin{align} \label{quant}
X_t = &X^{q}_{S_i} e^{-\theta (t-S_i)}+ Q_{S_i} e^{-\theta (t-S_i)} + \mu\big[1-e^{-\theta (t-S_i)} \big] \nonumber\\
&+ \frac{\sigma}{\sqrt{2\theta}}e^{-\theta (t-S_i)} W_{e^{2 \theta (t-S_i)}-1}, \text{ if }t \in [S_i,\infty).
\end{align}
and also \eqref{eq_esti} is changed accordingly which is given by
\begin{align} \label{mse_quant}
\hat{X}_{t}  =  \mathbb{E}[{X}_t | M_t ] =& X^{q}_{S_i} e^{-\theta (t-S_i)}+ \mu\big[1-e^{-\theta (t-S_i)} \big], \nonumber\\
                &\text{if}~t\in[D_i,D_{i+1}),~i=0,1,2,\ldots
\end{align}

We have provided a discussion of analyzing $\text{mse}$ according to \eqref{thm1_eq23} after adopting quantization in Appendix \ref{sec_quant}.
}
}

%\input{problem_formulation}
%\input{sec_main_results}
%\ifreport
\section{Proof of the Main Results}\label{sec_proof}

We first provide the proof of Theorem \ref{thm_new}. After that Theorem \ref{thm1} follows immediately because it is a special case of Theorem \ref{thm_new}. We prove Theorem \ref{thm_new} in four steps: (i) We first show that  sampling should be suspended when the server is busy, which can be used to simplify \eqref{eq_DPExpected}. (ii) We use an extended Dinkelbach's method  \cite{Dinkelbach67}  and Lagrangian duality method to decompose the simplified problem into a series of mutually independent per-sample  MDP. (iii) We utilize the free boundary method from optimal stopping theory \cite{Peskir2006} to solve the per-sample MDPs analytically. (iv) Finally, we use a geometric multiplier method \cite{Bertsekas2003} to show that the  duality gap is zero. 
%The details are provided as follows. 
%By this, the proof is complete. 
The above proof framework is an extension to that used in \cite{SunNonlinear2019, 2020Sun}, and the most challenging part is Step (iii). 

\subsection{Preliminaries} \label{sec_pre}
%Let $U_t$ be an OU process with initial state 
%$U_0$. The expression of $U_t$ at any time $t$ is given by
%\begin{align}
%U_t = {U_0}e^{{-\theta}t}+{\mu}(1-e^{-{\theta}t})+\frac{\sigma}{\sqrt{2\theta}}e^{-\theta t} U_{e^{2 \theta t}-1},
%\end{align}
%
%Let the initial state 
%$U_0=0$ and parameter $\mu=0$. Then, $U_t$ can be expressed as 
%\begin{align}\label{eq_OU}
%U_t = \frac{\sigma}{\sqrt{2\theta}}e^{-\theta t} W_{e^{2 \theta t}-1},
%\end{align}
The OU process $O_t$ in \eqref{eq_OU} with initial state $O_t = 0$ and parameter $\mu = 0$ is  the solution to the SDE
\begin{align}
dO_t = - \theta O_t dt + \sigma dW_t. 
\end{align}
In addition, the infinitesimal generator of $O_t$ is \cite[Eq. A1.22]{Borodin1996}
\begin{align}\label{eq_generator}
  \mathcal{G}= -\theta u \frac{\partial} {\partial u} + \frac{\sigma^2}{2}\frac{\partial^2} {\partial u^2}.
\end{align}
According to \eqref{eq_process} and \eqref{eq_esti}, the estimation error $(X_t-\hat X_t)$ is of the same distribution with $O_{t-S_i}$, if $t\in[D_i,D_{i+1})$. 
By using Dynkin's formula and the optional stopping theorem, we  obtain the following lemma.
%Strong duality can be established as
%\begin{lemma}\label{thm_zero_gap}
%The optimal values of \eqref{eq_SD} and \eqref{eq_dual} are equal; i.e., their duality gap is zero.
%\end{lemma}
%\begin{proof}
%%See Appendix \ref{app_zero_gap}.
%\end{proof}
\begin{lemma}\label{lem_stop}
Let $\tau \geq0$ be a stopping time of the OU process $O_t$ with $ \EE\left[ \tau\right]<\infty$, 
then
\begin{align}
\mathbb{E}\left[\int_0^\tau O_t^2 dt\right] & = \mathbb{E}\left[\frac{\sigma^2}{2\theta}\tau - \frac{1}{2\theta}O_\tau^2\right].\label{eq_stop}
\end{align}
If, in addition, $\tau$ is the first exit time of a bounded set, then
\begin{align}\label{eq_stop11}
&\mathbb{E}\left[\tau\right] = \mathbb{E} [R_1(O_{\tau})] ,\\
&\mathbb{E}\left[\int_0^\tau O_t^2 dt\right]  = \mathbb{E} [R_2(O_{\tau})],\label{eq_stop12}
\end{align}
where $R_1(\cdot)$ and $R_2(\cdot)$ are defined in \eqref{eq_R_1} and \eqref{eq_R_2}, respectively. 
\end{lemma}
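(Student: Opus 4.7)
My plan is to obtain all three identities from a single tool, Dynkin's formula (with the optional stopping theorem for integrability), applied to carefully chosen test functions. The generator of $O_t$ is the operator $\mathcal{G}$ in \eqref{eq_generator}, and since $O_0=0$, Dynkin's formula reads $\mathbb{E}[f(O_\tau)]=f(0)+\mathbb{E}\left[\int_0^\tau \mathcal{G}f(O_t)\,dt\right]$ whenever $f\in C^2$ and the usual integrability conditions hold.

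For \eqref{eq_stop} the natural candidate is $f(u)=u^2/(2\theta)$. A direct computation with \eqref{eq_generator} gives
\begin{align}
\mathcal{G}f(u)=-\theta u\cdot\frac{u}{\theta}+\frac{\sigma^2}{2}\cdot\frac{1}{\theta}=-u^2+\frac{\sigma^2}{2\theta}.
\end{align}
The hypothesis $\mathbb{E}[\tau]<\infty$, combined with the standard moment estimate $\mathbb{E}[O_t^2]\le \sigma^2/(2\theta)$ for the stationary OU process, lets me invoke Dynkin's formula (equivalently, apply It\^o to $f(O_t)$ and take expectations, noting the stochastic integral is a martingale whose $L^2$ norm is controlled by $\mathbb{E}[\tau]$). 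Rearranging yields \eqref{eq_stop}.

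For \eqref{eq_stop11} and \eqref{eq_stop12}, the plan is to find $C^2$ functions $R_1$ and $R_2$ satisfying the Poisson-type equations
\begin{align}
\mathcal{G}R_1(u)=1,\qquad \mathcal{G}R_2(u)=u^2,
\end{align}
with $R_1(0)=R_2(0)=0$ (and $R_j'(0)=0$ by the even symmetry forced by these equations). Dynkin's formula then immediately delivers $\mathbb{E}[\tau]=\mathbb{E}[R_1(O_\tau)]$ and $\mathbb{E}[\int_0^\tau O_t^2\,dt]=\mathbb{E}[R_2(O_\tau)]$; the assumption that $\tau$ is the first exit time of a bounded set guarantees $R_j(O_t)$ stays bounded for $t\le\tau$, so the integrability conditions are trivially met. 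Writing $-\theta u R_j'+(\sigma^2/2)R_j''=g_j(u)$ with $g_1=1$ and $g_2=u^2$, the substitution $H=R_j'$ reduces each equation to a first-order linear ODE with integrating factor $e^{-\theta u^2/\sigma^2}$, giving
\begin{align}
R_j'(u)=\frac{2}{\sigma^2}\,e^{\theta u^2/\sigma^2}\int_0^u g_j(s)\,e^{-\theta s^2/\sigma^2}\,ds.
\end{align}

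The remaining step, and the only bit of real work, is verifying that the closed forms in \eqref{eq_R_1}--\eqref{eq_R_2} match the integrated series. I plan to expand $e^{\theta u^2/\sigma^2}$ as a power series, multiply by the polynomial-weighted error-function integrals above, and use the Cauchy product to obtain the even-power series representation of $R_j$. Comparing the general $u^{2n}$ coefficient with the series $\sum_{n\ge 0}\frac{(1)_n(1)_n}{(3/2)_n(2)_n}\frac{(\theta u^2/\sigma^2)^n}{n!}$ (and the analogous manipulation for $R_2$, which introduces the extra $-u^2/(2\theta)$ term) closes the verification; the only potential pitfall is getting the Pochhammer-symbol bookkeeping right, which I anticipate as the main obstacle in an otherwise routine calculation.
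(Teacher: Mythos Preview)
Your approach is essentially the same as the paper's: both prove \eqref{eq_stop} by showing $O_t^2-\sigma^2 t+2\theta\int_0^t O_s^2\,ds$ is a martingale (the paper via the Feller generator, you via It\^o/Dynkin with $f(u)=u^2/(2\theta)$) and then passing to the stopping time, and both prove \eqref{eq_stop11}--\eqref{eq_stop12} by checking that $R_1,R_2$ solve $\mathcal{G}R_1=1$, $\mathcal{G}R_2=u^2$ with $R_j(0)=0$ and applying Dynkin for first exit times. The only minor difference is in the limiting justification for \eqref{eq_stop}: the paper localizes at $t\wedge\tau$ and uses dominated convergence via a bound on $\mathbb{E}[\sup_{s\le\tau}O_s^2]$, whereas your $L^2$-martingale argument (bounding $\mathbb{E}[\int_0^\tau O_s^2\,ds]$ by $(\sigma^2/2\theta)\mathbb{E}[\tau]$) is an equally valid shortcut.
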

%\ifreport
\begin{proof}
See Appendix \ref{app_lem4}. 
\end{proof}
%If no sample is ever delivered to the estimator, Lemma \ref{lem_stop} suggests that the estimation error is still upper bounded by
%%\else
%%The proof of Lemma \ref{lem_estimation_error} is provided in \cite{Ornee2018}. If no sample is ever delivered to the estimator, Lemma \ref{lem_stop} suggests that the estimation error is still upper bounded by
%%\fi
%\begin{align}\label{eq_stop11}
%\frac{1}{T}\mathbb{E}\left[\int_0^T (X_t-\hat X_t) dt\right] = \frac{1}{T}\mathbb{E}\left[\int_0^T O_t dt\right] \leq \frac{\sigma^2}{2\theta}.
%\end{align}
%Hence, $\mathsf{mse}_{\text{opt}} \leq \mathsf{mse}_{\text{age-opt}}\leq \frac{\sigma^2}{2\theta}$. Further, notice that \eqref{eq_stop11} holds even if the average sampling rate is 0. 
%Hence, no matter $\beta$ in Theorem \ref{thm1} is equal to $\mathsf{mse}_{\text{opt}}$ or is chosen to satisfy the sampling rate constraint \eqref{eq_constraint}, $\beta$ is  upper bounded by 
%\begin{align}
%\beta \leq \frac{\sigma^2}{2\theta}.
%\end{align}
%This property can be used to determine  the bisection search region in Algorithms \ref{alg1}-\ref{alg2}.
\subsection{Suspend Sampling When the Server is Busy} % 
% If a sample is stored in the queue, it becomes stale while waiting for its transmission opportunity. A better method is to wait until the channel becomes idle, and then generate a new sample.
%Using \eqref{eq_integral}, we can obtain the following lemma:

By using the strong Markov property of the OU process $X_t$ and the orthogonality principle of MMSE estimation, we obtain the following useful lemma: 

\begin{lemma}\label{lem_zeroqueue}
Suppose that a feasible sampling policy for problem \eqref{eq_DPExpected} is $\pi$, in which at least one sample is taken when the server is busy processing an earlier generated sample. Then, there exists another feasible policy $\pi'$ for problem \eqref{eq_DPExpected} which has a smaller estimation error than policy $\pi$. Therefore, in \eqref{eq_DPExpected}, it is suboptimal to take a new sample before the previous sample is delivered.
\end{lemma}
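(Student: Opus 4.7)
The plan is to construct from $\pi$ an alternative policy $\pi'$ that postpones every sample taken during a busy period to the first moment the server becomes idle, and then to show that $\pi'$ is feasible and strictly improves the time-averaged estimation error. Formally, set $S_0'=0$, $D_0'=Y_0$, and inductively define
\begin{equation*}
S_i' = \max\{S_i,\, D_{i-1}'\}, \qquad D_i' = S_i' + Y_i, \qquad i \geq 1.
\end{equation*}
A short induction shows that $D_i' = D_i$ for every $i$: if $S_i \geq D_{i-1}$ sample $i$ is unaffected; if $S_i < D_{i-1}$ then under FCFS the sample waits in the queue until $D_{i-1}$ and is delivered at $D_{i-1}+Y_i$, which is exactly $D_i'$. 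Hence $S_i' \geq S_i$ pointwise, with strict inequality at at least one index by hypothesis, while the delivery epochs $\{D_i\}$ coincide under $\pi$ and $\pi'$.

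For feasibility, I would verify three things: (i) each $S_i'$ is a stopping time with respect to $\{\mathcal{N}_t^+\}$ since it is the maximum of the stopping time $S_i$ and the quantity $D_{i-1}'$, which is measurable by time $D_{i-1}'$; (ii) the regenerative structure of $\{S_{i+1}'-S_i'\}$ is inherited from $\pi$, because within each regeneration cycle the modification is a deterministic function of the original sample times and the i.i.d.\ service times; and (iii) the sampling-rate constraint \eqref{eq_constraint} continues to hold, which follows because $\pi'$ only delays samples, so the counting process of sample instants under $\pi'$ is pointwise dominated by that under $\pi$, and hence the asymptotic sampling rate of $\pi'$ cannot exceed that of $\pi$.

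For the error comparison, the partition $\{[D_i,D_{i+1})\}$ of the time axis is identical under $\pi$ and $\pi'$, while on each such interval the estimator defined in \eqref{eq_esti} uses $X_{S_i}$ under $\pi$ and $X_{S_i'}$ under $\pi'$ with $S_i' \geq S_i$. Applying the strong Markov property of the OU process at the stopping time $S_i'$, together with the fact that the expression in \eqref{eq_esti} is exactly $\mathbb{E}[X_t\mid X_{S_i'}]$ for $t\geq S_i'$, gives
\begin{equation*}
\mathbb{E}\!\left[(X_t-\hat X_t)^2 \,\big|\, \mathcal{N}_{S_i'}^+\right] = \frac{\sigma^2}{2\theta}\bigl(1-e^{-2\theta(t-S_i')}\bigr)
\end{equation*}
under $\pi'$, which is no larger than the corresponding conditional error $\tfrac{\sigma^2}{2\theta}(1-e^{-2\theta(t-S_i)})$ under $\pi$ and is strictly smaller whenever $S_i'>S_i$. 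Integrating over $t\in[D_i,D_{i+1})$ via Fubini, summing over $i$, normalizing by $T$, and taking $\limsup_{T\to\infty}$ yields $\mathsf{mse}(\pi')<\mathsf{mse}(\pi)$.

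The main obstacle will be making the last step fully rigorous in the signal-aware setting, where $S_i, S_i', D_i, D_{i+1}$ are all correlated with the OU path: I would need to condition on $\mathcal{N}_{S_i'}^+$, invoke the strong Markov property to reduce the post-$S_i'$ process to an independent OU evolution started at $X_{S_i'}$, and then verify that the strict pointwise inequality on individual intervals survives the averaging against the (random) interval lengths $D_{i+1}-D_i$ and the regenerative ensemble. The monotonicity of $\beta\mapsto 1-e^{-2\theta\beta}$ together with the a.s.\ inequality $S_i'\geq S_i$ (strict at at least one index) is what ultimately drives the strict improvement.
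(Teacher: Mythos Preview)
Your construction of $\pi'$ and the observation that $D_i'=D_i$ match the paper's approach exactly. The gap is in the error comparison. You claim that under $\pi$ the conditional error equals $\tfrac{\sigma^2}{2\theta}(1-e^{-2\theta(t-S_i)})$, but this is not what you get when you condition on $\mathcal{N}_{S_i'}^+$: that $\sigma$-field already sees the OU path on $[S_i,S_i']$, so the conditional law of $X_t-\hat X_t^{\pi}$ is no longer centered, and its second moment is $(\hat X_t^{\pi'}-\hat X_t^{\pi})^2+\tfrac{\sigma^2}{2\theta}(1-e^{-2\theta(t-S_i')})$, not the age-variance expression you wrote. If instead you meant to condition on $\mathcal{N}_{S_i}^+$ for the $\pi$-term, then the two sides live under different $\sigma$-fields and the pointwise comparison is meaningless. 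More generally, the identity $\mathbb{E}[(X_t-\hat X_t)^2]=\tfrac{\sigma^2}{2\theta}(1-e^{-2\theta\Delta_t})$ holds only for signal-agnostic policies; this is exactly Lemma~\ref{lem_estimation_error}, and the paper explicitly notes in its proof that the key step fails for signal-aware $\pi$ because $(X_t-\hat X_t)^2$ is then correlated with the random interval endpoints $(D_i,D_{i+1})$. So the monotonicity-of-age argument you propose does not go through, and the obstacle you acknowledge at the end is real and unresolved.

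The paper replaces this step with the MMSE orthogonality principle. It introduces a virtual policy $\pi''$ in which both $(S_i,X_{S_i})$ and $(S_i',X_{S_i'})$ are delivered to the estimator at time $D_i$; by the strong Markov property the MMSE estimator under $\pi''$ coincides with $\hat X_t^{\pi'}$, and since $\hat X_t^{\pi}$ is measurable with respect to $\pi''$'s information, orthogonality gives the Pythagorean identity
\[
\mathbb{E}\!\left[\int_{D_i}^{D_{i+1}}\!(X_t-\hat X_t^{\pi})^2\,dt\right]
=\mathbb{E}\!\left[\int_{D_i}^{D_{i+1}}\!(X_t-\hat X_t^{\pi'})^2\,dt\right]
+\mathbb{E}\!\left[\int_{D_i}^{D_{i+1}}\!(\hat X_t^{\pi'}-\hat X_t^{\pi})^2\,dt\right],
\]
which yields the desired inequality without any appeal to the age-variance formula and is immune to the signal-aware correlations.
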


%\ifreport
\begin{proof}
See Appendix \ref{app_lem1}. 
\end{proof}
%The proof of Lemma \ref{lem_zeroqueue} is  similar to that of \cite[Lemma 1]{Sun_reportISIT17}. It is provided in 
A similar result was obtained in \cite{2020Sun} for the sampling of Wiener processes.
By Lemma \ref{lem_zeroqueue}, there is no loss to consider a sub-class of sampling policies $\Pi_1\subset\Pi$ such that each  sample is generated and sent out  after all previous samples are delivered, i.e., 
%\else
%The proof of Lemma \ref{lem_estimation_error} is relegated to \cite{Ornee2018}. %The proof of Lemma \ref{lem_zeroqueue} is  similar to that of \cite[Lemma 1]{Sun_reportISIT17}. It is provided in 
%A similar result was obtained in \cite{Sun_reportISIT17} for the sampling of Wiener processes.
%By Lemma \ref{lem_zeroqueue}, there is no loss to consider a sub-class of sampling policies $\Pi_1\subset\Pi$ such that each  sample is generated and sent out  after all previous samples are delivered, i.e., 
%\fi
\begin{align}
\Pi_1 = \{\pi\in\Pi: S_{i} = G_{i} \geq D_{i-1} \text{ for all $i$}\}. \nonumber
\end{align}%In particular, sample $i$ is delivered before sample $i+1$ is generated and sent out, i.e., $D_{i} \leq S_{i+1} = G_{i+1}$. 
%This completely eliminates the waiting time wasted in the queue, and hence the queue is always kept empty. 
For any policy $\pi\in\Pi_1$, the \emph{information} used for determining $S_i$ includes: (i) the history of signal values $(X_t: t\in[0, S_i])$ and (ii) the service times  $(Y_1,\ldots, Y_{i-1})$ of previous samples. Let us define the $\sigma$-fields $\mathcal{F}_t = \sigma(X_s: s\in[0, t])$ and $\mathcal{F}_t^+ = \cap_{r>t}\mathcal{F}_r$. 
Then, $\{\mathcal{F}_t^+,t\geq0\}$ is the {filtration} (i.e., a non-decreasing and right-continuous family of $\sigma$-fields) of the OU process $X_t$.
%, and $\{\mathcal{F}_C (i), i = 1,2,\ldots\}$ is the \emph{filtration} of the channel transmission process. Then, 
Given the service times   $(Y_1,\ldots, Y_{i-1})$ of previous samples, $S_{i} $ is a \emph{stopping time} with respect to the filtration $\{\mathcal{F}_t^+,t\geq0\}$ of the OU process $X_t$, that is
\begin{align}
[\{S_{i}\leq t\} | Y_1,\ldots, Y_{i-1}] \in \mathcal{F}_t^+.\label{eq_stopping}
\end{align}  
Hence, the policy space $\Pi_1$ can be expressed as
\begin{align}\label{eq_policyspace}
\!\!\!\!\Pi_1 = & \{S_i : [\{S_{i}\leq t\} | Y_1,\ldots, Y_{i-1}] \in \mathcal{F}_t^+, \nonumber\\
&~~~~~~~\text{$T_i$ is a regenerative process} \}.\!\!
\end{align}  
Let $Z_i = S_{i+1} - D_{i}\geq0$ represent the \emph{waiting time} between the delivery time $D_{i}$ of the $i$-th sample and the generation time $S_{i+1}$ of the $(i+1)$-th sample. Then, $S_i =  \sum_{j=0}^{i-1} (Y_{j} + Z_j)$ and $D_i = \sum_{j=0}^{i-1} (Y_{j} + Z_j) + Y_i$ for each $i=1,2,\ldots$ 
Given $(Y_0,Y_1,\ldots)$, $(S_1,S_2,\ldots)$ is uniquely determined by $(Z_0,Z_1,\ldots)$. Hence, one can also use $\pi = (Z_0,Z_1,\ldots)$ to represent a sampling policy. 

Because $\{X_t-\hat X_t, t\in[D_i,D_{i+1})\}$ and $\{O_{t-S_i}, t\in[D_i,D_{i+1})\}$ are of the same distribution,  for each $i = 1,2,\ldots$,
\begin{align}\label{eq_integral}
&\mathbb{E}\left[\int_{D_{i}}^{D_{i+1}} (X_t-\hat X_t)^2dt\right] \nonumber\\
= & \mathbb{E}\left[\int_{D_{i}}^{D_{i+1}} O_{t-S_i}^2dt\right] 
= \mathbb{E}\left[\int_{Y_{i}}^{Y_i+Z_i+Y_{i+1}} O_{s}^2ds\right].  
\end{align}
Because $T_i $ is a regenerative process, the renewal theory \cite{Ross1996} tells us that $\frac{1}{n} 
\mathbb{E}[S_n]$ is a convergent sequence and \begin{align}
&\limsup_{T\rightarrow \infty}\frac{1}{T}\mathbb{E}\left[\int_0^{T} (X_t-\hat X_t)^2dt\right] \nonumber
\\
=& \lim_{n\rightarrow \infty}\frac{\mathbb{E}\left[\int_0^{D_n} (X_t-\hat X_t)^2dt\right]}{\mathbb{E}[D_n]} \nonumber\\
=& \lim_{n\rightarrow \infty}\frac{\sum_{i=1}^n  \mathbb{E}\left[\int_{Y_{i}}^{Y_i+Z_i+Y_{i+1}} O_{s}^2ds\right]}{\sum_{i=1}^n \mathbb{E}\left[Y_i+Z_i\right]}.
\end{align}
Hence, \eqref{eq_DPExpected} can be rewritten as the following MDP:
\begin{align}\label{eq_Simple}
{\mathsf{mse}}_{\text{opt}}=&\inf_{\pi\in\Pi_1} \lim_{n\rightarrow \infty}\frac{\sum_{i=1}^n  \mathbb{E}\left[\int_{Y_{i}}^{Y_i+Z_i+Y_{i+1}} O_{s}^2ds\right]}{\sum_{i=1}^n \mathbb{E}\left[Y_i+Z_i\right]} \\
&~\text{s.t.}~\lim_{n\rightarrow \infty} \frac{1}{n} 
\sum_{i=1}^n \mathbb{E}\left[Y_i+Z_i\right]\geq \frac{1}{f_{\max}},\nonumber
\end{align}
where ${\mathsf{mse}}_{\text{opt}}$ is the optimal  value of \eqref{eq_Simple}.

\subsection{Reformulation of Problem \eqref{eq_Simple}}

In order to solve \eqref{eq_Simple}, let us consider the following MDP with a parameter $c\geq 0$:
%In order to solve \eqref{eq_Simple}, let us consider the following Markov decision with a parameter $c$:
\begin{align}\label{eq_SD}
\!\!\!\!\!\!h(c)\!=\!\inf_{\pi\in\Pi_1}\!&\lim_{n\rightarrow \infty}\frac{1}{n}\sum_{i=1}^n\!\mathbb{E}\!\left[\int_{Y_{i}}^{Y_i+Z_i+Y_{i+1}} \!\!O_{s}^2ds-c(Y_i+Z_i)\right]\!\!\!\!\\
\text{s.t.}~&\lim_{n\rightarrow \infty} \frac{1}{n} 
\sum_{i=1}^n \mathbb{E}\left[Y_i+Z_i\right]\geq \frac{1}{f_{\max}},\nonumber
\end{align}
where $h(c)$ is the optimum  value of \eqref{eq_SD}. 
Similar with Dinkelbach's method  \cite{Dinkelbach67} for nonlinear fractional programming, the following lemma holds for the MDP \eqref{eq_Simple}:
%Similar with \cite{Sun_reportISIT17}, the following relationship between \eqref{eq_Simple} and \eqref{eq_SD}:
\begin{lemma} \cite{2020Sun}\label{lem_ratio_to_minus}
The following assertions are true: 
\begin{itemize}
\vspace{0.5em}
\item[(a).] \emph{${\mathsf{mse}}_{\text{opt}} \gtreqqless c $} if and only if $h(c)\gtreqqless 0$. 
\vspace{0.5em}
\item[(b).] If $h(c)=0$, the solutions to \eqref{eq_Simple}
and \eqref{eq_SD} are identical. 
\end{itemize}
\end{lemma}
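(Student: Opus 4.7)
The plan is to follow the classical Dinkelbach reduction for linear-fractional programming, adapted to the regenerative semi-Markov setting. First I would set up per-stage limits: for every feasible $\pi \in \Pi_1$, the regenerative-process assumption in the definition of $\Pi$, together with $0 < \EE[Y_i]<\infty$, guarantees the existence of
\begin{align}
\bar A(\pi) &:= \lim_{n\to\infty} \frac{1}{n}\sum_{i=1}^n \EE\!\left[\int_{Y_i}^{Y_i+Z_i+Y_{i+1}} O_s^2\, ds\right], \nonumber\\
\bar B(\pi) &:= \lim_{n\to\infty} \frac{1}{n}\sum_{i=1}^n \EE[Y_i+Z_i] \ \geq\ \EE[Y_i] > 0.
\end{align}
Writing $\rho(\pi):=\bar A(\pi)/\bar B(\pi)$ for the objective of \eqref{eq_Simple} and $\eta(\pi,c):=\bar A(\pi)-c\bar B(\pi)$ for that of \eqref{eq_SD}, the trivial identity
\begin{equation}
\eta(\pi,c) = \bar B(\pi)\bigl(\rho(\pi)-c\bigr) \label{eq_plan_identity}
\end{equation}
will be the fulcrum of the whole argument, since the factor $\bar B(\pi)$ is strictly positive and the feasibility set (the rate constraint) is the same in \eqref{eq_Simple} and \eqref{eq_SD}.

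For part (a), the identity \eqref{eq_plan_identity} shows that, policy-by-policy, $\rho(\pi)\gtreqqless c$ iff $\eta(\pi,c)\gtreqqless 0$. Taking infima, I would argue in both directions: if $\mathsf{mse}_{\text{opt}}\geq c$ then every feasible $\pi$ satisfies $\rho(\pi)\geq c$, hence $\eta(\pi,c)\geq 0$ by \eqref{eq_plan_identity}, and $h(c)=\inf_\pi\eta(\pi,c)\geq 0$; conversely, $h(c)\geq 0$ forces $\eta(\pi,c)\geq 0$ for every feasible $\pi$, which by \eqref{eq_plan_identity} and $\bar B(\pi)>0$ yields $\rho(\pi)\geq c$ and hence $\mathsf{mse}_{\text{opt}}\geq c$. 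The $\leq$ assertion is proved symmetrically using a minimizing sequence $\pi_k$ with $\rho(\pi_k)\downarrow\mathsf{mse}_{\text{opt}}$; combining both directions yields the equality version.

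For part (b), assume $h(c)=0$. By part (a) this forces $\mathsf{mse}_{\text{opt}}=c$. If $\pi^\star$ attains $h(c)=0$, then $\eta(\pi^\star,c)=0$, so \eqref{eq_plan_identity} gives $\rho(\pi^\star)=c=\mathsf{mse}_{\text{opt}}$, making $\pi^\star$ optimal for \eqref{eq_Simple}. Conversely, any $\pi^\star$ optimal for \eqref{eq_Simple} has $\rho(\pi^\star)=c$, whence $\eta(\pi^\star,c)=0=h(c)$, making it optimal for \eqref{eq_SD}. Thus the two optimal sets coincide.

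The argument is essentially structural, and I do not expect a serious obstacle; the only point requiring care is the limit-passing in the $\leq$ half of part (a), where one must rule out the degenerate case that $\bar B(\pi_k)\to\infty$ along the minimizing sequence (which could make $\bar B(\pi_k)(\rho(\pi_k)-c)$ fail to tend to zero). This is handled by observing that $\rho(\pi)\to \mathsf{mse}_\infty$ as $\bar B(\pi)\to\infty$ (arbitrarily rare sampling saturates the estimation error), so near-optimal policies for any $c<\mathsf{mse}_\infty$ have uniformly bounded $\bar B$; the case $c\geq \mathsf{mse}_\infty$ is trivial because $\eta(\pi,c)\leq 0$ for all $\pi$.
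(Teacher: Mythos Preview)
The paper does not prove this lemma itself; it simply cites \cite{2020Sun}, so there is no in-paper proof to compare against. Your argument is the classical Dinkelbach reduction, which is exactly the method that reference uses, and your identity \eqref{eq_plan_identity} together with the uniform lower bound $\bar B(\pi)\geq \EE[Y_i]>0$ is the right engine. The proposal is correct and matches the intended approach; the only remark is that the delicate case you isolate (ruling out $\bar B(\pi_k)\to\infty$ along a minimizing sequence when $\mathsf{mse}_{\text{opt}}=c$) can also be dispatched more cheaply by observing that Lemma~\ref{lem_stop} gives $\rho(\pi)\leq \mathsf{mse}_\infty$ for every $\pi$, so your two-case split on $c$ versus $\mathsf{mse}_\infty$ goes through without needing the asymptotic claim $\rho(\pi)\to\mathsf{mse}_\infty$.
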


Hence, the solution to \eqref{eq_Simple} can be obtained by solving \eqref{eq_SD} and seeking $c=\mathsf{mse}_{\text{opt}}\geq 0$ such that 
\begin{align}\label{eq_c}
h(\mathsf{mse}_{\text{opt}})=0.
\end{align}

\subsection{Lagrangian Dual Problem of  \eqref{eq_SD}  }

Next, we use the Lagrangian dual approach to solve \eqref{eq_SD} with $c=\mathsf{mse}_{\text{opt}}$. 
We define the  Lagrangian associated with \eqref{eq_SD} as
\begin{align} \label{lagrange}
& L(\pi;\lambda) \nonumber\\
=
& \lim_{n\rightarrow \infty}\frac{1}{n}\sum_{i=1}^n\mathbb{E}\bigg[\int_{Y_{i}}^{Y_i+Z_i+Y_{i+1}}\!\! O_{s}^2ds-(\mathsf{mse}_{\text{opt}}+\lambda)(Y_i\!+\!Z_i)\bigg] \nonumber\\
&+ \frac{\lambda}{f_{\max}},
\end{align}
where $\lambda\geq 0$ is the dual variable.  
Let
\begin{align}\label{eq_primal}
e(\lambda) = \inf_{\pi\in\Pi_1}  L(\pi;\lambda).
\end{align}
Then, the  dual problem of \eqref{eq_SD} is defined by
\begin{align}\label{eq_dual}
& d=\max_{\lambda\geq 0}e(\lambda),
\end{align}
where $d$ is the optimum value of \eqref{eq_dual}. Weak duality \cite{Bertsekas2003} implies $d \leq h(\mathsf{mse}_{\text{opt}})$. In Section \ref{sec:dual}, we will establish strong duality, i.e., $d = h(\mathsf{mse}_{\text{opt}})$.

In the sequel, we decompose \eqref{eq_primal}  into a sequence of  mutually independent per-sample MDPs. Let us define
\begin{align}\label{eq_beta_value1}
\beta &= \mathsf{mse}_{\text{opt}}+\lambda.
\end{align}
%\ifreport
%\begin{proof}
%See Appendix \ref{app_lem1}. 
%\end{proof}
As shown in Appendix \ref{app_eq_simplification_1}, by using Lemma \ref{lem_stop}, we can obtain
%\else
%By using Lemma \ref{lem_stop}, we can obtain
%\fi
\begin{align} \label{eq_simplification_1}
&\mathbb{E}\left[\int_{Y_{i}}^{Y_i+Z_i+Y_{i+1}} O_{s}^2ds\!-\! \beta(Y_i\!+\!Z_i)\right] \nonumber\\
=& \mathbb{E}\left[\int_{Y_{i}}^{Y_i+Z_i} (O_{s}^2-\beta) ds+ \gamma  O_{Y_i+Z_i}^2\right] \nonumber\\
&+ \frac{\sigma^2}{2\theta}[\EE (Y_{i+1}) -\gamma]- \beta \EE [Y_{i+1}],
\end{align}
%where % $\gamma $ is defined as
%%\begin{align}\label{
%%&\gamma  = \frac{1}{2\theta} \EE \left[1-e^{-2\theta Y_{i}} \right], \label{eq_gamma1}
%%\end{align}
%\begin{align}
%a = \frac{\sigma^2}{2\theta}[\EE (Y_{i+1}) -\gamma]- \beta \EE [Y_{i+1}].  
%%\left(\frac{\sigma^2}{2\theta} - \beta \right)- \frac{\sigma^2}{{4\theta^2}}\EE \left[1-e^{-2{\theta}Y_{i+1}} \right].
%\end{align}
where $\gamma$ is defined in \eqref{eq_gamma1}.
For any $s\geq 0$, define the $\sigma$-fields $\mathcal{F}^{s}_t = \sigma(O_{s+r}-O_{s}: r\in[0, t])$ and  the right-continuous filtration $\mathcal{F}_t^{s+} = \cap_{r>t}\mathcal{F}_r^{s}$. Then, $\{\mathcal{F}^{s+}_t,t\geq0\}$ is the {filtration}  of the {time-shifted OU process} $\{O_{s+t}-O_{s},t\in[0,\infty)\}$. Define ${\mathfrak{M}}_s$ as the set of integrable stopping times of $\{O_{s+t}-O_{s},t\in[0,\infty)\}$, i.e.,
\begin{align}
\mathfrak{M}_{s} = \{\tau \geq 0:  \{\tau\leq t\} \in \mathcal{F}^{s+}_t, \mathbb{E}\left[\tau\right]<\infty\}.
\end{align}
By %substituting  %Lemmas \ref{lem_decompose}-\ref{lem_stop} %, %Theorem \ref{thm_zero_gap},
%  \eqref{eq_integral} into  \eqref{eq_primal}  and 
using  a sufficient statistic of \eqref{eq_primal},  
we can  obtain
\begin{lemma}\label{thm_solution_form}
%If $c = {\mathsf{mse}}_{\text{opt}}$, a
An optimal solution  $(Z_0,Z_1,\ldots)$ %and $\lambda$ 
to %\eqref{eq_Simple}, \eqref{eq_SD}, and equivalently 
\eqref{eq_primal} %-\eqref{eq_dual} 
satisfies 
\begin{align}\label{eq_opt_stopping}
&\inf_{Z_i\in \mathfrak{M}_{Y_i}}\mathbb{E}\left[\int_{Y_i}^{Y_i+Z_i} \!\!\!(O_{s}^2-\beta)ds + \gamma  O_{Y_i+Z_i}^2\bigg|  O_{Y_i},Y_i \right],
\end{align}
where  $\beta\geq0$ and $\gamma\geq0$ are defined in \eqref{eq_beta_value1} and \eqref{eq_gamma1}, respectively.
\end{lemma}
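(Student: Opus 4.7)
The plan is to reduce the Lagrangian minimization \eqref{eq_primal} to a per-sample optimal stopping problem in three steps, after which conditioning on a sufficient statistic will deliver \eqref{eq_opt_stopping}.

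First, I would substitute the identity \eqref{eq_simplification_1} into the Lagrangian \eqref{lagrange} with $\beta = \mathsf{mse}_{\text{opt}} + \lambda$. All terms of the form $(\sigma^2/2\theta)[\EE(Y_{i+1}) - \gamma] - \beta\EE[Y_{i+1}]$, as well as $\lambda/f_{\max}$, are constants independent of the sampling policy, since $\{Y_i\}$ is i.i.d. and independent of $\pi$. Dropping these policy-independent constants reduces \eqref{eq_primal} to minimizing
\begin{align}
\lim_{n\to\infty}\frac{1}{n}\sum_{i=1}^n \EE\left[\int_{Y_i}^{Y_i+Z_i}(O_s^2 - \beta)\,ds + \gamma\, O_{Y_i+Z_i}^2\right]
\end{align}
over $\pi\in\Pi_1$.

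Second, I would decouple the sum and apply a sufficient-statistic argument. Because $\{Z_i\}$ forms a regenerative process in the policy class \eqref{eq_policyspace}, the service times $\{Y_i\}$ are i.i.d., and the error during the $i$-th inter-delivery interval is a fresh OU realization independent across $i$, each summand depends on $\pi$ only through $Z_i$ and through the OU realization attached to sample $i$. Hence the infimum of the Cesaro average coincides with the sum of per-sample infima, reducing the problem to minimizing $\EE[\int_{Y_i}^{Y_i+Z_i}(O_s^2 - \beta)\,ds + \gamma O_{Y_i+Z_i}^2]$ over $Z_i \in \mathfrak{M}_{Y_i}$ for each $i$. Then, by the Markov property of the OU process, $O_{Y_i}$ is a sufficient statistic for the post-$Y_i$ future of $O$, and $Y_i$ is independent of that future; applying the tower property conditional on $(O_{Y_i}, Y_i)$ shows that minimizing the unconditional expectation is achieved by minimizing the conditional expectation almost surely, which is precisely \eqref{eq_opt_stopping}.

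The main obstacle will be the decoupling step: rigorously justifying that the time-average infimum equals a sum of per-sample infima within the restricted class $\Pi_1$. This requires verifying that $\Pi_1$ is closed under replacing each $Z_i$ by a stopping time depending only on $(O_{Y_i}, Y_i)$, without breaking the regenerative condition of \eqref{eq_policyspace} or the finite-mean condition on inter-sampling times. Once this closedness is established, the decoupling follows because, after the simplification \eqref{eq_simplification_1}, no $Z_j$ with $j\neq i$ appears in the $i$-th summand, so an improvement in any single coordinate strictly improves the average.
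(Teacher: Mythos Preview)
Your proposal is correct and follows essentially the same approach as the paper: both substitute \eqref{eq_simplification_1} into the Lagrangian to isolate the per-sample cost, then invoke the sufficient-statistic property of $(O_{Y_i},Y_i)$ (via the i.i.d.\ service times and the Markov property of the OU process) to decouple \eqref{eq_primal} into independent per-sample optimal stopping problems. Your outline is in fact more detailed than the paper's own proof, which dispatches the result in four sentences and does not explicitly address the regenerative-closedness concern you raise.
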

%\ifreport
\begin{proof}
See Appendix \ref{app_thm_solution_form}. 
\end{proof}
%\else
%Lemma \ref{thm_solution_form} is proven in \cite{Ornee2018}.
%\fi
%In this proof of Theorem \ref{thm_optimal_stopping}, we have shown that the $Z_i$'s are  \emph{i.i.d.} 
% and hence the expectations in \eqref{eq_equation} %and \eqref{thm_3_obj} 
% remain constant for all $i=1,2,\ldots$
By this, \eqref{eq_primal} is decomposed as  a series of per-sample  MDP \eqref{eq_opt_stopping}.

%
%
%
%\ifreport
%\section{Proof of the Main Result}\label{sec_proof}
%
%
%
%
\ifreport
\subsection{Analytical Solution to Per-Sample MDP \eqref{eq_opt_stopping}}
\else
\section{A Key Proof Step of Theorem \ref{thm1}: Optimal Per-Sample Stopping Rule}\label{sec_proof}

We prove Theorem 1 in four steps: (i) We first show that sampling should be suspended when the server is busy, which can be used to simplify \eqref{eq_DPExpected}. (ii) We use an extended Dinkelbach's method \cite{Dinkelbach67} and Lagrangian duality method to decompose the simplified problem into a series of mutually independent per-sample MDP. (iii) We utilize the free boundary method in the optimal stopping theory \cite{Peskir2006} to solve the per-sample MDPs analytically. (iv) Finally, we use a geometric multiplier method \cite{Bertsekas2003} to show that the duality gap is zero. The above proof framework is an extension to that used in \cite{SunNonlinear2019, SunISIT2017, Sun_reportISIT17}. Due to space limitation, we present the most challenging part of the proof in this section, which is step (iii) on finding the analytical solution of the per-sample MDP. The other steps are provided in our technical report \cite{Ornee2018}. 
\fi
%
%\else
%\ifreport
%The proof of \eqref{thm_optimal_stopping} is given in \cite{Ornee2018}. An immediate consequence of Theorem \ref{thm_optimal_stopping} is
%
\ifreport
We solve \eqref{eq_opt_stopping} by using the free-boundary approach for optimal stopping problems \cite{Peskir2006}.
\else 
{}
\fi

Let us consider an OU process $V_t$ with  initial state $V_0=v $ and parameter $\mu=0$. Define the $\sigma$-fields $\mathcal{F}^{V}_t = \sigma(V_{s}: s\in[0, t])$, $\mathcal{F}^{V+}_t = \cap_{r>t}\mathcal{F}_r^{V}$, and the {filtration} $\{\mathcal{F}^{V+}_t,t\geq0\}$ associated to $\{V_t,t\geq0\}$. Define ${\mathfrak{M}}_V$ as the set of integrable stopping times of $\{V_t,t\in[0,\infty)\}$, i.e.,
\begin{align}
\mathfrak{M}_{V} = \{\tau \geq 0:  \{\tau\leq t\} \in \mathcal{F}^{V+}_t, \mathbb{E}\left[\tau\right]<\infty\}.
\end{align}
%%
%%\begin{align}\label{eq_Markov}
%% = (Q_t, U_t), 
%%\end{align}
%%where $U_t$ is the OU process in \eqref{eq_OU} and 
%%\begin{align}
%%Q_t = \int_0^t U_s^2 ds.
%%\end{align}
%Define the $\sigma$-field $\mathcal{F}^{s}_t = \sigma(V_{v}: s\leq v\leq s+t)$ and the {filtration} $\{\mathcal{F}^{s}_t,t\geq0\}$. In addition, define ${\mathfrak{M}}_s$ as the set of square-integrable stopping times of $\{V_{t}\}_{t\geq s}$, i.e.,
%\begin{align}
%\mathfrak{M}_{s} = \{\tau \geq 0:  \{\tau\leq t\} \in \mathcal{F}^{s}_t, \mathbb{E}\left[\tau^2\right]<\infty\}.\nonumber
%\end{align}
%Let us define the function $g: \mathbb{R}^3\rightarrow \mathbb{R}$ as
%\begin{align}\label{eq_fun}
%g(q, u, t) = - q + \gamma  u^2 +\beta t,
%\end{align}
%where $\gamma ,\beta\geq 0$ are parameters.
\ifreport
Our
\else
In step (iii), our
\fi
 goal is to solve the following optimal stopping problem for any given initial state $v\in \mathbb{R}$
\begin{align}\label{eq_stop_problem}
%\!\!W_*(v) \!=
\!\sup_{\tau \in \mathfrak{M}_{V} } \mathbb{E}_v\left[-\gamma  V_{\tau}^2 - \int_{0}^{\tau} \!\!\!(V_{s}^2-\beta)ds\right],
\end{align}
%where $\mathbb{E}_v[\cdot]$ is the conditional expectation for given initial state $V_0 =v$ 
\ifreport
where $\mathbb{E}_v[\cdot]$ is the conditional expectation for given initial state $V_0 =v$, $\gamma$ and $\beta$ are given by \eqref{eq_gamma1} and \eqref{eq_beta_value1}, respectively. Hence, \eqref{eq_opt_stopping} is one instance of  \eqref{eq_stop_problem} with $v= O_{Y_i}$, where the supremum is taken over all stopping times $\tau$ of $V_t$. In this subsection, we focus on the case that $\beta$ in \eqref{eq_stop_problem} satisfies $\mathsf{mse}_{Y_i} \leq {\beta} < \mathsf{mse}_{\infty}$. Later on in Section \ref{sec:dual}, we will show that this condition is indeed satisfied by the optimal solution to \eqref{eq_SD}.
\else
where $\mathbb{E}_v[\cdot]$ is the conditional expectation for given initial state $V_0 =v$, $\gamma\geq0$ is given in \eqref{eq_gamma1}, and $\beta\geq0$ is given by
\begin{align}\label{eq_beta_value}
& \beta = \mathsf{mse}_\text{opt} + \lambda,
\end{align}
\fi
%where $\lambda \geq 0$ is the dual variable associated to the sampling rate constraint. 

In order to solve \eqref{eq_stop_problem}, we first find a candidate solution to \eqref{eq_stop_problem} by solving a free boundary problem; then we prove that the free boundary solution is indeed the value function of \eqref{eq_stop_problem}:
 %\fi
% \ifreport
%
%\else
%{}
%\fi
%where the supremum is taken over all stopping times $\tau$ of $V_t$. 

\subsubsection{A Candidate Solution to \eqref{eq_stop_problem}}
Now, we show how to solve \eqref{eq_stop_problem}. 
The general optimal stopping theory in Chapter I of \cite{Peskir2006} tells us that the following guess of the stopping time should be optimal for Problem \eqref{eq_stop_problem}: 
\begin{align}\label{eq_optimal_stopping123}
\tau_{*} = \inf\{t \geq 0: |V_{t}|\geq v_* \},
\end{align}
where $v_*\geq 0$ is the optimal stopping threshold to be found.
Observe that in this guess, the continuation region $(-v_*,v_*)$ is assumed symmetric around zero. This is because the  OU process is symmetric, i.e., the process $\{-V_t,t\geq 0\}$ is also an OU process started at $-{V_0} = -v$. Similarly, we can also argue that the value function of problem \eqref{eq_stop_problem} should be even.

%The problem \eqref{eq_stop_problem} is symmetrical in $U_t$, and therefore the following stopping time should be optimal:
%{\red We can add some discussion from Section 10.2 of Peskir and Shiryaev's book:}
According to \cite[Chapter 8]{Peskir2006}, and \cite[Chapter 10]{Bernt2000},  the value function %$W_*(v)$ for $v\in(-v_*,v_*)$ 
and the optimal stopping threshold $v_*$ should satisfy the following free boundary  problem:
\begin{align}\label{eq_free1}
&%\mathcal{G} W(v) 
\frac{\sigma^2}{2} H''(v) -\theta v H'(v) 
=  v^2 - \beta,~ ~~ v\in(-v_*,v_*),\\
&H(\pm v_*) =  -\gamma  v_*^2, \label{eq_free2}\\
&H'(\pm v_*) = \mp 2 \gamma  v_*.\label{eq_free3}
\end{align}
%\subsubsection{Solution to  \eqref{eq_free1}-\eqref{eq_free3}}
%Next, we find an even function $W(v)$ and a $v_*\geq0$ that satisfy \eqref{eq_free1}-\eqref{eq_free3}.
\ifreport
In Appendix \ref{app_eq_W},
\else
In \cite{Ornee2018},
\fi
 we use the integrating factor method \cite[Sec. I.5]{Amann1990} to find 
the general solution to \eqref{eq_free1}, which is given by
\begin{align}\label{eq_W}
H(v) =& -\frac{v^2}{2\theta} + \left({\frac{1}{2{\theta}}} - {\frac{\beta}{{\sigma}^2}}\right) 
{}_2F_2\left(1,1;\frac{3}{2},2;\frac{\theta}{\sigma^2}v^2\right) {v^2}\nonumber\\
& + C_1\text{erfi}\left(\frac{\sqrt{\theta}}{\sigma}v\right) + C_2,~ \quad \quad v\in(-v_*,v_*),
\end{align}
where $C_1$ and $C_2$ are constants to be found for satisfying \eqref{eq_free2}-\eqref{eq_free3}, 
%$_2F_2$ is the generalized hypergeometric function \cite{Exton1978}, 
%given as
%\begin{align}
%{}_2F_2\left(1,1;\frac{3}{2},2;x\right) = \sum_{n=0}^\infty \frac{1}{2x} \frac{\Gamma(n+1, -x)}{n! (2n+1)} (-1)^n,
%\end{align}
 and {erfi}$(x)$ is the imaginary error function, i.e.,
\begin{align}\label{eq_erfi}
\text{erfi}(x) = \frac{2}{\sqrt \pi} \int_0^x e^{t^2} dt.
\end{align}
Because $H(v)$ should be even but {erfi}$(x)$ is odd, we should choose $C_1=0$. Further, in order to satisfy the boundary condition \eqref{eq_free2}, $C_2$ is chosen as 
\begin{align} %\label{c_2}
\!\!\!\!\!\!C_2 = \!\! \frac{1}{2\theta}\EE \left(e^{-2\theta Y_{i}} \right)  v_*^2\!  -\! \left({\frac{1}{2{\theta}}}\! - \!{\frac{\beta}{{\sigma^2}}}\right)\!\! {}_2F_2\big(1,1;\frac{3}{2},2;\frac{\theta}{\sigma^2}v_*^2\big){v_*^2}, \!\!\!\!\!\!
\end{align}
where we have used \eqref{eq_gamma1}.
%such that the boundary condition \eqref{eq_free2} is satisfied. 
With this, the expression of $H(v)$ is obtained in the continuation region ($-v_*$, $v_*$). In the stopping region $|v|\geq v_*$, the stopping time in \eqref{eq_optimal_stopping123} is simply $\tau_*=0$, because $|V_0| = |v|\geq v_*$. Hence, if $|v|\geq v_*$, the objective value  achieved by the sampling time \eqref{eq_optimal_stopping123} is
\begin{align} \label{val_new}
\!\! \mathbb{E}_v\left[-\gamma  v^2 -\! \int_{0}^{0} \!\!\!(V_{s}^2-\beta)ds\right] \!= \!-\gamma  v^2 . \!\!\!\! 
\end{align} 
Combining \eqref{eq_W}-\eqref{val_new}, we obtain a candidate of the value function for \eqref{eq_stop_problem}:
\begin{align}\label{eq_W1}
\!\! \!\! H(v)  = \left\{\!\! \begin{array}{l l}-\frac{v^2}{2\theta} + \left({\frac{1}{2{\theta}}} - {\frac{\beta}{{\sigma}^2}}\right) 
{}_2F_2\big(1,1;\frac{3}{2},2;\!\!&\!\!\!\!\frac{\theta}{\sigma^2}v^2\big) {v^2}+ C_2, \\
& \text{ if }~ |v|<v_*,\\
-\gamma  v^2, & \text{ if }~ |v|\geq v_*. \end{array}\right. \!\!\!\!\!\!
\end{align}

Next, we find a candidate value of the optimal stopping threshold $v_*$. 
By taking the gradient of $H(v)$, we get
\begin{align}\label{H_gradient}
\!\!\! H'(v) = -\frac{v}{\theta} + \left({\frac{\sigma}{{\theta}^{\frac{3}{2}}}} - {\frac{2\beta}{{\sigma}{\sqrt{\theta}}}}\right) F\left(\frac{\sqrt \theta}{\sigma}v\right),~ ~~ v\in(-v_*,v_*), \!\!\!
\end{align}
where 
\begin{align}
 F(x) = e^{x^2} \int_0^{x} e^{-t^2} dt.  
\end{align}
The boundary condition \eqref{eq_free3} implies that $v_*$ is the root of 
\begin{align}\label{eq_threshold1}
 -\frac{v}{\theta} + \left({\frac{\sigma}{{\theta}^{\frac{3}{2}}}} - {\frac{2\beta}{{\sigma}{\sqrt{\theta}}}}\right) F\left(\frac{\sqrt \theta}{\sigma}v\right) = - 2\gamma v.
 \end{align}
Substituting \eqref{eq_mse_Yi}, \eqref{eq_mse_infty}, and \eqref{eq_gamma1} into \eqref{eq_threshold1}, yields that $v_*$ is the root of 
\begin{align}\label{eq_threshold11}
& \left({\mathsf{mse}_{\infty} - {\beta}}\right) G{\bigg(\frac{\sqrt{\theta}}{\sigma} v\bigg)} = {\mathsf{mse}_{\infty} - \mathsf{mse}_{Y_i}},
% \left({1} - {\frac{2\theta\beta}{{\sigma^2}}}\right) G\left(\frac{\sqrt \theta}{\sigma}v\right) = \EE \left(e^{-2\theta Y_i}\right),
 \end{align}
where $G(\cdot)$ is defined in \eqref{eq_g1}. Because $\mathsf{mse}_{Y_i} \leq {\beta} < \mathsf{mse}_{\infty}$, $G(x)$ is strictly increasing on $[0, \infty)$, and $G(0) =1$, we know that \eqref{eq_threshold11} has a unique non-negative root $v_*$. Further, the root $v_*$ can be expressed as a function $v(\beta)$ of $\beta$, where $v(\beta)$ is defined in \eqref{eq_threshold}. By this, we obtain a candidate solution to \eqref{eq_stop_problem}.

\subsubsection{Verification of the Optimality of the Candidate Solution}
Next, we use It\^{o}'s formula to verify the above candidate solution is indeed optimal, as stated in the following theorem: 
%\subsubsection{Optimality of $\tau_{*}$}
%The remaining task is to verify that  $W(v)$ is indeed the value function of \eqref{eq_stop_problem} and the stopping time $\tau_{*}$ is optimal for \eqref{eq_stop_problem}. 
%Next, by using Ito's integral, we can get 

\begin{theorem}\label{thm_optimal_stopping}
If $\mathsf{mse}_{Y_i} \leq {\beta} < \mathsf{mse}_{\infty}$, then for all $v\in \mathbb{R}$, $H(v)$ in \eqref{eq_W1} is the value function of the optimal stopping problem \eqref{eq_stop_problem}. In addition,
the optimal stopping time for solving \eqref{eq_stop_problem} is $\tau_*$ in \eqref{eq_optimal_stopping123}, where $v_*=v(\beta)$ is given by \eqref{eq_threshold}. 
\end{theorem}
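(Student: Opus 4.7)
The plan is to apply a standard verification argument from optimal stopping theory: I would show that the candidate $H(v)$ constructed in \eqref{eq_W1} satisfies four key properties and that these together force it to coincide with the value function, with the supremum attained at $\tau_*$. Specifically, I would verify (i) $H \in C^1(\mathbb{R})$ with piecewise $C^2$ structure, smoothly joined at $\pm v_*$ by the smooth-fit conditions \eqref{eq_free2}--\eqref{eq_free3}; (ii) the \emph{domination} inequality $H(v) \geq -\gamma v^2$ for all $v \in \mathbb{R}$, with equality on the stopping region $\{|v| \geq v_*\}$; (iii) $(\mathcal{G}H)(v) = v^2 - \beta$ on the continuation region $\{|v| < v_*\}$, which holds by construction via \eqref{eq_free1}; and (iv) the \emph{superharmonicity} inequality $(\mathcal{G}H)(v) \leq v^2 - \beta$ on $\{|v| > v_*\}$, where $H(v) = -\gamma v^2$.

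Given (i)--(iv), I would apply It\^o's formula to $H(V_t) - \int_0^t (V_s^2 - \beta) ds$, obtaining
\begin{equation*}
H(V_t) - \int_0^t (V_s^2 - \beta)\,ds = H(v) + \int_0^t \bigl[(\mathcal{G}H)(V_s) - (V_s^2 - \beta)\bigr] ds + \int_0^t \sigma H'(V_s)\, dW_s.
\end{equation*}
By (iii)--(iv) the drift integrand is non-positive, so the left-hand side is a supermartingale. For any $\tau \in \mathfrak{M}_V$, localization by $\tau \wedge n$ combined with optional stopping and (ii) yields
\begin{equation*}
\mathbb{E}_v\!\left[-\gamma V_\tau^2 - \int_0^\tau (V_s^2 - \beta)\,ds\right] \leq \mathbb{E}_v\!\left[H(V_\tau) - \int_0^\tau (V_s^2-\beta)\,ds\right] \leq H(v).
\end{equation*}
For $\tau = \tau_*$, the drift vanishes on $[0,\tau_*]$ by (iii), the boundary is hit so $|V_{\tau_*}|=v_*$ and $H(V_{\tau_*}) = -\gamma V_{\tau_*}^2$ by (ii); hence both inequalities become equalities and the bound is attained, proving $H(v)$ is the value function and $\tau_*$ is optimal.

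The main obstacle is establishing the inequality (iv) on the stopping region. Using \eqref{eq_generator} one computes $\mathcal{G}(-\gamma v^2) = 2\theta\gamma v^2 - \gamma \sigma^2$, so (iv) reduces to the affine-in-$v^2$ inequality $(2\theta\gamma - 1)v^2 \leq \gamma \sigma^2 - \beta$ for $|v| \geq v_*$. From \eqref{eq_gamma1} one has $2\theta\gamma = \mathbb{E}[1-e^{-2\theta Y_i}] \leq 1$, so the coefficient of $v^2$ is non-positive; the inequality therefore becomes tighter as $|v|$ increases, and it suffices to verify it at $|v|=v_*$, where it should hold with equality as a direct algebraic consequence of the free-boundary equation \eqref{eq_threshold11} satisfied by $v_*$ together with the definitions \eqref{eq_mse_Yi}--\eqref{eq_mse_infty}. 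A secondary technical point is the passage to the limit in the localization, which requires uniform integrability of $\{H(V_{\tau \wedge n})\}_n$; this follows from the at-most-quadratic growth of $H$ in \eqref{eq_W1}, the finiteness of second moments of $V_t$ (recall $V_t$ is an OU process), and the hypothesis $\mathbb{E}[\tau]<\infty$. Finally, property (i) follows by inspection from the construction of $C_2$ and the definition of $v_*$ via the smooth-fit conditions, while the inequality in (ii) on $\{|v|<v_*\}$ can be checked by showing that $H(v) + \gamma v^2$ vanishes at $v = \pm v_*$ together with its derivative and is convex in between, or equivalently by invoking the supermartingale inequality for the constant stopping time $\tau \equiv 0$ applied to the already-verified value function candidate.
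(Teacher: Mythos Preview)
Your verification strategy is essentially the one the paper uses: the paper establishes the domination $H\ge -\gamma v^2$ (its Lemma~\ref{lem_stop2}), the superharmonic inequality $\mathcal{G}H\le v^2-\beta$ via It\^o (its Lemma~\ref{lem_stop3}), and the attainment at $\tau_*$ (its Lemma~\ref{lem_stop1}), and then invokes Theorem~1.11 of \cite{Peskir2006}. So the architecture is the same; the differences are in how two of the steps are executed.

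There is, however, a genuine gap in your step (iv). You claim that the inequality $(2\theta\gamma-1)v^2\le\gamma\sigma^2-\beta$ holds \emph{with equality} at $|v|=v_*$ ``as a direct algebraic consequence of the free-boundary equation \eqref{eq_threshold11}.'' This is false in general: the smooth-fit condition \eqref{eq_free3} matches $H$ and $H'$ at $\pm v_*$, but not $H''$, so $\mathcal{G}H$ is allowed to jump there, and it does. Concretely, after rewriting with $\gamma\sigma^2=\mathsf{mse}_{Y_i}$ and $1-2\theta\gamma=\mathbb{E}[e^{-2\theta Y_i}]$, the needed inequality at $v_*$ is
\[
\mathbb{E}\!\left[e^{-2\theta Y_i}\right] v_*^2 \;\ge\; \beta-\mathsf{mse}_{Y_i},
\]
which is \emph{not} equivalent to \eqref{eq_threshold11}; for $\beta$ close to $\mathsf{mse}_\infty$ the left side diverges while the right side stays bounded, so the two equations cannot coincide. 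The paper closes this gap with a separate analytic lemma, $(1-2x^2)G(x)\le 1$ for all $x\ge0$ (Lemma~\ref{lem_G_x}), which combined with \eqref{eq_threshold11} yields the required inequality. You would need an equivalent ingredient. (Also, your phrasing is reversed: since $2\theta\gamma-1\le0$, the inequality becomes \emph{easier}, not tighter, as $|v|$ grows; the binding case is indeed $|v|=v_*$, as you conclude.)

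A smaller point: your proposed route to (ii) on $\{|v|<v_*\}$ is not quite nailed down. Convexity of $H(v)+\gamma v^2$ on $(-v_*,v_*)$ is not obvious from \eqref{eq_free1}, and your ``$\tau\equiv0$'' alternative is circular because the domination is what lets you pass from $H(V_\tau)$ to $-\gamma V_\tau^2$ in the supermartingale bound. The paper instead shows $H'(v)<-2\gamma v$ for $v\in[0,v_*)$ directly from the monotonicity of $G$ and the defining equation \eqref{eq_threshold11}, and integrates back from the boundary value $H(v_*)=-\gamma v_*^2$.
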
 

In order to prove Theorem \ref{thm_optimal_stopping}, we need to establish the following properties of $H(v)$ in \eqref{eq_W1}, for the case that $\mathsf{mse}_{Y_i} \leq {\beta} < \mathsf{mse}_{\infty}$ is satisfied in \eqref{eq_stop_problem}:
\begin{lemma}\label{lem_stop1}
$H(v) = {\mathbb{E}_{v}} {\left[-\gamma  V_{\tau_*}^2 - \int_{0}^{\tau_*} (V_{s}^2-\beta)ds\right]}.$
%W(v) = {\mathbb{E}_{v}} g(V_{\tau_*}),
\end{lemma}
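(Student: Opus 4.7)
The plan is to verify the candidate value function $H(v)$ in \eqref{eq_W1} by directly computing the expectation on the right-hand side using It\^o's/Dynkin's formula, and checking that the boundary and PDE conditions force the two sides to agree.

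First, I would split into two cases depending on whether the initial state $v$ lies in the stopping region or the continuation region. If $|v|\geq v_*$, then by the definition of $\tau_*$ in \eqref{eq_optimal_stopping123} we have $\tau_*=0$, so the right-hand side reduces to $-\gamma v^2$, which matches $H(v)$ by the second branch of \eqref{eq_W1}. So only the case $|v|<v_*$ requires work.

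For $|v|<v_*$, the plan is to apply Dynkin's formula to $H(V_t)$ stopped at $\tau_*$. Because $\tau_*$ is the first exit time of $V_t$ from the bounded interval $(-v_*,v_*)$ and $V_t$ is an OU process, standard estimates (e.g., via \eqref{eq_stop11} applied to $\tau_*$) give $\mathbb{E}_v[\tau_*]<\infty$, so Dynkin's formula applies. Using that $H$ solves the ODE \eqref{eq_free1} on $(-v_*,v_*)$, the infinitesimal generator \eqref{eq_generator} acts on $H$ as $\mathcal{G}H(v)=v^2-\beta$ in the continuation region, giving
\begin{align}
\mathbb{E}_v[H(V_{\tau_*})] = H(v) + \mathbb{E}_v\!\left[\int_0^{\tau_*}(V_s^2-\beta)\,ds\right].
\end{align}
Since $|V_{\tau_*}|=v_*$ almost surely (the OU process has continuous paths and exits through the boundary), the boundary condition \eqref{eq_free2} gives $H(V_{\tau_*})=-\gamma v_*^2 = -\gamma V_{\tau_*}^2$. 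Rearranging yields exactly
\begin{align}
H(v) = \mathbb{E}_v\!\left[-\gamma V_{\tau_*}^2 - \int_0^{\tau_*}(V_s^2-\beta)\,ds\right],
\end{align}
which is the claim.

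The main obstacle is the technical justification of Dynkin's formula: (i) confirming $\mathbb{E}_v[\tau_*]<\infty$, which follows because $v_*$ is finite and $\tau_*$ is the exit time of an OU process from a bounded interval; (ii) confirming that the local-martingale part $\int_0^{\tau_*}\sigma H'(V_s)dW_s$ has zero expectation, for which it suffices that $H'$ is bounded on the compact interval $[-v_*,v_*]$ (which holds since $H$ is smooth there by its explicit hypergeometric form \eqref{eq_W}); and (iii) ensuring integrability of $\int_0^{\tau_*}V_s^2\,ds$, which follows from boundedness of $V_s$ on $[0,\tau_*]$ together with $\mathbb{E}_v[\tau_*]<\infty$. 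Once these points are checked, the computation above is routine and the lemma follows.
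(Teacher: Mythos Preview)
Your proof is correct. However, it takes a somewhat different route from the paper's own argument, and the comparison is worth noting.

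The paper (Appendix~\ref{lem_stop_11}) does not apply Dynkin's formula directly to $H$. Instead, for $|v|<v_*$ it computes each piece of the expectation separately using Lemma~\ref{lem_stop}: it evaluates $\mathbb{E}_v[\tau_*]=R_1(v_*)-R_1(v)$ and $\mathbb{E}_v\big[\int_0^{\tau_*}V_s^2\,ds\big]=R_2(v_*)-R_2(v)$ via the explicit hypergeometric functions $R_1,R_2$ in \eqref{eq_R_1}--\eqref{eq_R_2}, together with $\mathbb{E}_v[V_{\tau_*}^2]=v_*^2$. It then algebraically combines these three expressions and checks they reproduce the first branch of \eqref{eq_W1}. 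Your argument short-circuits this by applying Dynkin to $H$ itself and using the ODE \eqref{eq_free1} to identify $\mathcal{G}H=v^2-\beta$ on the continuation region, after which the boundary condition \eqref{eq_free2} does all the work. This is cleaner and more in the spirit of the standard verification method for free-boundary problems. The paper's more explicit computation, on the other hand, has the side benefit of exercising the formulas for $R_1,R_2$ that are reused in Lemma~\ref{eq_expectation} for the numerical algorithms; your approach proves the lemma but does not independently confirm those expressions. Both routes ultimately rest on the same Dynkin-formula machinery, so the difference is one of presentation rather than substance.
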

%and establish some properties of $W(v)$.

\ifreport
\begin{proof}
See Appendix \ref{lem_stop_11}.
\end{proof}
\else
{}
\fi

\begin{lemma}\label{lem_stop2}
$H(v) \geq {-\gamma}v^2$ for all $v\in {\mathbb{R}}$.
\end{lemma}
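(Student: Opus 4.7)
My plan is to prove the inequality by splitting into two regions matching the piecewise definition \eqref{eq_W1} of $H$. For $|v|\geq v_*$ the claim is immediate, since $H(v)=-\gamma v^2$ by construction. The substantive case is $|v|<v_*$, for which I would set $W(v):=H(v)+\gamma v^2$. Note that $W$ is even (since $H$ is), satisfies $W(\pm v_*)=0$, and by the smooth fit conditions \eqref{eq_free2}--\eqref{eq_free3} also satisfies $W'(\pm v_*)=0$. The goal reduces to showing $W\geq 0$ on the continuation region.

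The first step would be to derive a closed-form for $W(v)$. Starting from the identity $H(v)=\mathbb{E}_v\!\left[-\gamma V_{\tau_*}^2-\int_0^{\tau_*}(V_s^2-\beta)\,ds\right]$ supplied by Lemma \ref{lem_stop1} and applying It\^{o}'s formula to $V_t^2$ (the stochastic integral is a true martingale on $[0,\tau_*]$ since $|V_s|\leq v_*$ and $\mathbb{E}_v[\tau_*]<\infty$), one can eliminate $V_{\tau_*}^2$ in favor of an integral of $V_s^2$ over $[0,\tau_*]$. Invoking $1-2\theta\gamma=\mathbb{E}[e^{-2\theta Y_i}]=:\alpha$ and $\sigma^2\gamma=\mathsf{mse}_{Y_i}$ gives
\[
W(v)\;=\;\mathbb{E}_v\!\left[\int_0^{\tau_*}\!\bigl(\beta-\mathsf{mse}_{Y_i}-\alpha V_s^2\bigr)\,ds\right]\!.
\]
A second application of It\^{o}'s formula to $V_t^2$, together with $V_{\tau_*}^2=v_*^2$ and the identity $\alpha\,\mathsf{mse}_\infty=\mathsf{mse}_\infty-\mathsf{mse}_{Y_i}$, replaces $\mathbb{E}_v\!\left[\int_0^{\tau_*}V_s^2\,ds\right]$ in favor of $T(v):=\mathbb{E}_v[\tau_*]$ and yields the clean two-term representation
\[
W(v)\;=\;\tfrac{\alpha}{2\theta}\bigl(v_*^2-v^2\bigr)\;-\;(\mathsf{mse}_\infty-\beta)\,T(v).
\]

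Since $W$ is even with $W(v_*)=0$, it suffices to show $W'(v)\leq 0$ on $(0,v_*)$. The function $T$ solves the boundary value problem $\tfrac{\sigma^2}{2}T''(v)-\theta v\,T'(v)=-1$ with $T(\pm v_*)=0$ and $T'(0)=0$. Integrating once with the integrating factor $e^{-\theta v^2/\sigma^2}$ and recognizing the resulting integral via the definition \eqref{eq_g1} of $G$ gives $T'(v)=-\tfrac{2v}{\sigma^2}\,G\!\left(\sqrt{\theta}\,v/\sigma\right)$. Substituting produces
\[
W'(v)\;=\;\frac{2v}{\sigma^2}\!\left[(\mathsf{mse}_\infty-\beta)\,G\!\left(\tfrac{\sqrt{\theta}}{\sigma}v\right)-(\mathsf{mse}_\infty-\mathsf{mse}_{Y_i})\right]\!.
\]
The defining equation \eqref{eq_threshold11} of $v_*$ is precisely the statement that the bracket vanishes at $v=v_*$, and since $G$ is strictly increasing on $[0,\infty)$, the bracket is strictly negative on $(0,v_*)$. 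Hence $W'(v)<0$ there, so $W(v)>W(v_*)=0$ on $(0,v_*)$; evenness extends the inequality to $(-v_*,0)$. The main obstacle is the algebraic condensation producing the clean two-term form of $W(v)$; once available, the conclusion follows from the monotonicity of $G$ and the fact that $v_*$ was chosen precisely to enforce smooth fit, which is exactly what makes the key bracket vanish at the boundary with the correct sign inside.
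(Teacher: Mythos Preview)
Your proof is correct and at its core coincides with the paper's: both split into the trivial region $|v|\geq v_*$ and the continuation region, use evenness together with $W(\pm v_*)=0$ to reduce to showing $W'(v)=H'(v)+2\gamma v<0$ on $(0,v_*)$, and establish that inequality via the monotonicity of $G$ combined with the defining relation \eqref{eq_threshold11} for $v_*$. The only difference is how the expression for $W'(v)$ is obtained: the paper simply reads off $H'(v)$ from the explicit formula \eqref{H_gradient} already derived when solving the free boundary ODE, whereas you take a longer route through the probabilistic representation of Lemma~\ref{lem_stop1}, two applications of It\^{o}'s formula to $V_t^2$, and then the ODE for $T(v)=\mathbb{E}_v[\tau_*]$, ultimately landing on the same formula $W'(v)=\tfrac{2v}{\sigma^2}\big[(\mathsf{mse}_\infty-\beta)G(\sqrt{\theta}v/\sigma)-(\mathsf{mse}_\infty-\mathsf{mse}_{Y_i})\big]$. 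Your path is more self-contained but circuitous; the paper's is more direct because it reuses the explicit formula for $H'$ already in hand.
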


\ifreport
\begin{proof}
See Appendix \ref{lem_stop_22}.
\end{proof}
\else
{}
\fi
A function $f(v)$ is said to be \emph{excessive} for the process $V_t$ if 
\begin{align}
{\mathbb{E}_{v}} f(V_t) \leq f(v), \forall t \geq 0, v\in{\mathbb{R}}.
\end{align}
By using It\^{o}'s formula in stochastic calculus, we can obtain
\begin{lemma}\label{lem_stop3}
The function $H(v)$ is excessive for the process $V_t$.
\end{lemma}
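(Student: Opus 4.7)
The plan is to reduce the excessivity claim $\mathbb{E}_v[H(V_t)]\leq H(v)$ to a generator bound via It\^{o}'s formula and then verify the bound piecewise. By construction, $H$ in \eqref{eq_W1} is continuous, $C^2$ on each of the open sets $\{|v|<v_*\}$ and $\{|v|>v_*\}$, and $C^1$ across the boundary $\pm v_*$ because of the smooth-fit condition \eqref{eq_free3}. For such a $C^1$, piecewise-$C^2$ function, the Meyer--It\^{o} formula applies with no local-time contribution at $\pm v_*$:
\begin{align*}
H(V_t)=H(v)+\int_0^t \mathcal{G}H(V_s)\,ds+\int_0^t \sigma H'(V_s)\,dW_s.
\end{align*}
First I would localize with $\tau_M=\inf\{s:|V_s|\geq M\}$ so the stochastic integral stopped at $t\wedge\tau_M$ is a true $L^2$-martingale of zero mean, take expectations, and then send $M\to\infty$ using the polynomial growth $|H(v)|=O(v^2)$ and $|H'(v)|=O(|v|)$ together with the Gaussian moments of the OU process to justify dominated convergence. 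This reduces the claim to showing $\mathbb{E}_v\!\int_0^t \mathcal{G}H(V_s)\,ds\leq 0$ for every $t\geq 0$ and $v\in\mathbb{R}$.

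Computing $\mathcal{G}H$ is immediate in each region. On $(-v_*,v_*)$ the free-boundary ODE \eqref{eq_free1} gives $\mathcal{G}H(v)=v^2-\beta$, so the continuation-region bound $\mathcal{G}H\leq 0$ reduces to the threshold comparison $v_*^2\leq\beta$; this last inequality should follow from the defining equation \eqref{eq_threshold11} of $v_*(\beta)$ together with the standing hypothesis $\mathsf{mse}_{Y_i}\leq\beta<\mathsf{mse}_\infty$, using $v_*(\mathsf{mse}_{Y_i})=0$ as a base case and then comparing the growth rates of $v_*^2$ and $\beta$ by implicit differentiation of \eqref{eq_threshold11} together with the monotonicity of $G$ in \eqref{eq_g1}. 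On the stopping region $\{|v|>v_*\}$, plugging $H(v)=-\gamma v^2$ into the generator formula \eqref{eq_generator} yields
\begin{align*}
\mathcal{G}H(v)=2\theta\gamma v^2-\sigma^2\gamma=2\theta\gamma\bigl(v^2-\mathsf{mse}_\infty\bigr),
\end{align*}
where I have used $\sigma^2/(2\theta)=\mathsf{mse}_\infty$.

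The main obstacle is the stopping region: since $-\gamma v^2$ grows unboundedly negative while the OU process mean-reverts to the stationary variance $\mathsf{mse}_\infty$, the expression above is strictly positive for $|v|>\sqrt{\mathsf{mse}_\infty}$, so the naive pointwise route $\mathcal{G}H\leq 0$ cannot close uniformly on $\mathbb{R}$. To bypass this, I would decompose $\mathbb{E}_v\!\int_0^t \mathcal{G}H(V_s)\,ds$ at the first entry time $\rho=\inf\{s\geq 0:|V_s|\leq v_*\}$ of the continuation region. On $[\rho,t]$ the strong Markov property at $\rho$ starts a fresh OU process from the boundary point $|V_\rho|=v_*$, reducing the tail contribution to the continuation-region excessivity already handled. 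On $[0,\rho]$, I would exploit the identity
\begin{align*}
d(V_s^2)=-2\theta\bigl(V_s^2-\mathsf{mse}_\infty\bigr)\,ds+2\sigma V_s\,dW_s,
\end{align*}
which rewrites the integrand $(V_s^2-\mathsf{mse}_\infty)\,ds$ as the differential of $-V_s^2/(2\theta)$ plus a mean-zero martingale. Combined with the boundary evaluation $H(V_\rho)=-\gamma v_*^2$, the identity $\sigma^2\gamma=\mathsf{mse}_{Y_i}$, and the defining relation \eqref{eq_threshold11} of $v_*$, this should telescope the pre-$\rho$ contribution against the post-$\rho$ contribution so that their sum is nonpositive. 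Verifying that this cancellation goes in the favorable direction --- in effect, that the occupation of $V_s$ in the large-$|v|$ tail of the stopping region before $\rho$ is offset by the expected gain of $H$ once $V_s$ crosses into the continuation region --- is the principal technical hurdle of the proof.
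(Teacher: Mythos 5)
There is a genuine gap, and it starts with the target inequality itself. Your plan reduces everything to the pointwise bound $\mathcal{G}H\leq 0$, which on the continuation region becomes $v_*^2\leq\beta$; but that inequality is false in general. From \eqref{eq_threshold11}, $G\bigl(\tfrac{\sqrt{\theta}}{\sigma}v_*\bigr)=\tfrac{\mathsf{mse}_{\infty}-\mathsf{mse}_{Y_i}}{\mathsf{mse}_{\infty}-\beta}$ blows up as $\beta\uparrow\mathsf{mse}_{\infty}$, so $v_*(\beta)\to\infty$ while $\beta$ stays below $\mathsf{mse}_{\infty}$; already for the paper's numerical example ($\sigma=1$, $\theta=0.5$, exponential $Y_i$, $\beta=0.9$) one gets $v_*\approx 2$, so $v_*^2\gg\beta$. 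Hence $\mathcal{G}H(v)=v^2-\beta>0$ on a nonempty part of the continuation region, your ``base case plus implicit differentiation'' step cannot be repaired, and in fact the literal inequality $\mathbb{E}_v[H(V_t)]\leq H(v)$ fails for small $t$ at such $v$ (since $\tfrac{1}{t}\mathbb{E}_v\int_0^t\mathcal{G}H(V_s)ds\to\mathcal{G}H(v)>0$). Your treatment of the stopping region correctly identifies that $\mathcal{G}H>0$ for $|v|>\sqrt{\mathsf{mse}_{\infty}}$, but the proposed decomposition at the entry time $\rho$ is only a program: the ``favorable cancellation'' you defer is precisely the content of the lemma, so nothing is actually proved there either.

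The statement you should be verifying (and the one the paper's argument actually establishes and later uses in the verification of Theorem \ref{thm_optimal_stopping}) is the running-cost--adjusted superharmonicity: for all $v$,
\begin{align*}
\frac{\sigma^2}{2}H''(v)-\theta v H'(v)\;\leq\; v^2-\beta ,
\end{align*}
equivalently that $t\mapsto H(V_t)-\int_0^t\bigl(V_s^2-\beta\bigr)ds$ is a supermartingale. With this target the proof is short and purely local: on $(-v_*,v_*)$ the free-boundary ODE \eqref{eq_free1} gives equality, and on $|v|\geq v_*$, plugging $H(v)=-\gamma v^2$ reduces the inequality to $\mathbb{E}[e^{-2\theta Y_i}]\,v^2\geq\beta-\mathsf{mse}_{Y_i}$, which it suffices to check at $v=v_*$; that case follows from the elementary bound $(1-2x^2)G(x)\leq 1$ (Lemma \ref{lem_G_x}) combined with the threshold equation \eqref{eq_threshold11}. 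No decomposition at hitting times and no comparison of tail occupation measures is needed. Your It\^{o}/localization preliminaries and the stopping-region computation $\mathcal{G}H(v)=2\theta\gamma(v^2-\mathsf{mse}_{\infty})$ are fine and can be reused, but as written the proposal proves neither the literal claim (which is unprovable in the regime $v_*^2>\beta$) nor the inequality that the optimality proof actually requires.
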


\ifreport
\begin{proof}
See Appendix \ref{lem_stop_33}.
\end{proof}
\else
The proof of Lemmas \ref{lem_stop1}-\ref{lem_stop3} are given in \cite{Ornee2018}.
\fi
Now, we are ready to prove Theorem \ref{thm_optimal_stopping}.

\begin{proof}[Proof of Theorem \ref{thm_optimal_stopping}] In Lemmas \ref{lem_stop1}-\ref{lem_stop3}, we have shown that $H(v) = {\mathbb{E}_{v}} {\left[-\gamma  V_{\tau_*}^2 - \int_{0}^{\tau_*} (V_{s}^2-\beta)ds\right]}$, $H(v) \geq -\gamma v^2$, and $H(v)$ is an excessive function. 
Moreover, from  the proof of Lemma \ref{lem_stop1}, we know that $\mathbb{E}_{v} [\tau_* ] <\infty$ holds for all $v\in \mathbb{R}$. Hence, $\mathbb{P}_{v} (\tau_* < {\infty}) = 1$ for all $v\in \mathbb{R}$. 
These conditions and Theorem 1.11 in \cite[Section 1.2]{Peskir2006} imply that $\tau_*$ is an optimal stopping time of \eqref{eq_stop_problem}. This completes the proof.
\end{proof}
%\ifreport
%\begin{proof}
%See Appendix \ref{app_thm_optimal_stopping}.
%\end{proof}
%An immediate consequence of Theorem \ref{thm_optimal_stopping} is
%\else
%The proof is given in \cite{Ornee2018}.
%\fi
\ifreport
Because \eqref{eq_opt_stopping} is a special case of \eqref{eq_stop_problem}, we can get from Theorem \ref{thm_optimal_stopping} that 
 %An immediate consequence of Theorem \ref{thm_optimal_stopping} is

\begin{corollary}\label{coro_stop}
If $\mathsf{mse}_{Y_i} \leq {\beta} < \mathsf{mse}_{\infty}$, then a solution to \eqref{eq_opt_stopping} is $(Z_1(\beta), $ $Z_2(\beta), \ldots)$, where
\begin{align}\label{eq_optimal_stopping1234}
Z_i(\beta) = \inf\{t \geq 0: |O_{Y_i+t}|\geq v(\beta) \},
\end{align}
and $v(\beta)$ is defined  in  \eqref{eq_threshold}.
\end{corollary}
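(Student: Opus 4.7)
The plan is to reduce the per-sample MDP \eqref{eq_opt_stopping} to the optimal stopping problem \eqref{eq_stop_problem} already solved in Theorem \ref{thm_optimal_stopping}, by conditioning on the pair $(O_{Y_i},Y_i)$ and invoking the strong Markov property of the OU process at the (finite, deterministic-given-$Y_i$) time $Y_i$.

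First, I would rewrite \eqref{eq_opt_stopping} with a change of variable $u = s - Y_i$ in the integral. Setting $\tau = Z_i$, the objective inside the conditional expectation becomes $\int_0^{\tau}(O_{Y_i+u}^2 - \beta)\,du + \gamma\, O_{Y_i+\tau}^2$. Define the time-shifted process $V_u := O_{Y_i+u}$ for $u\geq 0$. By the strong Markov property of the OU process \cite{Peskir2006}, conditioned on $(O_{Y_i},Y_i) = (v,y)$, the process $\{V_u, u\ge 0\}$ is an OU process with the same parameters $\theta,\sigma$ and initial state $V_0 = v$, and it is independent of $\mathcal{F}^{+}_{Y_i}$. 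Moreover, the filtration $\{\mathcal{F}^{Y_i+}_u : u\geq 0\}$ used to define $\mathfrak{M}_{Y_i}$ coincides with the natural filtration of $V_u$, so $Z_i \in \mathfrak{M}_{Y_i}$ if and only if $Z_i$ is an integrable stopping time of $\{V_u,u\ge 0\}$, i.e., $Z_i \in \mathfrak{M}_V$ in the notation of Section~V.D.

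Second, flipping the sign turns the infimum into a supremum, so the conditional problem becomes
\begin{align*}
\sup_{\tau\in\mathfrak{M}_V}\mathbb{E}_v\!\left[-\gamma V_\tau^2 - \int_0^\tau(V_s^2-\beta)\,ds\right],
\end{align*}
which is exactly \eqref{eq_stop_problem} with initial state $v = O_{Y_i}$. Since we are operating in the regime $\mathsf{mse}_{Y_i}\leq\beta<\mathsf{mse}_\infty$, Theorem \ref{thm_optimal_stopping} applies and its optimal stopping time is $\tau_* = \inf\{t\ge 0: |V_t|\ge v(\beta)\}$ with threshold $v(\beta)$ from \eqref{eq_threshold}. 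Unwinding the time shift, $\tau_* = \inf\{t\ge 0: |O_{Y_i+t}|\ge v(\beta)\}$, which is precisely $Z_i(\beta)$ in \eqref{eq_optimal_stopping1234}.

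The one subtlety I would be careful about is the measurability/integrability requirement: I must confirm that $Z_i(\beta)$, which depends on $O_{Y_i}$, is in fact integrable and properly adapted so that it belongs to $\mathfrak{M}_{Y_i}$. Integrability $\mathbb{E}[Z_i(\beta)] < \infty$ follows from Lemma \ref{lem_stop} applied to the first exit time of the bounded interval $(-v(\beta),v(\beta))$ by the OU process started at $O_{Y_i}$, combined with the fact that the unconditional expectation can be obtained by averaging over $O_{Y_i}$. Adaptedness is immediate since $Z_i(\beta)$ is defined as a hitting time by the process after $Y_i$. With these checks, the optimality of $(Z_1(\beta),Z_2(\beta),\ldots)$ for \eqref{eq_opt_stopping} follows pointwise in $(O_{Y_i},Y_i)$ from Theorem \ref{thm_optimal_stopping}, which is the main (and really the only) substantive step — everything else is bookkeeping around the Markov property.
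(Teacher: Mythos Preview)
Your proposal is correct and follows essentially the same approach as the paper: the paper simply states that ``\eqref{eq_opt_stopping} is one instance of \eqref{eq_stop_problem} with $v = O_{Y_i}$'' and then derives the corollary immediately from Theorem~\ref{thm_optimal_stopping}. Your write-up is a more careful unpacking of that one-line reduction (change of variable, strong Markov property, sign flip, and the integrability/adaptedness checks), but the substantive step is identical.
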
 

\fi

\subsection{Zero Duality Gap between \eqref{eq_SD} and 
\eqref{eq_dual}} \label{sec:dual}

Strong duality is established in the following thoerem:
\begin{theorem}\label{thm6_strong_duality}
If  the service times $Y_i$ are {i.i.d.} with $0<\mathbb{E}[Y_i]<\infty$, then the duality gap between \eqref{eq_SD} and \eqref{eq_dual} is zero. Further, $(Z_0(\beta), $ $Z_1(\beta), \ldots)$ is an optimal solution to both \eqref{eq_SD} and \eqref{eq_dual}, where $Z_i(\beta)$ is determined by
 %\eqref{eq_optimal_stopping1234},
 \begin{align}\label{eq_Z_1}
 Z_i(\beta) = \inf\{t \geq 0: |O_{Y_i+t}|\geq v(\beta) \},
 \end{align} 
 $v(\beta)$ is defined  in  \eqref{eq_threshold}, $\beta\geq0$ is the root  of 
\begin{align}\label{thm2_eq22}
{\mathbb{E}\left[\int_{Y_i}^{Y_i+Z_i(\beta)+Y_{i+1}}\! O_t^2dt\right]}-{\beta} {\mathbb{E}[Y_i+Z_i(\beta)]} =0,
\end{align} 
if~$\mathbb{E}[Y_i+Z_i(\beta)] > {1}/{f_{\max}}$; otherwise, $\beta$ is the root  of $\mathbb{E}[Y_i+Z_i(\beta)] = {1}/{f_{\max}}$. In both cases, $\mathsf{mse}_{Y_i} \leq {\beta} < \mathsf{mse}_{\infty}$ is satisfied, and hence \eqref{eq_threshold} is well-defined. Further, the optimal objective value to \eqref{eq_Simple} is  given by \emph{
\begin{align}\label{thm2_eq23}
{\mathsf{mse}}_{\text{opt}} = \frac{\mathbb{E}\left[\int_{Y_i}^{Y_i+Z_i(\beta)+Y_{i+1}}\! O_t^2dt\right]}{\mathbb{E}[Y_i+Z_i(\beta)]}.
\end{align} }

\end{theorem}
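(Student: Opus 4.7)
My plan is to combine the per-sample decomposition in Lemma \ref{thm_solution_form}, the closed-form optimal stopping rule implicit in Corollary \ref{coro_stop}, and a geometric-multiplier argument to pin down $\lambda$ and conclude zero duality gap. Fix any $\lambda\geq 0$ and set $\beta=\mathsf{mse}_{\text{opt}}+\lambda$. By Lemma \ref{thm_solution_form} and the algebraic identity \eqref{eq_simplification_1}, the Lagrangian $L(\pi;\lambda)$ splits into a sum of per-sample optimal-stopping problems whose optimizers are mutually independent; provided $\mathsf{mse}_{Y_i}\leq\beta<\mathsf{mse}_\infty$, Corollary \ref{coro_stop} (equivalently Theorem \ref{thm_optimal_stopping}) supplies the explicit minimizer $Z_i(\beta)$ in \eqref{eq_Z_1}. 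So modulo the admissibility window for $\beta$, the inner infimum $e(\lambda)$ is attained by the threshold policy.

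The second step is to choose $\lambda^\star$ so that the primal and dual values coincide. I will argue two cases via complementary slackness. Let $\beta_0$ denote the unique root of \eqref{thm2_eq22} in $[\mathsf{mse}_{Y_i},\mathsf{mse}_\infty)$, whose existence and uniqueness follow exactly as in Lemma \ref{beta_unique}/Corollary 1 of the excerpt (concavity and monotonicity of $f(\beta)=f_1(\beta)-\beta f_2(\beta)$ together with the sign checks at the endpoints). If $\mathbb{E}[Y_i+Z_i(\beta_0)]>1/f_{\max}$, the threshold policy $(Z_i(\beta_0))$ is already feasible for \eqref{eq_Simple}; taking $\lambda^\star=0$ gives $\beta=\mathsf{mse}_\text{opt}=\beta_0$, $h(\mathsf{mse}_\text{opt})=0$ by Lemma \ref{lem_ratio_to_minus}, and $e(0)=0$, so strong duality holds. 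Otherwise, let $\beta^\star$ be the root of $\mathbb{E}[Y_i+Z_i(\beta)]=1/f_{\max}$ guaranteed by Lemma \ref{beta_unique_2}; then $Z_i(\beta^\star)$ saturates the rate constraint. Setting $\lambda^\star=\beta^\star-\mathsf{mse}_\text{opt}$, one verifies $\lambda^\star\geq 0$ (since shrinking the feasible set increases $\mathsf{mse}_\text{opt}$ above $\beta_0$), the complementary slackness $\lambda^\star(1/f_{\max}-\mathbb{E}[Y_i+Z_i(\beta^\star)])=0$, and that the threshold policy attains both $e(\lambda^\star)$ and $h(\mathsf{mse}_\text{opt})=0$. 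Therefore $d=e(\lambda^\star)=h(\mathsf{mse}_\text{opt})=0$.

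The third step, which is where Corollary \ref{coro_stop} imposes a subtle hypothesis, is to verify that the candidate $\beta$ produced above lies in $[\mathsf{mse}_{Y_i},\mathsf{mse}_\infty)$. The upper bound is immediate from Lemmas \ref{beta_unique} and \ref{beta_unique_2}, which place the roots in $[\mathsf{mse}_{Y_i},\mathsf{mse}_\infty)$. For the lower bound, I will argue that no feasible policy can drive the long-run mean-squared error below $\mathsf{mse}_{Y_i}$: even instantaneous sampling ($Z_i=0$) yields an unavoidable residual error $\mathbb{E}[O_{Y_i}^2]=\mathsf{mse}_{Y_i}$ over the service interval, so $\mathsf{mse}_\text{opt}\geq\mathsf{mse}_{Y_i}$ and hence $\beta\geq\mathsf{mse}_\text{opt}\geq\mathsf{mse}_{Y_i}$. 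With $\beta$ in the admissible window, the threshold $v(\beta)$ in \eqref{eq_threshold} is well defined and the policy $Z_i(\beta)$ is exactly the per-sample minimizer.

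The hardest part, as anticipated, is the strong-duality step: one needs to certify that the Lagrangian optimum coincides with the constrained optimum even though \eqref{eq_SD} is an infinite-horizon constrained MDP over an uncountable state space. I plan to follow the geometric-multiplier approach of Bertsekas \cite{Bertsekas2003} as adapted in \cite{SunNonlinear2019,2020Sun}: view the achievable set of pairs (average cost, average cycle length) and show that the hyperplane with slope $-\lambda^\star$ passing through the optimum supports this set from below, which together with the complementary slackness above forces $d=h(\mathsf{mse}_\text{opt})$ and identifies \eqref{eq_Z_1} as a joint optimizer of \eqref{eq_SD} and \eqref{eq_dual}. The formula \eqref{thm2_eq23} then follows from Lemma \ref{lem_ratio_to_minus}(b) applied at $c=\mathsf{mse}_\text{opt}$.
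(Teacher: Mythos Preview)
Your proposal follows essentially the same route as the paper: decompose the Lagrangian via Lemma~\ref{thm_solution_form} and Corollary~\ref{coro_stop}, then invoke the geometric-multiplier criterion of \cite[Prop.~6.2.5]{Bertsekas2003} with a case split on whether the rate constraint binds, and conclude zero duality gap via \cite[Prop.~6.2.3(b)]{Bertsekas2003}. The paper's proof is organized in exactly this way.

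One small gap to tighten: in the binding case you set $\lambda^\star=\beta^\star-\mathsf{mse}_{\text{opt}}$ and justify $\lambda^\star\geq 0$ by ``shrinking the feasible set increases $\mathsf{mse}_\text{opt}$ above $\beta_0$.'' That inequality, $\mathsf{mse}_\text{opt}\geq\beta_0$, does not by itself yield the needed $\beta^\star\geq\mathsf{mse}_\text{opt}$. A clean fix: since $f_2$ is strictly increasing (proof of Lemma~\ref{beta_unique_2}) and the case hypothesis is $f_2(\beta_0)\leq 1/f_{\max}=f_2(\beta^\star)$, you get $\beta^\star\geq\beta_0$; then $f$ strictly decreasing (Lemma~\ref{beta_unique}) gives $f(\beta^\star)\leq 0$, i.e.\ the feasible policy $(Z_i(\beta^\star))$ achieves average cost $f_1(\beta^\star)/f_2(\beta^\star)\leq\beta^\star$, whence $\mathsf{mse}_\text{opt}\leq\beta^\star$ and $\lambda^\star\geq 0$. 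Your third step is also over-engineered: Lemmas~\ref{beta_unique} and~\ref{beta_unique_2} already place both candidate roots in $[\mathsf{mse}_{Y_i},\mathsf{mse}_\infty)$, so the separate lower-bound argument via ``unavoidable residual error $\mathsf{mse}_{Y_i}$'' is unnecessary (and, as stated, is only heuristic for signal-aware policies, where the age--error relation of Lemma~\ref{lem_estimation_error} does not apply pointwise).
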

\begin{proof}
We use \cite[Prop. 6.2.5]{Bertsekas2003} to find a \emph{geometric multiplier} for  \eqref{eq_SD}. This suggests  that the duality gap between \eqref{eq_SD} and \eqref{eq_dual} must be zero, because otherwise there exists no geometric multiplier \cite[Prop. 6.2.3(b)]{Bertsekas2003}. The details are provided in Appendix \ref{app_thm_strong_duality}.
\end{proof}
Hence, Theorem~\ref{thm_new} follows from Theorem \ref{thm6_strong_duality}. Because Theorem \ref{thm1} is a special case of Theorem \ref{thm_new}, Theorem \ref{thm1} is also proven. 

%\begin{proof}[Proof of Theorem \ref{thm1}] Theorem \ref{thm1} follows from Theorem \ref{thm_new}, because it is a special case of Theorem \ref{thm_new} when $f_{\max} = {\infty}$.
%\end{proof}}

\ignore{\begin{figure}%[!t]
	\centering
	\includegraphics[width=0.35\textwidth]{figure4_unconstrained_gaussian_discrete}   
	\caption{Average mutual information of the Gauss-Markov source versus the coefficient $a$ in \eqref{eq_markov}, where the service times $Y_i$ are equal to either $1$ or $21$ with probability 0.5.}
	\label{figure4_unconstrained_gaussian_discrete}
	%\vspace{-3mm}
\end{figure} 
\begin{figure}%[!t]
	\centering
	\includegraphics[width=0.5\textwidth]{figure10_lognormal_differentpenalties_exp}   
	\caption{MSE vs. the scale parameter  $\alpha$ of \emph{i.i.d.} log-normal service time for $f_{\max}=\infty$ and $\mathbb{E} [Y_i] = 1$. OU process parameters $\sigma=1$ and $\theta=0.2$ has been chosen.
	% where $\mathsf{mse}_{\text{uniform}}=\infty$ due to queueing.
	 }
	\label{figure10_lognormal_differentpenalties_exp}
	%\vspace{-3mm}
\end{figure} 
}

%Original figure, commenting for checking purpose

\begin{figure}%[!t]
	\centering
	\includegraphics[width=0.5\textwidth]{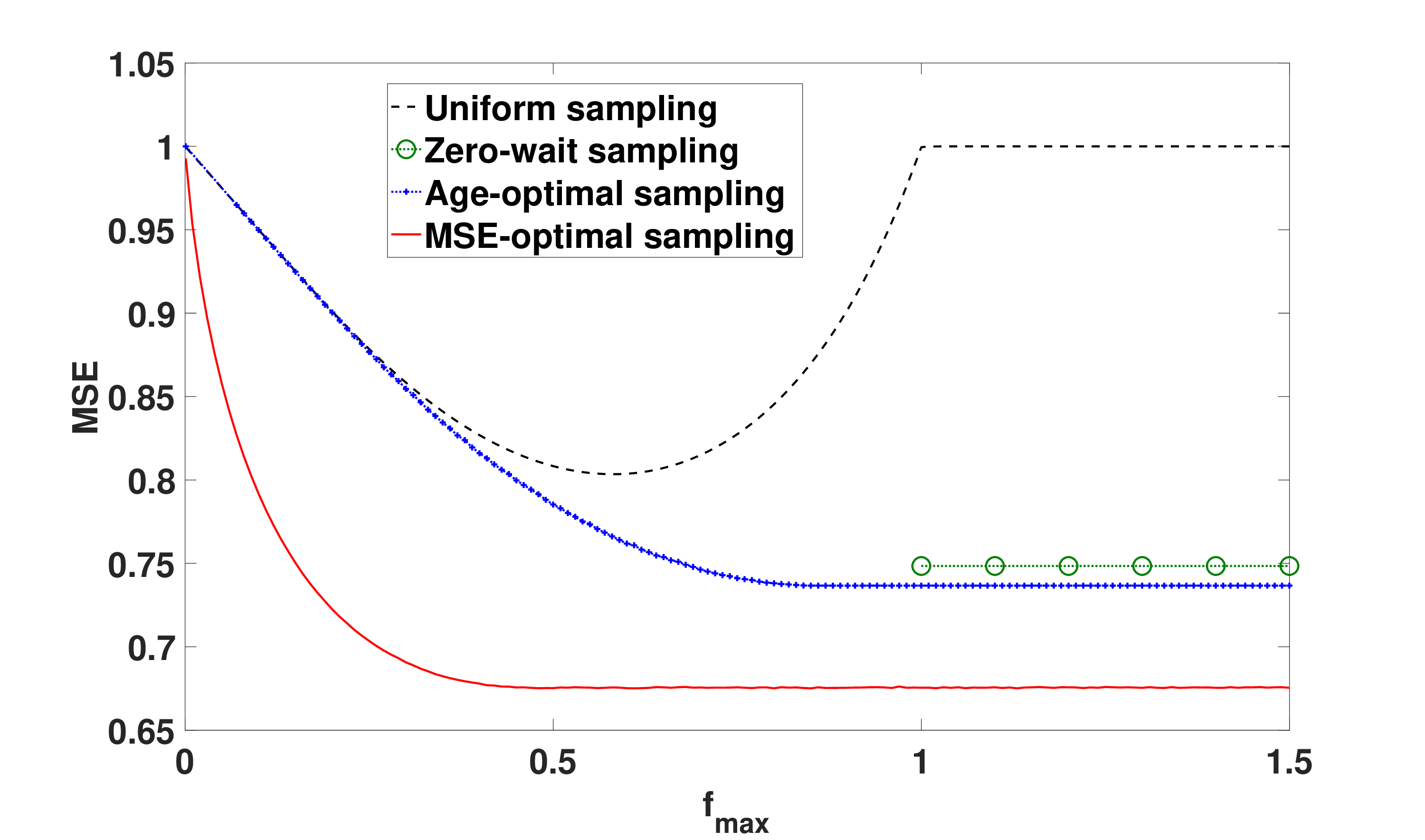}   
	\caption{MSE vs $f_{\max}$ tradeoff for \emph{i.i.d.} exponential service time with $\mathbb{E}[Y_i] = 1$, where the parameters of the OU process are $\sigma=1$ and $\theta=0.5$.}
	\label{figure1_exponential}
	\vspace{-3mm}
\end{figure}

\section{Numerical Comparisons}\label{sec:numerical}
In this section, we evaluate the estimation error achieved by the following four sampling policies:

\begin{itemize}
\item[1.] \textit{Uniform sampling}: Periodic sampling with a period given by $S_{i+1} - S_i = {1}/{f_{\max}}$.
\item[2.] \textit{Zero-wait sampling} \cite{AgeOfInfo2016,KaulYatesGruteser-Infocom2012}: The sampling policy given by
\begin{align} \label{eq_Zero_wait}
S_{i+1} = S_i+ Y_i,
\end{align} 
which is infeasible when $f_{\max} <1/\mathbb{E}[Y_i].$ %The zero-wait sampling policy is throughput-optimal and delay-optimal, but  not necessarily age-optimal.

\item[3.] \emph{Age-optimal sampling} \cite{SunNonlinear2019}: The sampling policy given by Theorem~\ref{thm2}.

\item[4.] \textit{MSE-optimal sampling}: The sampling policy given by Theorem~\ref{thm1}. 
\end{itemize}
%(or the special case with $p(\Delta_n = -I(X_n;W_n)=-r(\Delta_n))$ in the Theorem 2). then compare it with the constant wait policies and minimum wait policies in Theorem 1, with the constant wait policies and zero wait policies in Theorem 2.
Let ${\mathsf{mse}_{\text{uniform}}}$, ${\mathsf{mse}_{\text{zero-wait}}}$, ${\mathsf{mse}_{\text{age-opt}}}$, and ${\mathsf{mse}_{\text{opt}}}$, 
be the MSEs of uniform sampling, zero-wait sampling, age-optimal sampling,  MSE-optimal sampling, respectively. 
We can obtain  
\begin{align}
&{\mathsf{mse}_{Y_i}}\leq{\mathsf{mse}_{\text{opt}}}\leq{\mathsf{mse}_{\text{age-opt}}}\leq\mathsf{mse}_{\text{uniform}}\leq{\mathsf{mse}_{\infty}}, \nonumber\\
&%{\mathsf{mse}_{Y_i}}\leq{\mathsf{mse}_{\text{opt}}}\leq
{\mathsf{mse}_{\text{age-opt}}}\leq{\mathsf{mse}_{\text{zero-wait}}}\leq{\mathsf{mse}_{\infty}},
%~\text{when $f_{\max}\geq \frac{1}{\EE[Y_i]}$,} \nonumber
\end{align}
whenever zero-wait sampling is feasible, which fit with our numerical results. The expectations in \eqref{eq_R_1} and \eqref{eq_R_2} are evaluated by taking the average over 1 million samples. The parameters of the OU process are given by $\sigma = 1$, $\theta = 0.5$, and $\mu$ can be chosen arbitrarily because it does not affect the estimation error. 

Figure \ref{figure1_exponential} illustrates the tradeoff between the MSE and $f_{\max}$ for \emph{i.i.d.} exponential service times with mean $\mathbb{E}[Y_i] = 1$. Because $ \mathbb{E}[Y_i] = 1$, the maximum throughput of the queue is  $1$. The lower bound $\mathsf{mse}_{Y_i}$ is 0.5 and the upper bound $\mathsf{mse}_{\infty}$ is $1$. In fact, as $Y_i$ is an exponential random variable with mean 1, $\frac{\sigma^2 }{2\theta}(1-e^{-2\theta Y_i})$ has a uniform distribution on $[0,1]$. Hence, $\mathsf{mse}_{Y_i} = 0.5$. For small values of $f_{\max}$, age-optimal sampling is similar to uniform sampling, and hence $\mathsf{mse}_\text{age-opt}$ and $\mathsf{mse}_\text{uniform}$ are close to each other in the regime. 
However, as $f_{\max}$ grows, $\mathsf{mse}_\text{uniform}$ reaches the upper bound $\mathsf{mse}_{\infty}$ and remains constant for $f_{\max} \geq 1$. This is because the queue length of uniform sampling is large at high sampling frequencies. 
In particular, when $f_{\max} \geq 1$, the queue length of uniform sampling is infinite.
On the other hand, $\mathsf{mse}_\text{age-opt}$ and $\mathsf{mse}_\text{opt}$ decrease with respect to $f_{\max}$. The reason behind this is that the set of feasible sampling policies satisfying the constraint in \eqref{eq_DPExpected} and \eqref{eq_age} becomes larger as $f_{\max}$ grows, and hence the optimal values of \eqref{eq_DPExpected} and \eqref{eq_age} are decreasing in $f_{\max}$. 
%Moreover, the performance gap between age-optimal policy and MSE-optimal policy becomes larger as $f_{\max}$ grows. 
As we expected, $\mathsf{mse}_\text{zero-wait}$ is larger than $\mathsf{mse}_\text{opt}$ and $\mathsf{mse}_\text{age-opt}$. Moreover, all of them are between the lower bound $\mathsf{mse}_{Y_i}$ and upper bound $\mathsf{mse}_{\infty}$.

Figures \ref{figure2_lognormal_low_f} and \ref{figure2_lognormal_high_fmax} depict the MSE of \emph{i.i.d.} normalized log-normal service time for $f_{\max}=0.8$ and $f_{\max}=1.2$, respectively, where $Y_i={e^{{\alpha} {X_i}}}/{\mathbb{E}[e^{{\alpha} {X_i}}]}, {\alpha}>0$ is the scale parameter of log-normal distribution, and $(X_1, X_2, \dots)$ are \emph{i.i.d.} Gaussian random variables with zero mean and unit variance. Because $\mathbb{E}[Y_i]=1$, the maximum throughput of the queue is 1. In Fig. \ref{figure2_lognormal_low_f}, since $f_{\max}<1$, zero-wait sampling is not feasible and hence is not plotted. As the scale parameter $\alpha$ grows, the tail of the log-normal distribution becomes heavier.  

\begin{figure}%[!t]
	\centering
	\includegraphics[width=0.5\textwidth]{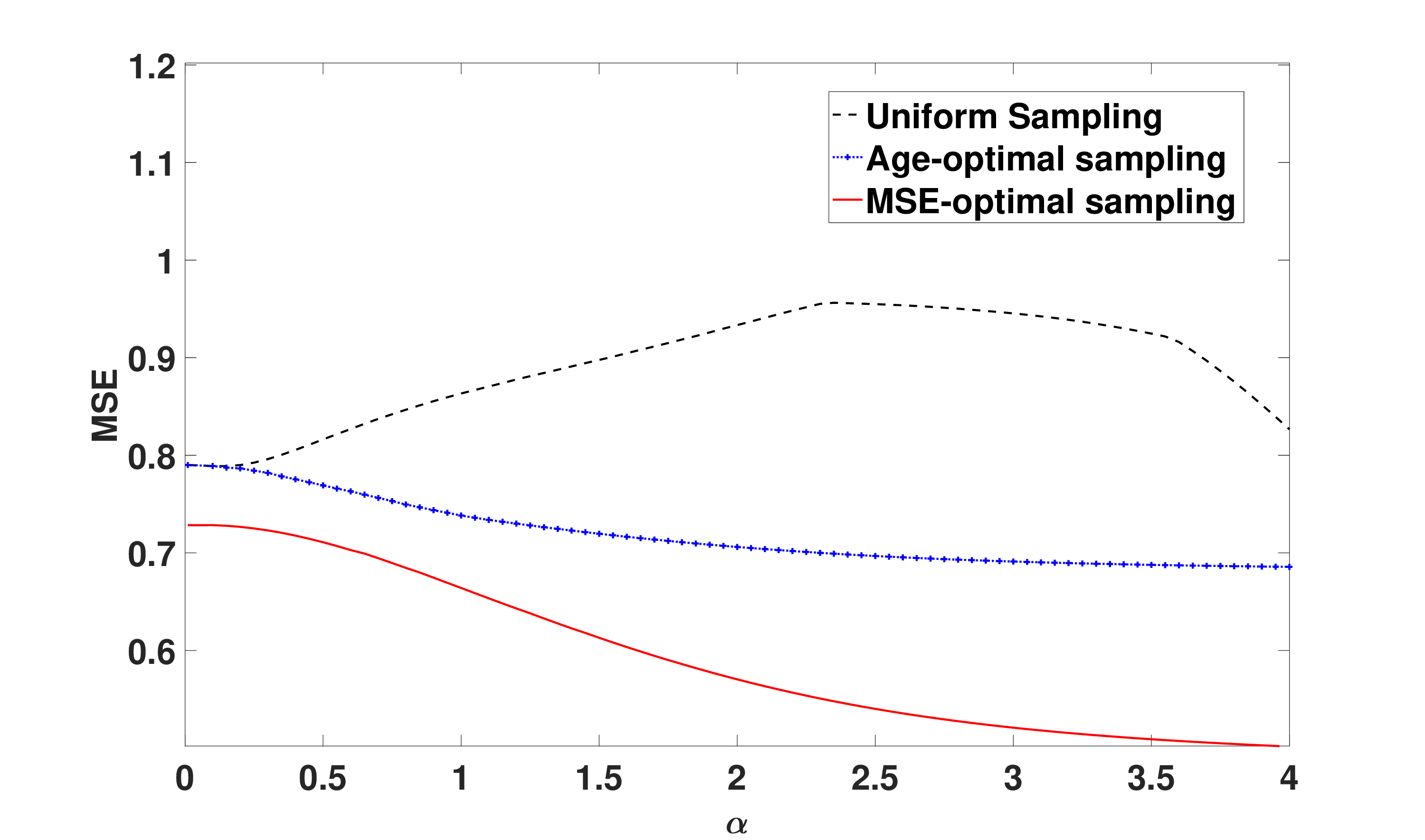}   
	\caption{MSE vs. the scale parameter ${\alpha}$ of \emph{i.i.d.} normalized log-normal service time distribution with $\mathbb{E}[Y_i]=1$ and $f_{\max}=0.8$, where the parameters of the OU process are $\sigma=1$ and $\theta=0.5$. Zero-wait sampling is not feasible here as $f_{\max} < {1}/{\mathbb{E}[Y_i]}$ and hence is not plotted.}
	\label{figure2_lognormal_low_f}
	\vspace{-3mm}
\end{figure}

%In Fig. \ref{figure2_lognormal_high_fmax}, $f_{\max} > 1$, $\mathsf{mse}_{\text{zero-wait}}$ is much larger than $\mathsf{mse}_{\text{opt}}$ and $\mathsf{mse}_{\text{age-opt}}$. 
In both figures, $\mathsf{mse}_{\text{age-opt}}$ and $\mathsf{mse}_{\text{opt}}$ drop with $\alpha$. This phenomenon may look surprising at first sight, because $\mathsf{mse}_{\text{age-opt}}$ and $\mathsf{mse}_{\text{opt}}$ grow quickly in $\alpha$ in the previous study \cite{2020Sun} on the Wiener process. To understand this phenomenon, let us consider the age penalty function $p(\Delta_t)$ in \eqref{eq_lem_estimation_error1} for the OU process. As the scale parameter $\alpha$ grows, the service time tends to become either shorter or much longer than the mean $\mathbb{E}[Y_i]$, rather than being close to $\mathbb{E}[Y_i]$. When $\Delta_t$ is small, $p(\Delta_t)$ reduces quickly in $\Delta_t$, and hence the service time smaller than $\mathbb{E} [Y_i]$ leads to a fast decrease in the average age penalty;  when $\Delta_t$ is quite large, $p(\Delta_t)$ cannot increase much because it is upper bounded by $\mathsf{mse}_{\infty}$, hence the service time much longer than $\mathbb{E} [Y_i]$ would not increase the average age penalty by much. By combining these two trends, the average age penalty $\mathsf{mse}_{\text{age-opt}}$ decreases in $\alpha$. The dropping of $\mathsf{mse}_{\text{opt}}$ in $\alpha$ can be understood in a similar fashion.
%which grows quickly in $\Delta_t$ for small $\Delta_t$ and does not increase much for large $\Delta_t$. As the scale parameter $\alpha$ grows, there is a higher probability to have both short and long service times (remember that the mean $\mathbb{E}[Y_i]$ is kept as one). A short service time means an opportunity to reduce $p(\Delta_t)$ by keeping the age $\Delta_t$ small; meanwhile, a long service time would not do much harm as $p(\Delta_t)$ is upper bounded by $\mathsf{mse}_{\infty}$ and cannot further increase. By taking both into consideration, the optimal age penalty $\mathsf{mse}_{\text{age-opt}}$ decreases in $\alpha$. 
On the other hand, the age penalty function of the Wiener process is $p(\Delta_t) = \Delta_t$, which is quite different from the case considered here. We also observe that in both figures, the gap between $\mathsf{mse}_{\text{opt}}$ and $\mathsf{mse}_{\text{age-opt}}$ increases as $\alpha$ grows.

\begin{figure}%[!t]
	\centering
	\includegraphics[width=0.5\textwidth]{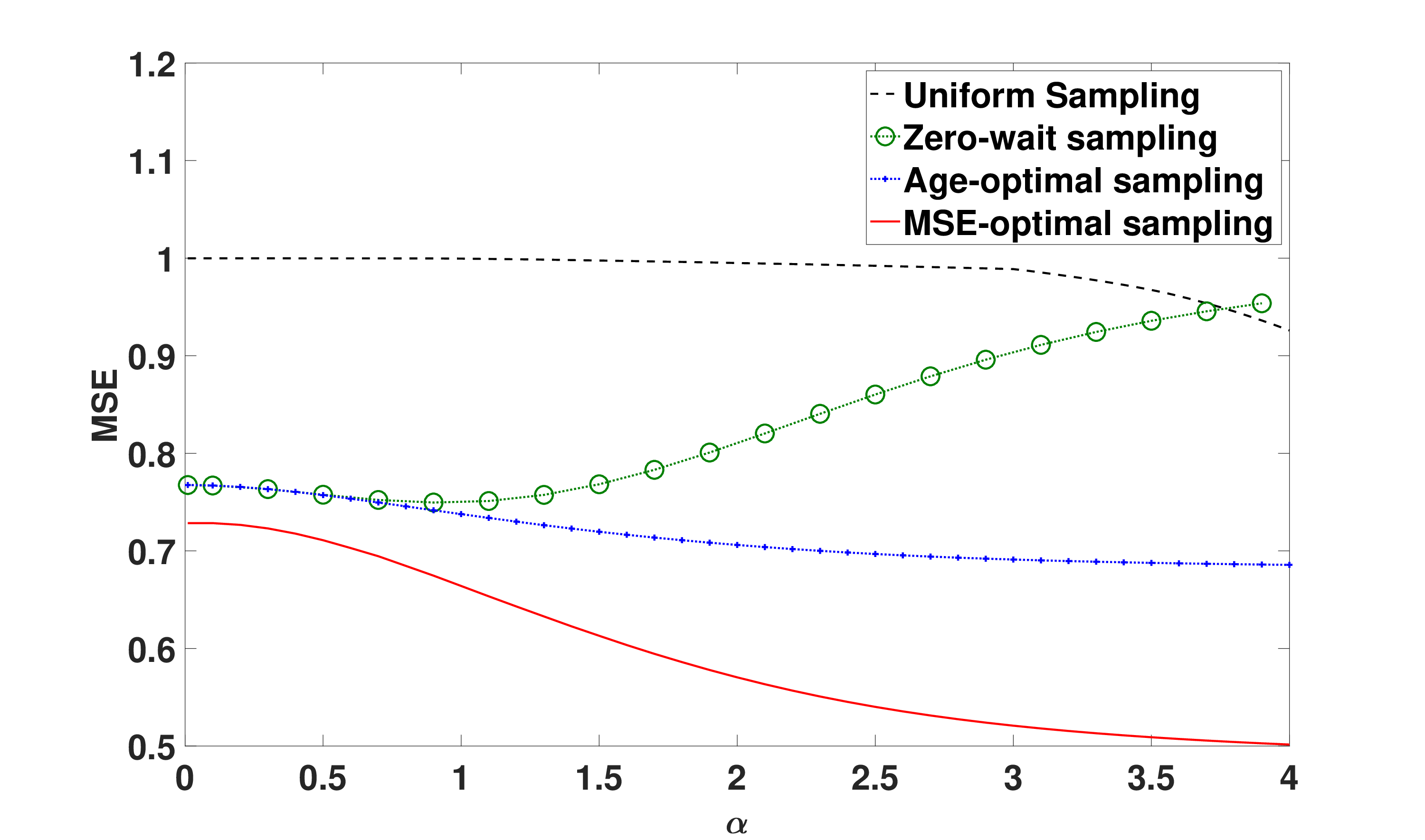}   
	\caption{MSE vs. the scale parameter ${\alpha}$ of \emph{i.i.d.} normalized log-normal service time distribution $\mathbb{E}[Y_i]=1$ and $f_{\max}=1.2$, where the parameters of the OU process are $\sigma=1$, $\theta=0.5$.} 
	\label{figure2_lognormal_high_fmax}
	\vspace{-3mm}
\end{figure}

\ignore{
\begin{figure}%[!t]
	\centering
	\includegraphics[width=0.5\textwidth]{figure11_quant1}   
	\caption{MSE vs. $f_{\max}$ tradeoff with and without adopting quantization for \emph{i.i.d.} exponential service time with $\mathbb{E}[Y_i]=1$, where the parameters of the OU process are $\sigma=1$, $\theta=0.5$, and the number of bits, $b=2$.} 
	\label{figure11_quant1}
	\vspace{-3mm}
\end{figure}

Figure \ref{figure11_quant1} illustrates the tradeoff netween the MSE and $f_{\max}$ for \emph{i.i.d.} exponential service times with mean $\mathbb{E} [Y_i] =1$ when quantization is adopted after sampling. We have consider 2 bits and so 4 quantization levels. Due to quantization noise, the MSE obtained after quantization shows some performance degradation. 

Figure \ref{} shows the performance comparisons when quantization is adopted after sampling 
}
%For both the cases, $\mathsf{mse}_{\text{opt}}$ drops with $\alpha$. The reason behind this phenomena is, when $Y_i$ is small, the age-penalty function increases quickly which leads to $\mathsf{mse}_{Y_i}$ changes quickly as well. But for high $Y_i$, the age-pealty function does not increase much, so as the $\mathsf{mse}_{Y_i}$. 
%%For both the cases $mathsf{mse}_{\text{opt}}$ drops with the growth of $\alpha$, this is because the set of feasible sampling policies increases.
%In summary, zero-wait and age-optimal sampling policies are all far from optimal if the service time follow a heavy-tail distribution.

%Figure \ref{figure_4} illustrates the comparison of the rate of convergence to the optimal $\beta$ between Bisection search, Newton's method, and Fixed-point Iteration. From Fig. \ref{figure_4}, for simulation setup, $\theta = 0.5$ and $\sigma = 1$, the convergence speed is much faster for Newton's method and Fixed-point Iteration. This is because of the quadratic convergence of Newton's method and Fixed-point Iteration whereas Bisection search follows a linear convergence structure.
%
%Figure \ref{figure5} demonstrates the linear convergence of Bisection search and quadratic convergence for Newton's method and Fixed-point Iteration.
\ignore{
\begin{figure}%[!t]
	\centering
	\includegraphics[width=0.5\textwidth]{fig_quantization}   
	\caption{MSE vs. $f_{\max}$ after adopting quantization after sampling with 8 bits. The service times are \emph{i.i.d.} and exponentially distributed with $\mathbb{E}[Y_i]=1$ and the parameters of the OU process are $\sigma=1$, $\theta=0.5$.} 
	\label{fig_quant}
	\vspace{-3mm}
\end{figure}

Figure \ref{fig_quant} illustrates the tradeoff between the MSE and $f_{\max}$ after adopting quantization with number of quantization bits=8. The service times are \emph{i.i.d.} and exponentially distributed with $\mathbb{E}[Y_i] = 1$. In practice, quantization and compression is adopted after sampling a real-valued signal for reducing the number of bits to be transmitted. The fluctuations in the MSE Optimal Policy is due to the quantization noise.}
\section{Conclusion} 
In this paper, we have studied the optimal sampler design for remote estimation of OU processes through queues. We have developed optimal causal sampling policies that minimize the estimation error of OU processes subject to a sampling rate constraint. These optimal sampling policies have nice structures and are easy to compute. A connection between remote estimation and nonlinear age metrics has been found. The structural properties of the optimal sampling policies shed lights on the possible structure of the optimal sampler designs for more general signal models, such as Feller processes, which is an important future research direction. 

\section*{Acknowledgement}
The authors are grateful to Thaddeus Roppel for a suggestion on this work. 

%Theorems \ref{thm1}-\ref{thm2} %provides  some hope on finding %

\bibliographystyle{IEEEtran}
\bibliography{ref,ref1,ref_2,sueh}

\appendices
%\section{Proof of \eqref{alpha_obj}}\label{app_11}
%
%The objective function for measuring the quality of estimation is measured via MMSE is:
%
%\begin{align}%\label{eq_MMSE}
%\limsup_{T\to \infty} {1\over T} \EE \left[\int_0^T (X_t - \hat X_t)^2 dt\right].\nonumber
%\end{align}
%
%As $T_i$ is a regenerative process, from the knowledge of renewal theory, we can write
% 
% \begin{align}
%=& \lim_{n\rightarrow \infty}\frac{\mathbb{E}\left[\int_0^{D_n} (X_t-\hat X_t)^2dt\right]}{\mathbb{E}[D_n]} \nonumber\\
%\end{align}
%
%If we consider $(i+1)$-th and $i$-th sample, we exactly get \eqref{alpha_obj}
%\ifreport

\section{Proof of Lemma \ref{beta_unique}}\label{beta_unique_proof}
%Let consider, $f_1(\beta)={\mathbb{E}\left[\int_{D_i(\beta)}^{D_{i+1}(\beta)}\! (X_t-\hat X_t)^2dt\right]}$,  $f_2(\beta)={\mathbb{E}[D_{i+1}(\beta)\!-\!D_i(\beta)]}$ and define $f(\beta) = \frac{f_1(\beta)}{f_2(\beta)}$. Therefore, \eqref{thm1_eq22} becomes, $f(\beta) = {\beta}$. Our goal is to find the number of roots of $f(\beta) - {\beta} = 0$. This equation can be expressed as
%\begin{align} \label{two_root}
%\frac{f_1(\beta) - {\beta} f_2(\beta)}{f_2(\beta)} = 0.
%\end{align}
Part (i): According to \eqref{lagrange} and \eqref{eq_beta_value1}, the Lagrangian $L(\pi,\beta)$ is linear and strictly decreasing in $\beta$. Further, \eqref{eq_primal} tells us that $f(\beta)$ is the infimum of $L(\pi,\beta)$ among all policies $\pi \in {\Pi}_{1}$. Because the infimum of a linear and strictly decreasing function is concave and strictly decreasing, $f(\beta)$ is concave and strictly decreasing in $\beta$. Moreover, because $f(\beta)$ is concave, it is also continuous.
%In \eqref{two_root_1}, both $f_1(\beta)$ and $f_2(\beta)$ are strictly increasing function on $\beta$. By using \eqref{eq_beta_value1} in \eqref{lagrange}, $L(\pi;{\beta})$ is linear on $\beta$. Again, from \eqref{eq_primal}, infimum of a concave function is concave. Therefore, $f(\beta)$ is concave as well as decreasing on $\beta$.

Part (ii): We first show that $f(\mathsf{mse}_{Y_i}) > 0$. 
According to \eqref{eq_threshold}, $v(\mathsf{mse}_{Y_i}) = 0$. This, together with \eqref{eq_R_1}, \eqref{eq_R_2}, and \eqref{two_root_1}, implies 
%From \eqref{eq_threshold}, at $\beta = \mathsf{mse}_{Y_i}$, $v(\beta)=0$ and
\begin{align}
f(\mathsf{mse}_{Y_i})=
& f_1(\mathsf{mse}_{Y_i}) - {\mathsf{mse}_{Y_i}} f_2(\mathsf{mse}_{Y_i}) \nonumber\\
=& \mathsf{mse}_{\infty} \{\mathbb{E} [Y_i] - {\gamma}\} + \mathbb{E} [O_{Y_i}^2] {\gamma} - \mathsf{mse}_{Y_i} \mathbb{E} [Y_i] \nonumber\\
=& \frac{{\sigma}^2}{2 {\theta}} \bigg\{ {\mathbb{E} [Y_i] - \frac{1}{2 {\theta}} \mathbb{E} {\left[1- e^{-2 {\theta} {Y_i}}\right]}} \nonumber\\
+ & {\frac{1}{2 {\theta}} \bigg\{\mathbb{E} {\left[1- e^{-2 {\theta} {Y_i}}\right]}\bigg\}^2} - {\mathbb{E} {\left[1- e^{-2 {\theta} {Y_i}}\right]} \mathbb{E} [Y_i]} \bigg\}.
\end{align}
Therefore, it suffices to prove that 
\begin{align} \label{beta_root_1}
& {\mathbb{E} [Y_i] - \frac{1}{2 {\theta}} \mathbb{E} [1- e^{-2 {\theta} {Y_i}}] + \frac{1}{2 {\theta}} \bigg\{\mathbb{E} [1- e^{-2 {\theta} {Y_i}}]\bigg\}^2} \nonumber\\
& {- \mathbb{E} [1- e^{-2 {\theta} {Y_i}}] \mathbb{E} [Y_i]} > 0,
\end{align}
which can be simplified as
%Again, divide \eqref{beta_root_1} by  ${\mathbb{E} [1 - e^{-2 {\theta} {Y_i}}]}$, it suffices to prove that 
\begin{align}
\left(\mathbb{E} [Y_i] - \frac{1}{2 {\theta}} \mathbb{E} [1 - e^{-2 {\theta} {Y_i}}] \right) \mathbb{E}[e^{-2 {\theta} {Y_i}}] > 0.
%\frac{\mathbb{E} [Y_i]}{\mathbb{E} [1 - e^{-2 {\theta} {Y_i}}]} - \frac{1}{2 {\theta}} {\geq} 0.
\end{align}
Because $x > 1 -e^{-x}$ for all $x >0$ and $\mathbb{E} [Y_i] >0$, we get
%For all $x {\geq} 0$, by using the inequality, $x {\geq} 1 - e^{-x}$, it is proved that
\begin{align} \label{pos}
\mathbb{E} [2 {\theta} {Y_i}] > \mathbb{E} [1 - e^{-2 {\theta} {Y_i}}].
\end{align}
By this, $f(\mathsf{mse}_{Y_i}) > 0$ is proven.

%Thus, (ii) of Lemma \ref{beta_unique} follows.

%The remaining task is to prove that $f_1(\beta) - {\beta} f_2(\beta)$ is negative at $\beta \to \mathsf{mse}_{\infty}$.
Finally, we prove that $f(\mathsf{mse}_{\infty}) < 0$. When $\beta \to \mathsf{mse}^{-}_{\infty}$, \eqref{eq_threshold} tells us that $v(\beta)$ grows to infinite. Further, according to \eqref{eq_R_1} and  \eqref{eq_R_2}, $R_1(v(\beta))$ and $R_2(v(\beta))$ are quite large compared to $R_1(O_{Y_i})$ and $R_2(O_{Y_i})$. Therefore, 
%\begin{align}
% \lim_{\!\!\!\!{{\beta} {\to} {\mathsf{mse}^{-}_{\infty}}}\!\!\!\!} {f_1(\beta)}= & \lim_{\!\!\!\!{{\beta} {\to} {\mathsf{mse}^{-}_{\infty}}}\!\!\!\!} \mathbb{E} [R_2(v(\beta))] + {\mathsf{mse}_{\infty}} [\mathbb{E} [Y_i] - {\gamma}] \nonumber\\
%& + \lim_{\!\!\!\!{{\beta} {\to} {\mathsf{mse}^{-}_{\infty}}}\!\!\!\!} {\mathbb{E} [v(\beta)^2]} {\gamma},
%\end{align}
%\begin{align}
%\lim_{{\beta} {\to} {\mathsf{mse}^{-}_{\infty}}} {f_2(\beta)}= \lim_{{\beta} {\to} {\mathsf{mse}^{-}_{\infty}}} \mathbb{E} [R_1(v(\beta))] + \mathbb{E} [Y_i]. 
%\end{align}
%Then,
\begin{align} \label{neg}
\lim_{\!\!\!\!{{\beta} {\to} {\mathsf{mse}^{-}_{\infty}}}\!\!\!\!} {f(\beta)} = & - {\frac{1}{2 {\theta}}} \lim_{{\beta} {\to} {\mathsf{mse}^{-}_{\infty}}}  {v^2(\beta)} \mathbb{E} [e^{-2 {\theta} {Y_i}}] - {\mathsf{mse}_{\infty} {\gamma}} \nonumber\\
 = & -{\infty}.
\end{align}
This completes the proof.

\section{Proof of Lemma \ref{beta_unique_2}}\label{beta_unique_proof_2}

%wrong proof

Part (i): From \eqref{eq_expectation1}, it is evident that the function $f_2(\beta)$ is continuous and hence, from \eqref{g}, $g(\beta)$ is also continuous. 
The derivatives of $R_1(v)$ in \eqref{eq_R_1}  and $v(\beta)$ in \eqref{eq_threshold} are given by
%\begin{align} \label{g'}
%g'(\beta) = - f'_2(\beta),
%\end{align}
%where %$f'_2(\beta)$ is the derivative of $f_2(\beta$ which is as follows
%\begin{align} \label{f_2'}
%f'_2(\beta) = & \Pr\{R_1(v(\beta)) > R_1(O_{Y_i})\} {R'_1(v(\beta)) v'(\beta)}, \\
%&+ {R_1(v(\beta))} \frac{d}{d{\beta}} \Pr\{R_1(v(\beta)) > R_1(O_{Y_i})\},\\
%\end{align}
%where
\begin{align} 
\label{R_1'}
R'_1 (v) =& {\frac{\sqrt{\pi}}{{\sigma} {\sqrt{\theta}}}} {\text{erf}}{\bigg( \frac{\sqrt{\theta}}{\sigma} v \bigg)} e^{\frac{\theta}{{\sigma}^2} v^2},\\
%\end{align}
%\begin{align} 
\label{v'}
v'(\beta) = &\frac{\sigma}{\sqrt{\theta}} \bigg\{G^{-1} \left(\frac{\mathsf{mse}_{\infty} - \mathsf{mse}_{Y_i}}{\mathsf{mse}_{\infty} - {\beta}}\right) \bigg\}'.
\end{align}
Denote 
\begin{align}
G^{-1} \bigg(\frac{\mathsf{mse}_{\infty} - \mathsf{mse}_{Y_i}}{\mathsf{mse}_{\infty} - {\beta}}\bigg) = y.
\end{align}
Then, by using the derivative of inverse function \cite{strangcalculus}, $v'(\beta)$ in \eqref{v'} becomes
\begin{align}\label{eq_v_derivative}
v'(\beta) = \frac{\sigma}{\sqrt{\theta}} \frac{1}{G'(y)} \frac{\mathsf{mse}_{\infty} - \mathsf{mse}_{Y_i}}{(\mathsf{mse}_{\infty} - {\beta})^2},
\end{align}
where %$G'(\cdot)$ is the derivative of $G(x)$ in \eqref{eq_g1} and is given by
\begin{align}
 G'(x) = \sqrt{\pi} e^{x^2} \text{erf} (x) - \frac{e^{x^2}}{x^2} \frac{\sqrt{\pi}}{2} \text{erf} (x) + \frac{1}{x}> 0
 \end{align}
 for all $x>0$. Hence, $v(\beta)$ is strictly increasing in $\beta$. From \eqref{R_1'}, we know $R_1'(v) > 0 $, i.e., $R_1(v)$ is strictly increasing in $v$.  Therefore, $R_1(v(\beta))$ is strictly increasing in $\beta$. This further implies that in \eqref{eq_expectation1}, $\max\{R_1 (v(\beta)) - R_1 (O_{Y_i}), 0\}$ is strictly increasing in $\beta$. Therefore, $\mathbb{E}[\max\{R_1 (v(\beta)) - R_1 (O_{Y_i}), 0\}]$ is also strictly increasing in $\beta$ and hence, $f_2(\beta)$ is strictly increasing in $\beta$. Then, by \eqref{g}, $g(\beta)$ is strictly decreasing in $\beta$.
 %${\Pr\{R_1(v(\beta)) > R_1(O_{Y_i})\}}$ is strictly increasing in $\beta$. %Hence, $ d \Pr\{R_1(v(\beta)) > R_1(O_{Y_i})\}/ d \beta$ is positive. We also know that $R_1(v(\beta))>0$ for all $\beta > 0$. 
% Substituting these results into \eqref{f_2'}, yields that 
% $f_2(\beta)$ are strictly increasing. By \eqref{g'}, $g(\beta)$ is strictly decreasing.
 %$v'(\beta) > 0$ for all ${\beta}>0$. 
% Hence, $f'_2(\beta) > 0$ which leads to $g'(\beta) < 0$ in \eqref{g'}. Therefore, $g(\beta)$ is strictly decreasing.
 This completes the proof.

Part (ii): We first show that $g(\mathsf{mse}_{Y_i}) {\geq} 0$. 
%At the point $\beta = \mathsf{mse}_{Y_i}$, it holds that $v(\beta) = 0$ and $f_2(\beta) = \mathbb{E} [Y_i]$. Hence,
%\begin{align} \label{g_1}
%g(\mathsf{mse}_{Y_i}) = & \frac{1}{f_{\max}} - \mathbb{E} [Y_i].
%\end{align}
If the root of \eqref{thm1_eq22} does not satisfy \eqref{thm1_eq24}, then, let $\beta^{*}$ is the root of \eqref{g}. Therefore, $g(\beta^{*}) = 0$. As $\mathsf{mse}_{Y_i} \leq \beta \leq \mathsf{mse}_{\infty}$ and from part (i), $g(\beta)$ is strictly decreasing in $\beta$, we get that
\begin{align}
g(\mathsf{mse}_{Y_i}) \geq g(\beta^{*}) =0.
%\geq \
\end{align} 
Hence, $g(\mathsf{mse}_{Y_i}) \geq 0$.
%$1/f_{\max}$ must be greater than or equal to $f_2(\beta)$ and and hence $\mathbb{E} [Y_i]$ which imples $g(\mathsf{mse}_{Y_i}) \geq 0$.
%By substituting \eqref{eq_R_1} and \eqref{thm1_eq25} into \eqref{g_1}, we get
%\begin{align}
%g(\mathsf{mse}_{Y_i}) = \mathbb{E} [{\max} \{R_1(v(\beta)) - R_1(O_{Y_i}), 0\}] {\geq} 0.
%\end{align}

Finally, as ${\beta} {\to} {\mathsf{mse}^{-}_{\infty}}$, because $v(\beta)$ grows to infinite, $R_1(v(\beta))$ becomes quite large compared to $R_1(O_{Y_i})$. Hence, 
%and so $f_2(\beta)$ tends to ${\infty}$,
%by using the same argument in Appendix \ref{beta_unique_proof},
\begin{align}
\lim_{{\beta} {\to} {\mathsf{mse}^{-}_{\infty}}} {g(\beta)} = & \frac{1}{f_{\max}} - \lim_{{\beta} {\to} {\mathsf{mse}^{-}_{\infty}}} R_1(v(\beta)) \nonumber\\
= & -{\infty}.
%\frac{1}{f_{\max}} - \lim_{{\beta} {\to} {\mathsf{mse}^{-}_{\infty}}} \mathbb{E} [R_1(v(\beta)) < 0.
\end{align}
This complete the proof.

\section{Proof of (\ref{eq_esti})}\label{app_MMSE}
The MMSE estimator $\hat X_t$ can be expressed as 
\begin{align}\label{eq_app_MMSE1}
\hat{X}_{t} & =  \mathbb{E}[{X}_t | M_t ] \nonumber\\
& =\mathbb{E}[{X}_t | \{(S_j, X_{S_j}, D_j): D_j \leq t\}]. 
\end{align}
Recall that $U_t= \max\{S_{i}: D_{i} \leq t\}$ is the generation time of the latest received sample at time $t$.
According to  the strong Markov property of $X_t$ \cite[Eq. (4.3.27)]{Peskir2006} and the fact that the $Y_i$'s are independent of $\{X_t, t\geq 0\}$, $\{U_t, X_{U_t}\}$ is a sufficient statistic for estimating $X_t$ based on $\{(S_j, X_{S_j}, D_j): D_j \leq t\}$. If $t\in[D_i,D_{i+1})$,  \eqref{eq_age2} suggests that $U_t=S_i$ and $X_{U_t}=X_{S_i}$. This and \eqref{eq_process} tell us that, if $t\in[D_i,D_{i+1})$, then
\begin{align}
\hat{X}_{t}  =  ~&\mathbb{E}[{X}_t | \{(S_i, X_{S_i}, D_i): D_i \leq t\}] \nonumber\\ 
=~&\mathbb{E}[{X}_t | S_i,X_{S_i}] \nonumber\\ 
=~&X_{S_i} e^{-\theta (t-S_i)}+ \mu\big[1-e^{-\theta (t-S_i)} \big].
\end{align}
This completes the proof.

\section{Proof of Lemma \ref{lem_estimation_error}}\label{app_lem_estimation_error}
In any signal-ignorant policy, because the sampling times $S_i$ and the service times $Y_i$ are both independent of $\{X_t, t\geq 0\}$, the delivery times $D_i$ are also independent of $\{X_t, t\geq 0\}$.
Hence, for any $t\in[ D_i,D_{i+1})$,
\begin{align}\label{eq_lem2_1}
&~\mathbb{E}\left[(X_t-\hat X_t)^2\big|S_i,D_i,D_{i+1}\right] \nonumber\\
%\overset{(a)}{=} &~\mathbb{E}\left[(X_t-\hat X_t)^2| \pi\right] \nonumber\\
\overset{(a)}{=}&~ \mathbb{E}\left[\frac{\sigma^2}{{2\theta}}e^{-2\theta (t-S_i)} W^2_{e^{2 \theta (t-S_i)}-1}\bigg| S_i,D_i,D_{i+1}\right]\nonumber\\
\overset{(b)}{=}&~ \frac{\sigma^2}{{2\theta}}\left[1-e^{-2\theta (t-S_i)}\right], 
\end{align}
where Step (a) is due to \eqref{eq_process}-\eqref{eq_esti} and Step (b) is due to $\mathbb{E}[W_t^2] = t$ for all constant $t\geq 0$. We note that in signal-aware sampling policies, 
\begin{align}
(X_t-\hat X_t)^2=\frac{\sigma^2}{{2\theta}}e^{-2\theta (t-S_i)} W^2_{e^{2 \theta (t-S_i)}-1}
\end{align} 
could be correlated with $(S_i,D_i,D_{i+1})$ and hence Step (b) of \eqref{eq_lem2_1} may not hold.
Substituting \eqref{eq_age2} into \eqref{eq_lem2_1}, yields that for all $t\geq 0$
\begin{align}
\mathbb{E}\left[(X_t-\hat X_t)^2\big|\pi, Y_1,Y_2, \ldots\right]= \frac{\sigma^2}{{2\theta}}\left(1-e^{-2\theta \Delta_t}\right),
\end{align}
which is strictly increasing in $\Delta_t$. This completes the proof. 

\section{Proof of (\ref{eq_Wiener})}\label{app_Wiener}

When $\sigma = 1$, \eqref{eq_threshold11} can be expressed as 
\begin{align}\label{Eq_v}
 \left({1} - {{2{\theta}{\beta}}{}}\right) G\left({\sqrt \theta}{}v\right) = \EE \left[e^{-2\theta Y_{i}}\right],
%-\frac{1}{\theta} + \left(\frac{1}{\theta}-{2\beta}\right) F\left({\sqrt \theta}{}v\right) = 2 \gamma   .
\end{align}
The error function  $\text{erf}(x)$
%\begin{align}\label{eq_erf}
%\text{erf}(x) = \frac{2}{\sqrt{\pi}} \int_0^{x} e^{-t^2} dt,
%\end{align}
%which 
has a Maclaurin series representation, given by
\begin{align}
& \text{erf}(x) = {\frac{2}{\sqrt{\pi}}} \left[x - \frac{x^3}{3} + o(x^3)\right]. %\frac{x^5}{10} - \frac{x^7}{42} + \dots
\end{align}
%According to , 
%\begin{align}
%& G(x) = \frac{e^{x^2}}{x} \int_0^{x} e^{-t^2} dt = \frac{e^{x^2}}{x} {\frac{\sqrt{\pi}}{2}} \text{erf}(x),
%\end{align}
%where
Hence, the Maclaurin series representation of $G(x)$ in \eqref{eq_g1} is
\begin{align}
& G(x) =  1 + \frac{2x^2}{3} + o(x^2). %\frac{4{x^5}}{15} + \frac{8{x^7}}{105} + \dots 
\end{align}
Let $x = \sqrt{\theta}v$, we get
\begin{align}
& G\left(\sqrt{\theta}v\right) = 1+ \frac{2}{3} {\theta} v^2 + o(\theta). 
%+ \frac{4}{15} \frac{{\theta}^{\frac{5}{2}}}{{\sigma}^5} v^5 + \frac{8}{105} \frac{{\theta}^{\frac{7}{2}}}{{\sigma}^7} v^7 + \dots \nonumber\\
\end{align}
In addition, 
\begin{align}
\EE \left[e^{-2\theta Y_{i}}\right]  = 1-2\theta\EE[Y_{i}]+o(\theta).
\end{align}
Hence, \eqref{Eq_v} can be expressed as
\begin{align}\label{Eq_taylor}
\left({1}{}-{2\beta\theta}\right)\! \left[1+ \frac{2}{3} {\theta} v^2 + o(\theta)\right] = 1-2\theta\EE[Y_{i}]+o(\theta).\!\!
\end{align}
Expanding \eqref{Eq_taylor}, yields
%From the above expression and from \eqref{eq_gamma}, it is clear that for the case of $\theta \to 0$, all the higher order terms of $\theta$ including $\gamma $ will become $0$. So, the only remaining part of \eqref{Eq_taylor} is as follows
\begin{align}\label{Eq_taylor1}
2\theta\EE[Y_{i}] - 2{\beta} \theta+ \frac{2}{3} \theta{v^2} + o(\theta) =0.
\end{align}
Dividing by $\theta$ and letting $\theta \to 0$ on both sides of \eqref{Eq_taylor1},  yields
\begin{align}\label{Eq_taylor2}
v^2-3(\beta-\EE[Y_{i}])  =0. 
\end{align}
Equation \eqref{Eq_taylor2} has two roots $ -\sqrt{3(\beta-\EE[Y_{i}])}$, and $\sqrt{3(\beta-\EE[Y_{i}])}$. If  $v_*= -\sqrt{3(\beta-\EE[Y_{i}])}$, the free boundary problem in \eqref{eq_free1}-\eqref{eq_free3} are invalid. Hence, as $\theta \to 0$ and $\sigma = 1$, the root of \eqref{eq_threshold} is $v_*= \sqrt{3(\beta-\EE[Y_{i}])}$. This completes the proof.
\ignore{
Hence, the Maclaurin series representation of $F(x)$ is
\begin{align}
& F(x) = x + \frac{2{x^3}}{3} + o(x^3). %\frac{4{x^5}}{15} + \frac{8{x^7}}{105} + \dots 
\end{align}
Let $x = \sqrt{\theta}v$, we get
\begin{align}
& F\left(\sqrt{\theta}v\right) = \sqrt{\theta}v+ \frac{2}{3} {{\theta}^{\frac{3}{2}}} v^3 + o\left(\theta^{\frac{3}{2}}\right). 
%+ \frac{4}{15} \frac{{\theta}^{\frac{5}{2}}}{{\sigma}^5} v^5 + \frac{8}{105} \frac{{\theta}^{\frac{7}{2}}}{{\sigma}^7} v^7 + \dots \nonumber\\
\end{align}
In addition, 
\begin{align}
\gamma  = -\EE[Y_i]+o(\theta).
\end{align}
Hence, \eqref{Eq_v} can be expressed as
\begin{align}\label{Eq_taylor}
\!\!-\frac{v}{\theta} + \left(\frac{1}{\theta^{3/2}}-\frac{2\beta}{\sqrt \theta}\right)\! \left[\sqrt{\theta}v+ \frac{2}{3} {{\theta}^{\frac{3}{2}}} v^3 + o\left(\theta^{\frac{3}{2}}\right)\right] = -2\EE[Y_i]v+o(\theta).\!\!
\end{align}
By expanding \eqref{Eq_taylor}, it is not hard to get 
%From the above expression and from \eqref{eq_gamma}, it is clear that for the case of $\theta \to 0$, all the higher order terms of $\theta$ including $\gamma $ will become $0$. So, the only remaining part of \eqref{Eq_taylor} is as follows
\begin{align}\label{Eq_taylor1}
2\EE[Y_i]v - 2{\beta}v + \frac{2}{3} {v^3} + o(\theta) =0.
\end{align}
Let $\theta \to 0$, \eqref{Eq_taylor1} becomes
\begin{align}v^3-3(\beta-\EE[Y_i]) v =0, 
\end{align}
which have three roots $ 0$, $ -\sqrt{3(\beta-\EE[Y_i])}$, and $\sqrt{3(\beta-\EE[Y_i])}$. If  $v_*= 0$ or $v_*= -\sqrt{3(\beta-\EE[Y_i])}$, the free boundary problem in \eqref{eq_free1}-\eqref{eq_free3} are invalid. Hence, as $\theta \to 0$ and $\sigma = 1$, the root of \eqref{eq_threshold} is $v_*= \sqrt{3(\beta-\EE[Y_i])}$. This completes the proof.
}

%{\red do we know $v>0$? Why? If we do not know this, we cannot eliminate $v=0$ and $v=-\sqrt{3\beta}$.}
%Let $\theta \to 0$ and solve $v$, we  obtain
%\begin{align}
%& \lim_{\theta \to 0} v = \sqrt{3\beta}.
%\end{align}

\section{Proof of Lemma \ref{lem_stop}}\label{app_lem4}

We first prove \eqref{eq_stop}. It is known that the OU process $O_t$ is a Feller process \cite[Section 5.5]{Liggett2010}. By using a property of Feller process in   \cite[Theorem 3.32]{Liggett2010}, we get that
\begin{align}
&O_t^2 -  \int_0^t \mathcal{G} (O_s^2 ) ds\nonumber\\
=&O_t^2 - \int_0^t (- \theta O_s 2 O_s + \sigma^2) ds\nonumber\\
=& O_t^2 - \sigma^2 t +  2 \theta \int_0^t  O_s ^2 ds
\end{align}
is a martingale, where $\mathcal{G}$ is the infinitesimal generator of $O_t$ defined in \eqref{eq_generator}.
According to \cite{Durrettbook10}, the minimum of two stopping times is a stopping time and constant times are stopping times. Hence, $t{\wedge}{\tau}$ is a bounded stopping time for every $t\in[{0,\infty})$, where $x{\wedge}y=\min\{x, y\}$. Then, by Theorem 8.5.1 of \cite{Durrettbook10}, for every $t\in[{0,\infty})$
\begin{align}\label{eq_app_lem4_1}
&\mathbb{E}\left[\int_{0}^{t{\wedge}{\tau}}{O_s}^{2}ds\right] =\mathbb{E}\bigg[\frac{{\sigma}^2}{2\theta}(t{\wedge}{\tau})\bigg]-\mathbb{E}\bigg[\frac{1}{2\theta}O_{t{\wedge}{\tau}}^2\bigg].
\end{align}
Because $\mathbb{E}\left[\int_{0}^{t{\wedge}{\tau}}O_s^{2}ds\right]$ and $\mathbb{E}[t{\wedge}{\tau}]$ are positive and increasing with respect to $t$, by using the monotone convergence theorem \cite[Theorem 1.5.5]{Durrettbook10}, we get
\begin{align}\label{eq_app_lem4_4}
\lim_{t\to\infty} \mathbb{E}\left[\int_{0}^{t{\wedge}{\tau}}O_s^2ds\right]&=\mathbb{E}\left[\int_{0}^{\tau}O_s^2ds\right],\\
\lim_{t\to\infty}\mathbb{E}[(t{\wedge}{\tau})]&=\mathbb{E}[{\tau}] . \label{eq_app_lem4_3}
\end{align}
%Combining \eqref{eq_app_lem4_1}-\eqref{eq_app_lem4_3}, yields that the limit $\lim_{t\to\infty}\mathbb{E}\left[O_{t{\wedge}{\tau}}^2\right]$ exists.
In addition, according to \cite[Theorem 2.2]{Zhao2020}, 
\begin{align}
\EE \left[\max_{0\leq s\leq \tau}O_{s}^2\right] \leq \frac{C}{\theta} \EE\left[\log\left(1+ \frac{\theta \tau}{\sigma}\right)\right]\leq \frac{C}{\sigma}\EE\left[ \tau\right]<\infty.
\end{align}
Because $O_{t\wedge \tau}^2 \leq \sup_{0\leq s\leq \tau} O_{s}^2$ for all $t$ and $\sup_{0\leq s\leq \tau} O_{s}^2$ is integratable, by invoking the dominated convergence theorem \cite[Theorem 1.5.6]{Durrettbook10}, we have
\begin{align}\label{eq_goal}
&\lim_{t\to\infty}\mathbb{E}\left[O_{t{\wedge}{\tau}}^2\right]=\mathbb{E}\left[{O_{\tau}^2}\right].
\end{align}
Combining \eqref{eq_app_lem4_4}-\eqref{eq_goal}, \eqref{eq_stop} is proven. 

%Using the above arguments, we can also show that $R_1(v)$ in \eqref{eq_R_1} is a solution of ODE
%\begin{align}
%\frac{\sigma^2}{2} R_1''(v) - {\theta} v R_1'(v) =1,
%\end{align}
%and $R_2(v)$ in \eqref{eq_R_2} is a solution of ODE
%\begin{align}
%\frac{\sigma^2}{2} R_2''(v) - {\theta} v R_2'(v) =v^2.
%\end{align}

%{\red This part should be double checked}
We now prove \eqref{eq_stop11} and \eqref{eq_stop12}. By using the solution of the ODE in Appendix \ref{app_eq_W}, one can show that $R_1(v)$ in \eqref{eq_R_1} is the solution to the following ODE
\begin{align}
\frac{\sigma^2}{2} R_1''(v) - {\theta} v R_1'(v) =1,
\end{align}
and $R_2(v)$ in \eqref{eq_R_2} is the solution to the following ODE
\begin{align}
\frac{\sigma^2}{2} R_2''(v) - {\theta} v R_2'(v) =v^2.
\end{align}
In addition, $R_1(v)$ and $R_2(v)$ are twice continuously differentiable.
According to Dynkin's formula in \cite[Theorem 7.4.1 and the remark afterwards]{Bernt2000}, because the initial value of $O_t$ is $O_0=0$, if $\tau$ is the first exit time of a bounded set, then
%it holds for all stopping time $\tau$ with $\EE[\tau]<\infty$ that
\begin{align}
\!\!\!\EE_0 [R_1(O_\tau)] &= R_1(0) + \EE_0 \left[\int_0^\tau 1 ds\right] = R_1(0) + \EE_0 [\tau],\!\!\!\\
\!\!\!\EE_0 [R_2(O_\tau)] &= R_2(0) + \EE_0 \left[\int_0^\tau O_s^2 ds\right].\!\!\!
\end{align}
Because $R_1(0)=R_2(0)  = 0$, \eqref{eq_stop11} and \eqref{eq_stop12} follow. 
%Similarly, it is shown in Appendix \ref{app_eq_W} that $R_2(v)$ in \eqref{eq_R_2} satisfies
%\begin{align}
%\frac{\sigma^2}{2} R_2''(v) - {\theta} v R_2'(v) =v^2.
%\end{align}
%By invoking Dynkin's formula in \cite[Theorem 7.4.1]{Bernt2000} again, it holds for all stopping time $\tau$ with $\EE[\tau]<\infty$ that
%
%Because $R_2(0) = 0$, the first equality in \eqref{eq_stop} follows. 
This completes the proof.

%Equation \eqref{eq_stop11} and  will be proven later in Appendix \ref{app_eq_W}. 

\section{Proof of Lemma \ref{lem_zeroqueue}}\label{app_lem1}

Suppose that in the sampling policy $\pi$, sample $i$ is generated when the server is busy sending another sample, and hence sample $i$  
needs to wait for some time before being submitted to the server, i.e., $S_i<G_i$. 
%and (ii) following the given queue service discipline, sample $i$ is stored in the queue after it is generated, and another sample waiting in the queue is submitted to the server. 
Let us consider a \emph{virtual} sampling policy $\pi' = \{S_0,\ldots, S_{i-1},G_i,$ $ S_{i+1},\ldots\}$ such that the generation time of sample $i$ is postponed from $S_i$ to $G_i$. We call policy $\pi'$ a virtual policy because it may happen that  $G_i > S_{i+1}$. 
%generation time of sample $i$ in policy $\pi'$ is $S'_i =G_i > S_i$ and hence . 
However, this will not affect our  proof below. We will show that the MSE of the sampling policy $\pi'$ is smaller than that of the sampling policy $\pi =\{S_0,\ldots, S_{i-1},S_i, S_{i+1},\ldots\}$. 
%The Wiener process $\{X_t: t\in[0, \infty)\}$ remain the same under any policy. In addition, because 
%For any given non-preemptive queueing discipline, 

Note that $\{X_t: t\in[0, \infty)\}$ does not change according to the sampling policy, and the sample delivery times $\{D_0,D_1, D_2,\ldots\}$ remain the same in policy $\pi$ and policy $\pi'$. %Because the sample delivery times $\{D_1, D_2,\ldots\}$ are uniquely determined by the server states, $\{D_1, D_2,\ldots\}$ remains the same in both policy $\pi$ and policy $\pi'$.
Hence, the only difference between policies $\pi$ and $\pi'$ is that \emph{the generation time of sample $i$ is postponed from $S_i$ to $G_i$}. 
The MMSE estimator under policy $\pi$ is given by \eqref{eq_esti} and the  MMSE estimator under policy $\pi'$ is given by
\begin{align}\label{eq_esti_pi1}
\hat X_t' = &\mathbb{E}[X_{t}|(S_j, X_{S_j}, D_j)_{j\leq i-1}, (G_i, X_{G_i}, D_i)]\nonumber\\
 = &\left\{\begin{array}{l l} 
\mathbb{E}[X_{t}|X_{G_i},G_i],& t\in[D_i,D_{i+1}); \\
\mathbb{E}[X_{t}|X_{S_j},S_j],& t\in[D_j,D_{j+1}),~j\neq i. \\
\end{array}\right.
\end{align}

Next, we consider a third virtual sampling policy $\pi''$ in which the samples $(X_{G_i}, G_i)$ and  $(X_{S_i}, S_i)$ are both delivered to the estimator at the same time $D_i$. Clearly, the estimator under policy $\pi''$ has more information than those under policies $\pi$ and $\pi'$. By following the arguments in Appendix \ref{app_MMSE}, one can show that the MMSE estimator under policy $\pi''$ is
\begin{align}\label{eq_esti_pi2}
\hat X_t'' =& \mathbb{E}[X_{t}|(S_j, X_{S_j}, D_j)_{j\leq i}, (G_i, X_{G_i}, D_i)] \nonumber\\
=&\left\{\begin{array}{l l} 
\mathbb{E}[X_{t}|X_{G_i},G_i],& t\in[D_i,D_{i+1}); \\
\mathbb{E}[X_{t}|X_{S_j},S_j],& t\in[D_j,D_{j+1}),~j\neq i. \\
\end{array}\right.
\end{align}
Notice that, because of the strong Markov property of OU process, the estimator under policy $\pi''$ uses the fresher sample $(X_{G_i},G_i)$, instead of the stale sample $(X_{S_i},S_i)$, to construct $\hat X_t''$ during $[D_i,D_{i+1})$. Because the estimator under policy $\pi''$ has more information than that under policy $\pi$, one can imagine that policy $\pi''$ has a smaller estimation error than policy $\pi$, i.e., 
\begin{align}\label{eq_small_error}
\mathbb{E}\left\{\int_{D_i} ^{D_{i+1}} (X_t-\hat X_{t})^2dt\right\}\geq \mathbb{E}\left\{\int_{D_i} ^{D_{i+1}} (X_t-\hat X_{t}'')^2  dt\right\}.
\end{align}
To prove \eqref{eq_small_error}, we invoke the orthogonality principle of the MMSE estimator \cite[Prop. V.C.2]{Poor:1994} under policy $\pi''$ and obtain 
\begin{align}\label{eq_orthogonality}
\mathbb{E}\left\{\int_{D_i} ^{D_{i+1}}2 (X_t-\hat X_t'') (\hat X_t''- \hat X_t) dt\right\}=0,
\end{align}
where we have used the fact that  $(X_{G_i}, G_i)$ and  $(X_{S_i}, S_i)$ are available by the MMSE estimator under policy $\pi''$.
Next, from \eqref{eq_orthogonality}, we can get
\begin{align}
&\mathbb{E}\left\{\int_{D_i} ^{D_{i+1}} (X_t-\hat X_{t})^2dt\right\} \nonumber\\
= & \mathbb{E}\left\{\int_{D_i} ^{D_{i+1}} (X_t-\hat X_t'')^2 +(\hat X_t''- \hat X_{t})^2 dt\right\} \nonumber\\
& + \mathbb{E}\left\{\int_{D_i} ^{D_{i+1}}2 (X_t-\hat X_t'') (\hat X_t''- \hat X_{t}) dt\right\} \nonumber 
\end{align}
\begin{align}
= & \mathbb{E}\left\{\int_{D_i} ^{D_{i+1}} (X_t-\hat X_t'')^2 +(\hat X_t''- \hat X_{t})^2 dt\right\} \nonumber\\
\geq &\mathbb{E}\left\{\int_{D_i} ^{D_{i+1}} (X_t-\hat X_t'')^2  dt\right\}. 
\end{align}
In other words, the estimation error of policy $\pi''$ is no greater than that of policy $\pi$. Furthermore, by comparing \eqref{eq_esti_pi1} and \eqref{eq_esti_pi2}, we can see  that the MMSE estimators under policies $\pi''$ and $\pi'$ are exact the same. Therefore, the estimation error of policy $\pi'$ is no greater than that of policy $\pi$.

By repeating the above arguments for all samples $i$ satisfying $S_i<G_i$, one can show that the sampling policy $\{S_0,G_1,\ldots, G_{i-1},G_i,$ $ G_{i+1},\ldots\}$ is better than the sampling policy $\pi =\{S_0,S_1,\ldots, S_{i-1},S_i,$ $ S_{i+1},\ldots\}$.
%it is better to wait until the server becomes idle, and then take each new sample and send it out immediately. 
This completes the proof.

\section{Proof of (\ref{eq_simplification_1})}\label{app_eq_simplification_1}

According to Lemma \ref{lem_stop},
\begin{align}\label{eq_simplification}
&\mathbb{E}\left[\int_{Y_{i}+Z_i}^{Y_i+Z_i+Y_{i+1}} O_{s}^2ds\right] \nonumber\\
=& \frac{\sigma^2}{2\theta}\EE [Y_{i+1}] - \frac{1}{2\theta} \EE \left[ O_{Y_i+Z_i+Y_{i+1}}^2 - O_{Y_i+Z_i}^2\right].
\end{align}
The second term in \eqref{eq_simplification} can be expressed as 
\begin{align}
&\EE \left[O_{Y_i+Z_i+Y_{i+1}}^2 - O_{Y_i+Z_i}^2\right]\nonumber\\
=&\EE \left[ \left(O_{Y_i+Z_i} e^{-\theta Y_{i+1} } + \frac{\sigma}{\sqrt{2\theta}} e^{-{\theta}Y_{i+1}} W_{e^{2{\theta}Y_{i+1}}-1}\right)^2\right.\nonumber\\
&~~~~~~~~~~~~~~~~~~~~~~~~~~~~~~~~~~~~~~~~~~~~~~~~~-O_{Y_i+Z_i}^2\bigg]\!\!\nonumber\\
=& \EE \left[O_{Y_i+Z_i}^2 (e^{-2{\theta}Y_{i+1}} - 1) + \frac{\sigma^2}{{2\theta}} e^{-2{\theta}Y_{i+1}} W_{e^{2{\theta}Y_{i+1}}-1}^2 \right] \nonumber\\
& +\EE \left[2O_{Y_i+Z_i}e^{-{\theta}Y_{i+1}} \frac{\sigma}{\sqrt{2\theta}} e^{-{\theta}Y_{i+1}} W_{e^{2{\theta}Y_{i+1}}-1} \right].
%=&\mathbb{E} \{\left{[{O_{Y_i+Z_i}} {e^{-{{\theta}Y_{i+1}}}-1} + \frac{\sigma}{\sqrt{2\theta}} e^{-{\theta{Y_{i+1}}}} W_{{e^{2{\theta}{Y_{i+1}}}}-1}]^2\right}\} + 2r_{UU}(Y_{i+1})\nonumber\\
\end{align}
Because $Y_{i+1}$ is independent of $O_{Y_i+Z_i}$ and $W_t$, we have 
\begin{align}
 &\EE \left[O_{Y_i+Z_i}^2 (e^{-2{\theta}Y_{i+1}} - 1)\right]  = \EE \left[O_{Y_i+Z_i}^2\right]\EE \left[e^{-2{\theta}Y_{i+1}} - 1\right], 
\end{align}
 and
\begin{align}
& \EE \left[2O_{Y_i+Z_i}e^{-{\theta}Y_{i+1}} \frac{\sigma}{\sqrt{2\theta}} e^{-{\theta}Y_{i+1}} W_{e^{2{\theta}Y_{i+1}}-1} \right] \nonumber\\
=& \EE \left[2O_{Y_i+Z_i}\right]\EE\left[e^{-{\theta}Y_{i+1}} \frac{\sigma}{\sqrt{2\theta}} e^{-{\theta}Y_{i+1}} W_{e^{2{\theta}Y_{i+1}}-1} \right]\nonumber\\
\overset{(a)}{=}& \EE \left[2O_{Y_i+Z_i}\right]\!\EE\!\left[\EE \!\left[e^{-{\theta}Y_{i+1}} \frac{\sigma}{\sqrt{2\theta}} e^{-{\theta}Y_{i+1}} W_{e^{2{\theta}Y_{i+1}}-1} \bigg| Y_{i+1}\right]\right]\!\!.
\end{align}
where Step (a) is  due to the law of iterated expectations.
%\overset{(b)}{=}& 0,
Because $\EE[W_t]=0 $ for all constant $t\geq 0$, it holds for all realizations of $Y_{i+1}$ that
\begin{align}
\EE \left[e^{-{\theta}Y_{i+1}} \frac{\sigma}{\sqrt{2\theta}} e^{-{\theta}Y_{i+1}} W_{e^{2{\theta}Y_{i+1}}-1} \bigg| Y_{i+1}\right] =0.
\end{align}
Hence, 
\begin{align}
\EE \left[2O_{Y_i+Z_i}e^{-{\theta}Y_{i+1}} \frac{\sigma}{\sqrt{2\theta}} e^{-{\theta}Y_{i+1}} W_{e^{2{\theta}Y_{i+1}}-1} \right]  = 0.
\end{align}
In addition, 
\begin{align}
 &\EE \left[\frac{\sigma^2}{{2\theta}} e^{-2{\theta}Y_{i+1}} W_{e^{2{\theta}Y_{i+1}}-1}^2 \right] \nonumber\\
\overset{(a)}{=} &\frac{\sigma^2}{{2\theta}} \EE \left[\EE \left[e^{-2{\theta}Y_{i+1}} W_{e^{2{\theta}Y_{i+1}}-1}^2 \bigg| Y_{i+1}\right]\right] \nonumber\\
\overset{(b)}{=}&  \frac{\sigma^2}{{2\theta}}\EE \left[1-e^{-2{\theta}Y_{i+1}} \right],
\end{align}
where Step (a) is due to the law of iterated expectations and Step (b) is due to $\EE [W_t^2] = t$ for all constant $t\geq 0$. 
%&\EE \left[ O_{Y_i+Z_i+Y_{i+1}}^2 - O_{Y_i+Z_i}^2\right]\nonumber\\
%=& \EE \left[  (O_{Y_i+Z_i+Y_{i+1}}-O_{Y_i+Z_i})^2]+\EE\left[ 2 {O_{Y_i+Z_i+Y_{i+1}}}{O_{Y_i+Z_i}}\right]\nonumber\\
%=& \EE \left[O_{Y_i+Z_i}(e^{-\theta Y_{i+1}}-1) + \frac{\sigma}{\sqrt{2\theta}}e^{-\theta Y_{i+1}} W_{e^{2 \theta Y_{i+1}}-1}\right]^2 + 2r_{UU}(Y_{i+1})\nonumber\\
%=& \EE[O_{Y_i+Z_i}^2]\EE[(e^{-\theta Y_{i+1}}\!-\!1)^2] + \EE\left[ \frac{\sigma}{\sqrt{2\theta}}e^{-2\theta Y_{i+1}} W^2_{e^{2 \theta Y_{i+1}}-1}] + \EE\left[2O_{Y_i+Z_i}(e^{-\theta Y_{i+1}}-1)\frac{\sigma}{\sqrt{2\theta}}e^{-\theta Y_{i+1}} W_{e^{2 \theta Y_{i+1}}-1}\right] + 2r_{UU}{Y_{i+1}}\nonumber\\
%=& \EE[O_{Y_i+Z_i}^2]\EE[(e^{-\theta Y_{i+1}}\!-\!1)^2] + \EE\left[ \frac{\sigma}{\sqrt{2\theta}}{e^{-2\theta Y_{i+1}}} {W^2_{e^{2 \theta Y_{i+1}}-1}}]+2\EE\left[O_{Y_i+Z_i}\right] \EE\left[(e^{-{\theta}{Y_{i+1}}}2r_{UU}(Y_{i+1})\right\}\nonumber\\
%%=&  \EE [(e^{-\theta Y}-1)^2] \EE [O_{Y_i+Z_i}^2] + \frac{\sigma^2}{2\theta} \EE \left[1-e^{-2\theta Y}\right]+2r_{UU}(Y_{i+1}\nonumber\\
Hence, 
\begin{align}\label{eq_simplification11}
\!&\mathbb{E}\left[\int_{Y_{i}+Z_i}^{Y_i+Z_i+Y_{i+1}} O_{s}^2ds\right] \nonumber\\
\!=&\frac{\sigma^2}{2\theta}\EE [Y_{i+1}] + {\gamma }\EE \left[O_{Y_i+Z_i}^2\right] - \frac{\sigma^2}{4\theta^2} \EE \left[1-e^{-2{\theta}Y_{i+1}} \right]\nonumber\\
\!=&\frac{\sigma^2}{2\theta}[\EE (Y_{i+1}) - \gamma] + \EE \left[O_{Y_i+Z_i}^2\right]\gamma , % - \frac{\sigma^2}{4\theta^2} \EE \left[1-e^{-2{\theta}Y_{i+1}} \right],
\end{align}
where $\gamma $ is defined in \eqref{eq_gamma1}.
Using this, \eqref{eq_simplification_1} can be shown readily.

%$\beta \leq \mathsf{mse}_{\infty}$. 
%
%
%According to  ${\mathsf{mse}}_{\text{opt}} \leq \mathsf{mse}_{\infty}$.
%
%According to  We can use this result to prove $\mathsf{mse}_{\infty}$.
%
%
%We first prove ${\mathsf{mse}}_{\text{opt}} \leq \mathsf{mse}_{\infty}$. By Lemma \ref{lem_stop}, we have
%\begin{align}
%\EE\left[\int_{Y_{i}}^{Y_i+Z_i+Y_{i+1}} O_{s}^2ds\right] \leq& \frac{\sigma^2}{2\theta} \mathbb{E}\left[Z_i+Y_{i+1}\right] \nonumber\\
%=& \mathsf{mse}_{\infty}\mathbb{E}\left[Z_i+Y_{i+1}\right].
%\end{align}
%Substituting this into  \eqref{eq_Simple}, it is immediate that ${\mathsf{mse}}_{\text{opt}} \leq \mathsf{mse}_{\infty}$. 
%
%Next, we prove $\mathsf{mse}_{Y_i} \leq {\mathsf{mse}}_{\text{opt}}$. We can obtain
%\begin{align}
%&\EE\left[\int_{Y_{i}}^{Y_i+Z_i+Y_{i+1}} O_{s}^2ds\right]\nonumber\\
%=& \EE\left[\int_{Y_{i}}^{Y_i+Z_i+Y_{i+1}} (O_{s} - O_{Y_i} + O_{Y_i})^2ds\right]\nonumber\\
%=& \EE\left[\int_{Y_{i}}^{Y_i+Z_i+Y_{i+1}}\!\!\!  O_{Y_i}^2 + (O_{s} - O_{Y_i})^2 +2 O_{Y_i} (O_{s}\! - \! O_{Y_i}) ds\right]\!\!. \nonumber
%\end{align}
%

\section{Proof of Lemma \ref{eq_expectation}}\label{eq_expectation_1}
According to \eqref{eq_process} and \eqref{eq_esti}, the estimation error $(X_t-\hat X_t)$ is of the same distribution with $O_{t-S_i(\beta)}$ for $t\in[D_i(\beta),D_{i+1}(\beta))$. We will use $(X_t-\hat X_t)$ and $O_{t-S_i(\beta)}$ interchangeably  for $t\in[D_i(\beta),D_{i+1}(\beta))$. 
In order to prove Lemma \ref{eq_expectation}, we need to consider the following two cases:  

\emph{Case 1:} If $|X_{D_i(\beta)} - \hat X_{D_i(\beta)} |= |O_{Y_i} | \geq v(\beta)$, then \eqref{eq_opt_solution} tells us $S_{i+1}(\beta) = D_i(\beta)$. Hence, 
\begin{align}\label{eq_expectation_12}
D_{i} (\beta) &= S_{i}(\beta) + Y_{i}, \\
D_{i+1} (\beta)& = S_{i+1}(\beta) + Y_{i+1} = D_i (\beta) + Y_{i+1}.
\end{align} 
Using these and the fact that the $Y_i$'s are independent of the OU process, we can obtain
\begin{align}\label{eq_expectation_6}
\mathbb{E}\left[D_{i+1} (\beta) - D_i (\beta) \Big|O_{Y_i}, |O_{Y_i} |\geq v(\beta)\right] 
= \mathbb{E}[Y_{i+1}],
\end{align}
and 
\begin{align}\label{eq_expectation_9}
&\mathbb{E}\left[\int_{D_i(\beta)}^{D_{i+1}(\beta)}\! (X_t-\hat X_t)^2dt \bigg| O_{Y_i}, |O_{Y_i} | \geq v(\beta)\right] \nonumber\\
=& \mathbb{E} \bigg[\int_{Y_{i}}^{Y_i + Y_{i+1}} O_s^2 ds \bigg| O_{Y_i}, |O_{Y_i} | \geq v(\beta)\bigg] \nonumber\\
\overset{(a)}{=}& \frac{\sigma^2}{2\theta}\EE [Y_{i+1}] + {\gamma }O_{Y_i}^2 - \frac{\sigma^2}{4\theta^2} \EE \left[1-e^{-2{\theta}Y_{i+1}} \right] \nonumber\\
=& \mathsf{mse}_{\infty} [\mathbb{E} (Y_{i+1})- \gamma] +  O_{Y_i}^2\gamma,
\end{align}
where Step (a) follows from the proof of \eqref{eq_simplification11}.

\emph{Case 2:} If $|X_{D_i(\beta)} - \hat X_{D_i(\beta)} |= |O_{Y_i} | < v(\beta)$, then \eqref{eq_opt_solution} tells us that, almost surely,
\begin{align}
|X_{S_{i+1}(\beta)} - \hat X_{S_{i+1}(\beta)} | = v(\beta).
\end{align}
Let us consider the following equation: 
\begin{align}\label{eq_expectation_7}
&\mathbb{E}\left[D_{i+1} (\beta) - D_i (\beta) \Big|O_{Y_i}, |O_{Y_i} |< v(\beta)\right] \nonumber\\
=& \mathbb{E}\Big[(D_{i+1} (\beta) - S_{i+1} (\beta))+(S_{i+1} (\beta) - S_i (\beta))\nonumber\\
&~~~~~~~~~~~~~~~~~~- (D_i (\beta) - S_i (\beta))\left.\Big|O_{Y_i}, |O_{Y_i} |< v(\beta)\right].
\end{align}
Because $D_{i+1} (\beta) = S_{i+1}(\beta) + Y_{i+1}$, the remaining task is to find $\mathbb{E}\left[S_{i+1} (\beta) - S_i (\beta) \Big|O_{Y_i}, |O_{Y_i} |< v(\beta)\right]$, and $\mathbb{E}\left[D_{i} (\beta) - S_i (\beta) \Big| O_{Y_i}, |O_{Y_i} |< v(\beta)\right]$ to compute \eqref{eq_expectation_7}. By invoking Lemma \ref{lem_stop}, we can obtain
\begin{align}\label{eq_expectation_2}
&\mathbb{E}\left[S_{i+1} (\beta) - S_i (\beta) \Big|O_{Y_i}, |O_{Y_i} |< v(\beta)\right] = R_1 (v(\beta)), \\
&\mathbb{E}\left[D_{i} (\beta) - S_i (\beta) \Big| O_{Y_i}, |O_{Y_i} |< v(\beta)\right] = R_1 (|O_{Y_i} |),\label{eq_expectation_3}
\end{align}
%where \eqref{eq_expectation_2} holds because, almost surely, when $t=S_{i+1}(\beta)$, the observed OU process first reach $v(\beta)$, and \eqref{eq_expectation_3} holds because, when $t=D_i(\beta)$, the observed OU process is still below $v(\beta)$.
Substituting \eqref{eq_expectation_2}, \eqref{eq_expectation_3}, and $D_{i+1} (\beta) = S_{i+1}(\beta) + Y_{i+1}$ in \eqref{eq_expectation_7}, we get that
\begin{align}\label{eq_expectation_15}
&\mathbb{E}\left[D_{i+1} (\beta) - D_i (\beta) \Big|O_{Y_i}, |O_{Y_i} |< v(\beta)\right] \nonumber\\
=& \mathbb{E}[Y_{i+1}] + R_1 (v(\beta)) - R_1 (|O_{Y_i} |).
\end{align}
In addition, let us consider the following equation:
\begin{align}\label{eq_expectation_101}
&\mathbb{E}\left[\int_{D_i(\beta)}^{D_{i+1}(\beta)}\! (X_t-\hat X_t)^2dt \bigg| O_{Y_i}, |O_{Y_i} | < v(\beta)\right] \nonumber\\
=& \mathbb{E}\left[\int_{S_{i+1}(\beta)}^{D_{i+1}(\beta)}\! (X_t-\hat X_t)^2dt+\int_{S_i(\beta)}^{S_{i+1}(\beta)}\! (X_t-\hat X_t)^2dt\right.\nonumber\\
&~~~~~~~~~~\left.- \int_{S_i(\beta)}^{D_{i}(\beta)}\! (X_t-\hat X_t)^2dt\bigg|O_{Y_i}, |O_{Y_i} |< v(\beta)\right]
\end{align}
Next, we need to compute the expectations in \eqref{eq_expectation_101}. By invoking Lemma \ref{lem_stop} again, we can obtain
\begin{align}\label{eq_expectation_4}
&\mathbb{E}\left[\int_{S_{i}(\beta)}^{S_{i+1}(\beta)} (X_t-\hat X_t)^2dt \bigg|O_{Y_i}, |O_{Y_i} |< v(\beta)\right] \nonumber\\
= &\mathbb{E}\left[\int_{0}^{Y_i + Z_i} O_s^2ds \bigg|O_{Y_i}, |O_{Y_i} |< v(\beta)\right] = R_2 (v(\beta)),\\
&\mathbb{E}\left[\int_{S_{i}(\beta)}^{D_{i}(\beta)} (X_t-\hat X_t)^2dt \bigg| O_{Y_i}, |O_{Y_i} |< v(\beta)\right] \nonumber\\
= &\mathbb{E}\left[\int_0^{Y_i} O_s^2ds \bigg| O_{Y_i}, |O_{Y_i} |< v(\beta)\right]= R_2 (|O_{Y_i} |).\label{eq_expectation_5}
\end{align}
By substituting \eqref{eq_expectation_4}, \eqref{eq_expectation_5}, and \eqref{eq_simplification11} in \eqref{eq_expectation_101}, we have
\begin{align}\label{eq_expectation_10}
&\mathbb{E}\left[\int_{D_i(\beta)}^{D_{i+1}(\beta)}\! (X_t-\hat X_t)^2dt \bigg| O_{Y_i}, |O_{Y_i} | < v(\beta)\right] \nonumber\\
=& \mathsf{mse}_{\infty} [\mathbb{E} (Y_{i+1}) - \gamma] +  v^2(\beta) \gamma+ R_2 (v(\beta)) - R_2 (|O_{Y_i} |).
\end{align}

By combining \eqref{eq_expectation_6} and \eqref{eq_expectation_15} of the two cases, yields
\begin{align}\label{eq_expectation_8}
&\mathbb{E}\left[D_{i+1} (\beta) - D_i (\beta) \Big|O_{Y_i}\right] \nonumber\\
= &\mathbb{E}[Y_{i+1}] + \max\{R_1 (v(\beta)) - R_1 (|O_{Y_i} |),0\}.
\end{align}
Similarly, by combining \eqref{eq_expectation_9} and \eqref{eq_expectation_10}  of the two cases, yields
\begin{align}\label{eq_expectation_11}
&\mathbb{E}\left[\int_{D_i(\beta)}^{D_{i+1}(\beta)}\! (X_t-\hat X_t)^2dt \bigg|O_{Y_i}\right] \nonumber\\
= &\mathsf{mse}_{\infty} [\mathbb{E} (Y_{i+1}) - \gamma] +  \max\{v^2(\beta),O_{Y_i}^2\} \gamma\nonumber\\
&+ \max\{R_2 (v(\beta)) - R_2 (|O_{Y_i} |),0\}.
\end{align}
Finally, by taking the expectation over $O_{Y_i}$ in \eqref{eq_expectation_8} and \eqref{eq_expectation_11} and using the fact that $R_1(\cdot)$ and $R_2(\cdot)$ are even functions, Lemma \ref{eq_expectation} is proven.

\section{Proof of Lemma \ref{thm_solution_form}}\label{app_thm_solution_form}
Because the $Y_i$'s are \emph{i.i.d.}, \eqref{eq_simplification_1} is determined by the control decision $Z_i$ and the information $(O_{Y_i},Y_i)$. Hence, $(O_{Y_i},Y_i)$ is a \emph{sufficient statistic} for determining $Z_i$ in \eqref{eq_primal}. Therefore, there exists an optimal policy $(Z_0,Z_1,\ldots)$ to \eqref{eq_primal}, in which $Z_i$ is determined based on only  $(O_{Y_i},Y_i)$. %, but not the history transmission durations $(Y,\ldots, Y_{i-1})$ and signal values $(W_t: t\in[0,S_i+Y_i])$. 
By this, \eqref{eq_primal} is decomposed into a sequence of per-sample MDPs, given by \eqref{eq_opt_stopping}. 
This completes the proof. 

%\fi

\section{Proof of (\ref{eq_W})} \label{app_eq_W}
%\begin{proof}[Proof of (\ref{eq_W})] 
Define $S(v)= H'(v)$. Then, \eqref{eq_free1} becomes
\begin{align}\label{eq_app_eq_W1}
& S'(v) - \frac{2{\theta}}{\sigma^2} v S(v) = \frac{2}{\sigma^2}(v^2 -{\beta}).
\end{align}
Equation \eqref{eq_app_eq_W1} can be solved by using the integrating factor method \cite[Sec. I.5]{Amann1990}, which applies to any ODE of the form 
\begin{align}
S'(v)+ a(v) S(v) = b(v). 
\end{align}
In the case of \eqref{eq_app_eq_W1}, 
\begin{align}
a(v) = - \frac{2{\theta}}{\sigma^2} v,~b(v) = \frac{2}{\sigma^2}(v^2 -{\beta}).
\end{align}
The integrating factor of \eqref{eq_app_eq_W1} is 
\begin{align}
M(v) = e^{\int a(v) dv} =e^{-{\frac{\theta}{\sigma^2} v^2}}. 
\end{align}
Multiplying $e^{-{\frac{\theta}{\sigma^2} v^2}}$ on both sides of \eqref{eq_app_eq_W1} and transforming the left-hand side  into a total derivative, yields  
\begin{align}\label{eq_app_eq_W7}
\left[S(v)e^{-{\frac{\theta}{\sigma^2} v^2}}\right]' = b(v)e^{-{\frac{\theta}{\sigma^2} v^2}}.
\end{align}
Taking the integration on both sides of \eqref{eq_app_eq_W7}, yields
\begin{align}\label{eq_app_eq_W2}
\!\! { S(v)e^{-{\frac{\theta}{\sigma^2} v^2}}} = &\int\frac{2}{\sigma^2}  (v^2 - \beta) e^{-{\frac{\theta}{\sigma^2} v^2}} dv\nonumber\\
\!\!=& \int \frac{2}{\sigma^2} {e^{-{\frac{\theta}{\sigma^2} v^2}}} v^2 dv- \int\frac{2}{\sigma^2}{\beta} {e^{-{\frac{\theta}{\sigma^2}}v^2} } dv.
\end{align}
The indefinite integrals in \eqref{eq_app_eq_W2} are given by \cite[Sec. 15.3.1, (Eq. 36)] {JEFFREY1995}
%{\red If the following integrals have been obtained before, please provide references and equation numbers.}
\begin{align}
 \int\frac{2}{\sigma^2} {e^{-{\frac{\theta}{\sigma^2}} v^2} v^2} dv &= \frac{{\sqrt{\pi}}{\sigma}}{2{\theta}^{\frac{3}{2}}}\text{erf}\left(\frac{\sqrt{\theta}}{\sigma} v\right) - \frac{v}{\theta} e^{-{\frac{\theta} {{\sigma}^2}} v^2} + C_1,\\
 \int \frac{2}{\sigma^2} {\beta} {e^{-{\frac{\theta}{\sigma^2}} v^2}} dv&= \frac{{\sqrt{\pi}}{\beta}}{{\sigma}{\sqrt{\theta}}} \text{erf}\left(\frac{\sqrt{\theta}}{\sigma} v\right) +C_2,\label{eq_app_eq_W3}
\end{align}
%\ifreport
where $\text{erf}(\cdot)$ is the error function defined in \eqref{eq_erf}. 
Combining \eqref{eq_app_eq_W2}-\eqref{eq_app_eq_W3}, results in
%\else
%where $\text{erf}(\cdot)$ is the error function. 
%Combining \eqref{eq_app_eq_W2}-\eqref{eq_app_eq_W3}, results in
%\fi
\begin{align}\label{eq_W_diff}
& \!\!S(v)\!= \!\bigg(\frac{{\sqrt{\pi}}{\sigma}}{2{\theta}^{\frac{3}{2}}}\! - \!\frac{{\sqrt{\pi}}{\beta}}{{\sigma}{\sqrt{\theta}}}\bigg) \text{erf}{\bigg(\frac{\sqrt{\theta}}{\sigma} v \bigg)} e^{\frac{\theta}{{\sigma}^2}v^2}\!
 -\!\frac{v}{\theta}\! + \!C_3e^{\frac{\theta}{{\sigma}^2}v^2}\!\! , 
\end{align}
where $C_3 = C_1 + C_2$.
We need to  integrate $S(v)$ in \eqref{eq_W_diff} again to get $H(v)$
\begin{align} \label{eq_integral6}
H(v) = & \int S(v)dv \nonumber\\
=& \int{\!\bigg(\frac{{\sqrt{\pi}}{\sigma}}{2{\theta}^{\frac{3}{2}}}\! - \!\frac{{\sqrt{\pi}}{\beta}}{{\sigma}{\sqrt{\theta}}}\bigg) \text{erf}{\bigg(\frac{\sqrt{\theta}}{\sigma} v \bigg)} e^{\frac{\theta}{{\sigma}^2}v^2}\! dv} - \int{\!\frac{v}{\theta}\!} dv \nonumber\\
& + \int{\!C_3e^{\frac{\theta}{{\sigma}^2}v^2} dv, \!\!}
\end{align} 
which requires the following integral \cite[Sec. 8.250 (Eq. 1,4)] {2007247}:
\begin{align} \label{eq_integral1}
& {\int \text{erf}{\bigg(\frac{\sqrt{\theta}}{\sigma} v \bigg)} e^{\frac{\theta}{{\sigma}^2}v^2}  dv} \nonumber\\
=& \frac{\sigma}{\sqrt{\theta} \sqrt{\pi}} \frac{\theta}{{\sigma}^2} {v^2} {}_2F_2\left(1,1;\frac{3}{2},2;\frac{\theta}{\sigma^2}v^2\right) +C. 
\end{align}
By using \eqref{eq_integral1}, we can compute the first integral of \eqref{eq_integral6}
\begin{align} \label{eq_integral4}
& \int \bigg(\frac{{\sqrt{\pi}}{\sigma}}{2{\theta}^{\frac{3}{2}}} - \frac{{\sqrt{\pi}}{\beta}}{{\sigma}{\sqrt{\theta}}}\bigg) \text{erf}{\bigg(\frac{\sqrt{\theta}}{\sigma} v \bigg)} e^{\frac{\theta}{{\sigma}^2}v^2}  dv \nonumber\\
= & \left({\frac{1}{2{\theta}}} - {\frac{\beta}{{\sigma}^2}}\right) 
{v^2}~ {}_2F_2\left(1,1;\frac{3}{2},2;\frac{\theta}{\sigma^2}v^2\right) +C_4. 
\end{align}
The remaining integrals in \eqref{eq_integral6} are as follows \cite[Sec. 3.478 (Eq. 3)]{2007247}
\begin{align}
&\int C_3 e^{\frac{\theta}{{\sigma}^2}v^2} dv =  C_5\text{erfi}\left(\frac{\sqrt{\theta}}{\sigma}v\right) +C_6, \label{eq_integral2} \\ 
&\int \frac{v}{\theta} dv = -\frac{v^2}{2\theta} + C_7, \label{eq_integral3}
\end{align}
%{\red
%\begin{align}
%&\sqrt{\pi} \int \text{erf}{\bigg(\frac{\sqrt{\theta}}{\sigma} v \bigg)} e^{\frac{\theta}{{\sigma}^2}v^2}  dt = \frac{\sqrt{\theta}}{\sigma}{v^2}~{}_2F_2\left(1,1;\frac{3}{2},2;\frac{\theta}{\sigma^2}v^2\right) +C_4, \nonumber\\
%\end{align}}
where $\text{erfi}(\cdot)$ is the imaginary error function defined in \eqref{eq_erfi}. 
Hence, by substituting \eqref{eq_integral4}, \eqref{eq_integral2}, and \eqref{eq_integral3} in \eqref{eq_integral6}, $H(v)$  in \eqref{eq_W} follows. This completes the proof of \eqref{eq_W}.

\section{Proof of Lemma \ref{lem_stop1}}\label{lem_stop_11}

The proof of Lemma \ref{lem_stop1} consists of the following two cases:

\emph{Case 1:} If $|v| \geq v_*$, \eqref{eq_optimal_stopping123} implies $\tau_* = 0.$ Hence,
\begin{align}\label{lem9_1}
& \mathbb{E}_v \left[\tau_* \big||v| \geq v_* \right] = \mathbb{E}_v \bigg[\int_{0}^{\tau_*} 1 ds \bigg||v| \geq v_* \bigg] = 0,
\end{align}
and
\begin{align}
& \mathbb{E}_v \bigg[\int_{0}^{\tau_*} {V_s}^2 ds\bigg| |v| \geq v_* \bigg] = 0.
\end{align}
Because $V_0 = v$, we have
\begin{align}\label{lem9_2}
\mathbb{E}_v [V_{\tau_*}^2] = \mathbb{E}_v [V_{0}^2]  = {v}^2.
\end{align}
By combining \eqref{lem9_1}-\eqref{lem9_2}, we get
\begin{align}\label{lem9_8}
& \mathbb{E}_v \left[ -\gamma V_{\tau_*}^2 - \int_{0}^{\tau_*} ({V_s}^2 - \beta) ds\bigg||v| \geq v_*\right] = -\gamma v^2 .
\end{align}

\emph{Case 2:} If $|v| < v_*$, \eqref{eq_optimal_stopping123} tells us that, almost surely,
\begin{align}
V_{\tau_*} = v_*.
\end{align}
Similar to the proof of Lemma \ref{eq_expectation}, we can use Lemma \ref{lem_stop} to obtain
\begin{align}\label{lem9_5}
&\mathbb{E}_v \left[\tau_*\big||v| < v_*\right] \nonumber\\
= & \mathbb{E}_v \bigg[\int_{0}^{\tau_*} 1 ds \bigg||v| < v_*\bigg] \nonumber\\
= &R_1(v_*) - R_1(v) \nonumber\\
 =& \frac{v_*^2}{\sigma^2}  {}_2F_2\left(1,1;\frac{3}{2},2;\frac{\theta}{\sigma^2}v_*^2\right) - \frac{{v}^2}{\sigma^2}  {}_2F_2\left(1,1;\frac{3}{2},2;\frac{\theta}{\sigma^2}{v}^2\right)\!,\!\!
\end{align}
\begin{align}\label{lem9_6}
&\mathbb{E}_v \bigg[\int_{0}^{\tau_*} {V_s}^2 ds \bigg||v| < v_*\bigg] \nonumber\\
=& R_2(v_*) - R_2(v) \nonumber\\
=& -\frac{v_*^2}{2\theta} + \frac{v_*^2}{2\theta} ~ {}_2F_2\left(1,1;\frac{3}{2},2;\frac{\theta}{\sigma^2}v_*^2\right) \nonumber\\ 
&+ \frac{v^2}{2\theta} - \frac{v^2}{2\theta} ~ {}_2F_2\left(1,1;\frac{3}{2},2;\frac{\theta}{\sigma^2}{v}^2\right),
\end{align}
and
\begin{align}\label{lem9_7}
& \mathbb{E}_v \left[V_{\tau_*}^2\big||v| < v_*\right] = v_*^2.
\end{align}
Combining \eqref{lem9_5}-\eqref{lem9_7}, yields
\begin{align}\label{lem9_9}
& \mathbb{E}_v \left[ -\gamma V_{\tau_*}^2 - \int_{0}^{\tau_*} ({V_s}^2 - \beta) ds\bigg||v| < v_*\right] \nonumber\\
%= & -\gamma v_*^2 - \frac{v_*^2}{2\theta} + \frac{v_*^2}{2\theta} ~ {}_2F_2\left(1,1;\frac{3}{2},2;\frac{\theta}{\sigma^2}v_*^2\right) \nonumber\\
%& + \frac{v^2}{2\theta} - \frac{v^2}{2\theta} ~ {}_2F_2\left(1,1;\frac{3}{2},2;\frac{\theta}{\sigma^2}{v}^2\right) \nonumber\\
%& + {\beta} \frac{v_*^2}{\sigma^2} ~ {}_2F_2\left(1,1;\frac{3}{2},2;\frac{\theta}{\sigma^2}v_*^2\right) \nonumber\\
%& - {\beta} \frac{{v}^2}{\sigma^2} ~ {}_2F_2\left(1,1;\frac{3}{2},2;\frac{\theta}{\sigma^2}{v}^2\right), \nonumber\\
=& -\frac{v^2}{2\theta} + \left({\frac{1}{2{\theta}}} - {\frac{\beta}{{\sigma}^2}}\right) 
{}_2F_2\big(1,1;\frac{3}{2},2;\frac{\theta}{\sigma^2}v^2\big) {v^2} \nonumber\\
& + \!\! \frac{1}{2\theta}\EE \left(e^{-2\theta Y_{i}} \right)  v_*^2\!  -\! \left({\frac{1}{2{\theta}}}\! - \!{\frac{\beta}{{\sigma^2}}}\right)\!\! {}_2F_2\big(1,1;\frac{3}{2},2;\frac{\theta}{\sigma^2}v_*^2\big){v_*^2} .
\end{align}
By combining \eqref{lem9_8} and \eqref{lem9_9}, Lemma \ref{lem_stop1} is proven.

\section{Proof of Lemma \ref{lem_stop2}}\label{lem_stop_22}

The proof of Lemma \ref{lem_stop2} consists of the following two cases:

\emph{Case 1:} If $|v| \geq v_*$, % \eqref{eq_optimal_stopping123} tells us that $\tau_* = 0.$
%Hence, $V_{\tau_*} = V_0 = v$ and by 
\eqref{eq_W1} tells us that
\begin{align} \label{lem10_2}
H(v) = -\gamma v^2.
\end{align}
Hence, Lemma \ref{lem_stop2} holds in \emph{Case 1}. 

\emph{Case 2:}  $|v| < v_*$. Because $H(v)$ is an even function and $H(v) = -\gamma{v^2}$ holds at $v=\pm v_*$, to prove $H(v) \geq {-\gamma}v^2$ for $|v| < v_*$, it is sufficient to show that for all $v\in[0, v_*)$
\begin{align}\label{lem10_4}
& H'(v) < [-\gamma v^2]' = -2 \gamma v.
\end{align}
Hence, the remaining task is to prove that \eqref{lem10_4} holds for $v\in[0, v_*)$. 

After some manipulations, we can obtain from \eqref{eq_threshold11} that
\begin{align}
\left(\mathsf{mse}_{\infty}-{\beta}\right) G\left(\frac{\sqrt{\theta}}{\sigma} v_*\right) =\mathsf{mse}_{\infty} \mathbb{E} (e^{-2{\theta}{Y_i}}).
\end{align}
% from, we can obtain
%$H(v) \geq {-\gamma}v^2$
% by \eqref{eq_optimal_stopping123}, $$\tau_* > 0.$$
%
%%Hence, $V_{\tau_*} = v_*$. 
%At $v_*$, $H(v) = -\gamma{v^2}$ and as , it is sufficient to show that in region $v\in[0, v_*)$,
%From Lemma \ref{boundary}, we know that 
%\begin{align}
%\beta \leq \mathsf{mse}_{\infty} = \frac{\sigma^2}{2\theta}.
%\end{align}
%Hence,
%\begin{align}
%1-\frac{2\theta}{\sigma^2}{\beta} >0.
%\end{align}
%In addition, 
Because $G(\cdot)>0$ is an increasing function, it holds for all $v\in[0, v_*)$ that
\begin{align}
 \left(\mathsf{mse}_{\infty}-{\beta}\right) G\left(\frac{\sqrt{\theta}}{\sigma} v\right) <&\left(\mathsf{mse}_{\infty}-{\beta}\right) G\left(\frac{\sqrt{\theta}}{\sigma} v_*\right) \nonumber\\
=&\mathsf{mse}_{\infty} \mathbb{E} (e^{-2{\theta}{Y_i}}).\label{lem10_1}
 \end{align}
One can obtain \eqref{lem10_4} from \eqref{H_gradient} and \eqref{lem10_1}.
% and because $G(\cdot)$ is an increasing function, the left side of \eqref{lem10_1} is always positive and as the right side is a constant, the left side is always smaller than the right side in the region $[0, v_*)$.
%Now, in the region, $v\in[0, v_*]$, the left side of \eqref{eq_threshold11} implies
%\begin{align}
%& \left({\mathsf{mse}_{\infty} - {\beta}}\right) G{\bigg(\frac{\sqrt{\theta}}{\sigma} v\bigg)} = \left(1 - \frac{\beta}{\mathsf{mse}_{\infty}} \right) G{\bigg(\frac{\sqrt{\theta}}{\sigma} v\bigg)}
%\end{align} 
%
% \begin{align}
% & \big(1-\frac{2\theta}{\sigma^2}{\beta}\big) G(\frac{\sqrt{\theta}}{\sigma} v) < \mathbb{E} (e^{-2{\theta}{Y_i}}),
% \end{align}
% which implies
% \begin{align}
%  & \big(1-\frac{2\theta}{\sigma^2}{\beta}\big) F(\frac{\sqrt{\theta}}{\sigma} v) < \mathbb{E} (e^{-2{\theta}{Y_i}}) \frac{\sqrt{\theta}}{\sigma} v,
% \end{align}
% From equation \eqref{H_gradient}, it can be said that
%So, it is obvious in region $(-v_*, v_*)$ that
% \begin{align}
% H'(v) < -2{\gamma}v.
% %v\in[0,v_*].
% \end{align}
% which implies
% \begin{align} \label{lem10_3}
% H(v) > -\gamma v^2
% \end{align}
% 
% By combining \eqref{lem10_2}-\eqref{lem10_3}, we get, $H(v) \geq -{\gamma}v^2$. 
Hence, Lemma \ref{lem_stop2} also holds in \emph{Case 2}. This completes the proof. 
 
\section{Proof of Lemma \ref{lem_stop3}}\label{lem_stop_33}
We need the following lemma in the proof of Lemma \ref{lem_stop3}:
\begin{lemma}\label{lem_G_x}
$(1-2x^2) G(x) \leq 1$ for all $x\geq 0$.
\end{lemma}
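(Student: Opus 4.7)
The plan is to recast the inequality so that the singular factor $1/x$ in $G(x)$ disappears and then verify the resulting inequality by a one-variable monotonicity argument. The case $x=0$ is handled separately using $G(0)=1$ by the definition in footnote after \eqref{eq_g1}, which gives $(1-0)\cdot G(0)=1\leq 1$. For $x>0$, I would multiply both sides of the desired inequality by the positive quantity $xe^{-x^2}$ to transform $(1-2x^2)G(x)\leq 1$ into the equivalent statement
\begin{equation*}
(1-2x^2)\int_{0}^{x} e^{-t^2}\,dt \;\leq\; xe^{-x^2}.
\end{equation*}

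Next, I would introduce the auxiliary function
\begin{equation*}
\phi(x) \;=\; xe^{-x^2} - (1-2x^2)\int_{0}^{x} e^{-t^2}\,dt,
\end{equation*}
so that the reduced inequality is equivalent to $\phi(x)\geq 0$ for all $x\geq 0$. Clearly $\phi(0)=0$, so the task reduces to checking that $\phi$ is non-decreasing on $[0,\infty)$. Differentiating directly, the contributions from the two terms combine as
\begin{equation*}
\phi'(x) \;=\; (1-2x^2)e^{-x^2} + 4x\int_{0}^{x} e^{-t^2}\,dt - (1-2x^2)e^{-x^2} \;=\; 4x\int_{0}^{x} e^{-t^2}\,dt,
\end{equation*}
where the $(1-2x^2)e^{-x^2}$ terms arising from $\frac{d}{dx}[xe^{-x^2}]$ and from $\frac{d}{dx}[-(1-2x^2)\int_0^x e^{-t^2}dt]$ cancel exactly. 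Since $\phi'(x)\geq 0$ on $[0,\infty)$ and $\phi(0)=0$, we conclude $\phi(x)\geq 0$ on $[0,\infty)$, which is the claimed inequality.

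There is essentially no obstacle here: the whole argument rests on the lucky cancellation inside $\phi'(x)$, which is precisely what makes the inequality clean. The only subtlety worth mentioning is the boundary point $x=0$, where $G(x)$ is defined by its right limit; handling it separately (or by continuity of $\phi/x$) closes the proof.
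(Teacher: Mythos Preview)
Your proof is correct. The key identity $\phi'(x)=4x\int_0^x e^{-t^2}\,dt$ is easily verified, and the rest follows immediately.

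The paper takes a different route: it shows directly that $x\mapsto (1-2x^2)G(x)$ is nonincreasing on $(0,\infty)$ by differentiating the original expression (with the $e^{x^2}/x$ factor still present). This produces four terms,
\[
-\tfrac{1}{x^2}e^{x^2}\!\int_0^x e^{-t^2}\,dt \;+\; \tfrac{1}{x} \;-\; 4x^2 e^{x^2}\!\int_0^x e^{-t^2}\,dt \;-\; 2x,
\]
and the paper then invokes the estimate $\int_0^x e^{-t^2}\,dt\geq xe^{-x^2}$ to show the first two terms sum to something $\leq 0$, while the last two are trivially $\leq 0$. Your approach is cleaner: multiplying through by $xe^{-x^2}$ before differentiating removes the $1/x$ singularity and produces an exact cancellation, so the derivative collapses to a manifestly nonnegative quantity with no auxiliary estimate needed. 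The paper's approach, on the other hand, yields the slightly stronger conclusion that $(1-2x^2)G(x)$ is actually decreasing, not merely $\leq 1$; but that extra information is not used anywhere.
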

\begin{proof}
Because $G(0) =1$, it suffices to show that for all $x>0$
\begin{align}\label{eq_lem_G_x}
[(1-2x^2) G(x)]' \leq 0.
\end{align}
We have
\begin{align}\label{eq_lem_G_x1}
\!\!\!\!&[(1-2x^2) G(x)]' \nonumber\\ 
\!\!\!\!= &-\frac{1}{x^2} e^{x^2} \int_{0}^{x} e^{-{t^2}} dt + \frac{1}{x} - 4{x^2}{e^{x^2}} \int_{0}^{x} e^{-{t^2}} dt - 2x.\!\!
\end{align}
Because $e^{-{t^2}}$ is decreasing on $t\in[0,\infty)$, for all $x>0$
\begin{align}
\int_{0}^{x} e^{-{t^2}} dt \geq \int_{0}^{x} e^{-{x^2}} dt = x e^{-{x^2}} . 
\end{align}
Hence,
\begin{align}\label{eq_lem_G_x2}
-\frac{1}{x^2} e^{x^2} \int_{0}^{x} e^{-{t^2}} dt + \frac{1}{x}\leq 0.
\end{align}
Substituting \eqref{eq_lem_G_x2} into \eqref{eq_lem_G_x1}, \eqref{eq_lem_G_x} follows. This completes the proof. 
%is proven. 
\end{proof}
Now we are ready to prove Lemma \ref{lem_stop3}. 
\begin{proof}[Proof of Lemma \ref{lem_stop3}]
The function $H(v)$ is continuously differentiable on $\mathbb R$. In addition, $H''(v)$ is continuous everywhere but at $v = \pm v_*$. Since the Lebesgue measure of those time $t$ for which $V_t = \pm v_*$ is zero, the values $H''(\pm v_*)$ can be chosen in the sequel arbitrarily. 
%Moreover, the function $W(v)$ is concave. 
By using  It\^{o}'s formula \cite[Theorem 7.13]{BMbook10}, we obtain that almost surely
\begin{align}\label{eq_ito}
&H(V_t) - H(v) \nonumber\\
=& \int_0^t \frac{\sigma^2}{2}\left[ H''(V_r) -\theta V_r H'(V_r) - (V_r^2 - \beta)\right] d r \nonumber\\
& +\int_0^t \sigma H'(V_r) dW_r.
\end{align}
For all $t\geq 0$ and all $v\in \mathbb{R}$, we can show that
\begin{align}%\label{eq_bound}
&\mathbb{E}_v\left\{\int_0^t \left[\sigma H'(V_r) \right]^2 dr \right\} <\infty.\nonumber
\end{align}
This and \cite[Theorem 7.11]{BMbook10} imply that $\int_0^t \sigma H'(V_r) dW_r$ is a martingale and 
\begin{align}\label{eq_ito_integral}
\mathbb{E}_v\left[\int_0^t \sigma H'(V_r) dW_r\right] =0,~ \forall~ t\geq0.
 \end{align}   
Hence,
%\cite[eq. (3.3.23)]{Peskir2006}
\begin{align}\label{eq_ito}
\!\!&\mathbb{E}_v \left[H(V_t) - H(v)\right] \nonumber\\
\!\!= &\mathbb{E}_v \left[ \int_0^t \frac{\sigma^2}{2}\left[ H''(V_r) -\theta V_r H'(V_r) - (V_r^2 - \beta)\right] d r \right].\!\!
\end{align}

Next, we show that for all $v\in \mathbb{R}$
\begin{align}\label{eq_ito2}
\frac{\sigma^2}{2} H''(v) -\theta v H'(v)-  (v^2 - \beta)\leq 0.
\end{align}
Let us consider the following two cases: 

\emph{Case 1:} If $|v|<v_*$, then \eqref{eq_free1} implies
\begin{align}\label{eq_ito1}
\frac{\sigma^2}{2} H''(v) -\theta v H'(v)-  (v^2 - \beta) = 0.
\end{align}

\emph{Case 2:}  $|v|> v_*$. In this case, $H(v) = -\gamma v^2$. Hence,
\begin{align}\label{eq_ito3}
&\frac{\sigma^2}{2} H''(v) -\theta v H'(v) \nonumber\\
=& \frac{\sigma^2}{2} (-2\gamma) -\theta v (-2\gamma v)\nonumber\\
=& -{\sigma}^2 {\gamma} + 2{\theta}{\gamma} v^2 \nonumber\\
=&  - \mathsf{mse}_{Y_i}+\mathbb{E} [1-e^{-2\theta {Y_i}}]v^2.
\end{align}
Substituting \eqref{eq_ito3} into \eqref{eq_ito2}, yields % is equivalent to 
\begin{align}\label{eq_ito4}
\mathbb{E} [e^{-2\theta {Y_i}}] v^2\geq     \beta - \mathsf{mse}_{Y_i}.
\end{align}
To prove \eqref{eq_ito4}, since $|v|> v_*$, it suffices to show that 
\begin{align}\label{eq_ito5}
\mathbb{E} [e^{-2\theta {Y_i}}] v_*^2\geq     \beta -   \mathsf{mse}_{Y_i},
\end{align}
which is equivalent to
\begin{align}\label{eq_ito6}
& (\mathsf{mse}_{\infty} - \mathsf{mse}_{Y_i}) \frac{v_*^2}{\mathsf{mse}_{\infty}} \geq (\mathsf{mse}_{\infty} - \mathsf{mse}_{Y_i}) - (\mathsf{mse}_{\infty} - \beta). 
\end{align}

We now prove \eqref{eq_ito6}. By Lemma \ref{lem_G_x}, we get
\begin{align}
 \left(1 - \frac{v_*^2 2\theta}{\sigma^2}\right) G\left(\frac{\sqrt{\theta}}{\sigma} v_*\right) \leq 1. \label{excessive_1}
\end{align}
Hence,
\begin{align}\label{eq_ito7}
\left(1 - \frac{v_*^2 }{\mathsf{mse}_{\infty}}\right) G\left(\frac{\sqrt{\theta}}{\sigma} v_*\right)  \leq 1.
\end{align}
By substituting \eqref{eq_threshold11} into \eqref{eq_ito7}, we obtain
\begin{align}\label{eq_ito7}
&({\mathsf{mse}_{\infty} - \mathsf{mse}_{Y_i}})\left(1 - \frac{v_*^2 }{\mathsf{mse}_{\infty}}\right) G\left(\frac{\sqrt{\theta}}{\sigma} v_*\right)  \nonumber\\
\leq & \left({\mathsf{mse}_{\infty} - {\beta}}\right) G{\bigg(\frac{\sqrt{\theta}}{\sigma} v_*\bigg)}.
\end{align}
Because $G(x)>0$ for all $x>0$,
\begin{align}
({\mathsf{mse}_{\infty} - \mathsf{mse}_{Y_i}})\left(1 - \frac{v_*^2 }{\mathsf{mse}_{\infty}}\right) \leq {\mathsf{mse}_{\infty} - {\beta}},
\end{align}
which implies \eqref{eq_ito6}. Hence, \eqref{eq_ito2} holds in both cases. Thus, $\mathbb{E}_v \left[H(V_t) - H(v)\right]\leq 0$ holds for all $t\geq 0$ and $v\in \mathbb{R}$. This completes the proof. 
\end{proof}

 \section{Proof of Theorem \ref{thm6_strong_duality}}\label{app_thm_strong_duality}

According to \cite[Prop. 6.2.5]{Bertsekas2003}, if we can find $\pi^{\star} = (Z_1,$ $Z_2,\ldots)$ and $\lambda^{\star}$ that satisfy
%show that \emph{the duality gap between \eqref{eq_SD} and \eqref{eq_dual} is zero, i.e., $d = p(c)$}.
 the following conditions:
\begin{align}
&\pi^{\star}\in\Pi_1, \lim_{n\rightarrow \infty} \frac{1}{n} \sum_{i=0}^{n-1} \mathbb{E}\left[Y_i+Z_i\right] - \frac{1}{f_{\max}} \geq  0,\label{eq_mix_or_not0}\\
&\lambda^{\star}\geq 0,\label{eq_mix_or_not1}\\
&L(\pi^{\star};\lambda^{\star}) = \inf_{\pi\in\Pi_1}  L(\pi;\lambda^{\star}),\label{eq_mix_or_not}\\
&\lambda^\star \left\{\lim_{n\rightarrow \infty} \frac{1}{n} \sum_{i=0}^{n-1} \mathbb{E}\left[Y_i+Z_i\right] - \frac{1}{f_{\max}}\right\} = 0,\label{eq_KKT_last}
\end{align}
then $\pi^{\star}$ is an optimal solution to \eqref{eq_SD} and $\lambda^\star$ is a geometric multiplier \cite{Bertsekas2003} for  \eqref{eq_SD}. 
Further, if we can find such $\pi^{\star}$ and $\lambda^{\star}$, then the duality gap between \eqref{eq_SD} and \eqref{eq_dual} must be zero, because otherwise there is no geometric multiplier \cite[Prop. 6.2.3(b)]{Bertsekas2003}. %Hence, the duality gap between \eqref{eq_SD} and \eqref{eq_dual} is zero.
%We note that \eqref{eq_mix_or_not0}-\eqref{eq_KKT_last} are different from the Karush-Kuhn-Tucker (KKT) conditions because of \eqref{eq_mix_or_not}.
The remaining task is to find $\pi^{\star}$ and $\lambda^{\star}$ that satisfy \eqref{eq_mix_or_not0}-\eqref{eq_KKT_last}.

%\begin{align}
%&\pi^{\star}\in\Pi, \lim_{n\rightarrow \infty} \frac{1}{n} \sum_{i=0}^{n-1} \mathbb{E}\left[Y_i+Z_i\right] - \frac{1}{f_{\max}} \geq  0,\label{eq_mix_or_not0}\\
%&\lambda^{\star}\geq 0,\label{eq_mix_or_not1}\\
%&L(\pi^{\star};\lambda^{\star},c) = \inf_{\pi\in\Pi_1}  L(\pi;\lambda^{\star},c),\label{eq_mix_or_not}\\
%&\lambda^\star \left\{\lim_{n\rightarrow \infty} \frac{1}{n} \sum_{i=0}^{n-1} \mathbb{E}\left[Y_i+Z_i\right] - \frac{1}{f_{\max}}\right\} = 0.\label{eq_KKT_last}
%\end{align}
According to Theorem \ref{thm_optimal_stopping} and Corollary \ref{coro_stop}, a solution $\pi^{\star}=(Z_0(\beta), Z_1(\beta), \ldots)$ to
\eqref{eq_mix_or_not} is given by \eqref{eq_optimal_stopping1234}, where $\beta = \mathsf{mse}_{\text{opt}}+\lambda^{\star}$. 
In addition, because the $Y_i$'s are \emph{i.i.d.}, the $Z_i(\beta)$'s in policy $\pi^{\star}$ are  \emph{i.i.d.}
Using \eqref{eq_mix_or_not0}, \eqref{eq_mix_or_not1}, and \eqref{eq_KKT_last}, the value of $\lambda^\star$ can be obtained  by considering two cases: 
If $\lambda^\star >0$, because the $Z_i(\beta)$'s are \emph{i.i.d.}, we have from \eqref{eq_KKT_last} that
\begin{align}\label{eq_KKT_2}
\lim_{n\rightarrow \infty} \frac{1}{n} 
\sum_{i=0}^{n-1} \mathbb{E}\left[Y_i+Z_i(\beta)\right] = \mathbb{E}\left[Y_i+Z_i(\beta)\right]= \frac{1}{f_{\max}}.
\end{align}
If $\lambda^\star =0$, then \eqref{eq_mix_or_not0} implies
\begin{align}\label{eq_KKT_3}
\lim_{n\rightarrow \infty} \frac{1}{n}\sum_{i=0}^{n-1} \mathbb{E}\left[Y_i+Z_i(\beta)\right] = \mathbb{E}\left[Y_i+Z_i(\beta)\right] \geq \frac{1}{f_{\max}}.
\end{align}

Next, we compute ${\mathsf{mse}}_{\text{opt}}$ and $\beta = \mathsf{mse}_{\text{opt}}+\lambda^{\star}$.  To compute ${\mathsf{mse}}_{\text{opt}}$, we substitute policy $\pi^{\star}$ into \eqref{eq_Simple}, which yields 
\begin{align}\label{eq_KKT_1}
\mathsf{mse}_{\text{opt}}=&\lim_{n\rightarrow \infty}\!\!\frac{\sum_{i=0}^{n-1}\mathbb{E}\left[\int_{Y_i}^{Y_i+Z_i(\beta)+Y_{i+1}}\! O_s^2ds\right]}{\sum_{i=0}^{n-1} \mathbb{E}\left[Y_i\!+\!Z_i(\beta)\right]} \nonumber\\
=& \frac{\mathbb{E}\left[\int_{Y_i}^{Y_i+Z_i(\beta)+Y_{i+1}}\! O_s^2ds\right]}{\mathbb{E}[Y_i+Z_i(\beta)]},
\end{align}
where in the last equation we have used that the $Z_i(\beta)$'s are \emph{i.i.d.} 
Hence, the value of $\beta = {\mathsf{mse}}_{\text{opt}} + \lambda^\star$ can be obtained by considering the following two cases:
% (for notational simplicity, we remove the superscript ``$\star$'' in the sequel):

\emph{Case 1}: If $\lambda^\star =0$, then \eqref{eq_KKT_3} and \eqref{eq_KKT_1} imply that
\begin{align}
&\mathbb{E}\left[Y_i+Z_i(\beta)\right]\geq \frac{1}{f_{\max}},\label{beta_unique_1} \\
&\beta = {\mathsf{mse}}_{\text{opt}} = \frac{\mathbb{E}\left[\int_{Y_i}^{Y_i+Z_i(\beta)+Y_{i+1}}\! O_s^2ds\right]}{ \mathbb{E}\left[Y_i\!+\!Z_i(\beta)\right]}.\label{eq_KKT_5}
\end{align}
%We note that the equation
%\begin{align} \label{eq_KKT_6}
%\beta  = \frac{\mathbb{E}\left[\int_{Y_i}^{Y_i+Z_i(\beta)+Y_{i+1}}\! O_s^2ds\right]}{ \mathbb{E}\left[Y_i\!+\!Z_i(\beta)\right]}
%\end{align}
%has two roots. The right hand side of \eqref{eq_KKT_6} is a function of $\beta$. Because policy $\pi^{\star}$ is optimal, we have
%\begin{align} \label{dual_1}
%\mathsf{mse}_{\text{opt}} = \min_{\beta} \frac{\mathbb{E}\left[\int_{Y_i}^{Y_i+Z_i(\beta)+Y_{i+1}}\! O_s^2ds\right]}{ \mathbb{E}\left[Y_i\!+\!Z_i(\beta)\right]}
%\end{align}
%is a function of $\beta$.
Notice that \eqref{eq_KKT_5} can rewritten as \eqref{newton_4}, which is a fixed-point equation on $\beta$. According to Lemma \ref{beta_unique}, one root of \eqref{newton_4} is in the set $(\mathsf{mse}_{Y_i}, \mathsf{mse}_\infty)$, which is also the unique root of \eqref{thm2_eq22}; we denote this root as $\beta_1$. We choose $\pi^\star = (Z_0(\beta_1), Z_1(\beta_1)...)$, where $Z_i(\cdot)$ is given by \eqref{eq_Z_1}. In addition, $\lambda^\star$ must be  $0$  in  \emph{Case 1}. Because $\lambda^\star = \beta_1 - \mathsf{mse}_{\text{opt}}$, we get $\mathsf{mse}_{\text{opt}} = \beta_1$, which is required in \eqref{eq_KKT_5}. \emph{Case 1} occurs if the root $\beta_1$ of \eqref{eq_KKT_5} satisfies \eqref{beta_unique_1}. We note that $\beta = \mathsf{mse}_{\infty}$ is another root of \eqref{eq_KKT_5}, but we do not pick policy $\pi^\star$ based on this root. 

\emph{Case 2}: If $\lambda^\star >0$, then \eqref{eq_KKT_2} and \eqref{eq_KKT_1} imply that
\begin{align}
&\mathbb{E}\left[Y_i+Z_i(\beta)\right]= \frac{1}{f_{\max}}, \label{eq_KKT_4}\\
&\beta > {\mathsf{mse}}_{\text{opt}}  = \frac{\mathbb{E}\left[\int_{Y_i}^{Y_i+Z_i(\beta)+Y_{i+1}}\! O_s^2ds\right]}{ \mathbb{E}\left[Y_i\!+\!Z_i(\beta)\right]}. \label{eq_KKT_8}
\end{align}
When the root $\beta_1$ of \eqref{eq_KKT_5} does not satisfy \eqref{beta_unique_1}, Lemma \ref{beta_unique_2} tells us that \eqref{eq_KKT_4} has a unique root in the set $[\mathsf{mse}_{Y_i}, \mathsf{mse}_{\infty})$, which is denoted by $\beta_2$. We choose $\pi^\star = (Z_0(\beta_2), Z_1(\beta_2)...)$, where $Z_i(\cdot)$ is given by \eqref{eq_Z_1}. Further, we choose $\lambda^\star = \beta_2 - \mathsf{mse}_{\text{opt}}$.

%In \emph{Case 2}, notice that \eqref{eq_KKT_5} can rewritten as \eqref{newton_4}, which is a fixed-point equation on $\beta$. According to Lemma \ref{beta_unique}, one root of \eqref{newton_4} is in the set $(\mathsf{mse}_{Y_i}, \mathsf{mse}_\infty)$, which is also the unique root of \eqref{thm2_eq22}; we denote this root as $\beta_2$. We choose $\pi^\star = (Z_0(\beta_2), Z_1(\beta_2)...)$, where $Z_i(\cdot)$ is given by \eqref{eq_Z_1}. Then, \eqref{eq_KKT_5} implies $\beta_2 = \mathsf{mse}_{\text{opt}}$ and we choose $\lambda^\star$ as $\lambda^\star = \beta_2 - \mathsf{mse}_{\text{opt}} = 0$. \emph{Case 2} occurs if \eqref{beta_unique_1} is satisfied. We note that $\beta = \mathsf{mse}_{\infty}$ is another root of \eqref{eq_KKT_5}, but we do not pick policy $\pi^\star$ based on this root. 

%In \emph{Case 1}, Lemma \ref{beta_unique_2} tells us that \eqref{eq_KKT_4} has a unique root in the set $[\mathsf{mse}_{Y_i}, \mathsf{mse}_{\infty})$, which is denoted by $\beta_1$. In this case, we choose $\pi^\star = (Z_0(\beta_1), Z_1(\beta_1)...)$, where $Z_i(\cdot)$ is given by \eqref{eq_Z_1}. Further, we choose $\lambda^\star = \beta_1 - \mathsf{mse}_{\text{opt}}$. \emph{Case 1} occurs if \eqref{eq_KKT_8} is satisfied. 

Theorem \ref{thm_optimal_stopping}, together with the fact that $\beta_1, \beta_2 {\in} [\mathsf{mse}_{Y_i}, \mathsf{mse}_\infty)$ and the arguments above, implies that the selected ${\pi}^{\star}$ and ${\lambda}^\star$ satisfy \eqref{eq_mix_or_not0}-\eqref{eq_KKT_last}.
%Hence, if we choose $\pi^{\star}=(Z_0(\beta), Z_1(\beta), \ldots)$, where $Z_i(\beta)$ is given by \eqref{eq_Z_1} and $\beta$ satisfies \eqref{eq_KKT_4}-\eqref{eq_KKT_5}, 
%and choose $\lambda^{\star} = \beta -{\mathsf{mse}}_{\text{opt}} $, then the selected $\pi^{\star}$ and $\lambda^{\star}$ satisfy \eqref{eq_mix_or_not0}-\eqref{eq_KKT_last}.
By \cite[Prop. 6.2.3(b)]{Bertsekas2003},  the duality gap between \eqref{eq_SD} and \eqref{eq_dual} is zero. A solution to \eqref{eq_SD} and \eqref{eq_dual} is $\pi^{\star}$. This completes the proof. 

\ignore{
{\blue \section{Discussions on Section \ref{sec_quantization}}\label{sec_quant}
 
 By using \eqref{quant} and \eqref{mse_quant} in \eqref{thm1_eq23}, the $\text{mse}$ after adopting quantization can be expressed as
 \begin{align} \label{mse_quant1}
 \text{mse} = & \frac{\mathbb{E}\left[\int_{D_i(\beta)}^{D_{i+1}(\beta)}\! (X_t-\hat X_t)^2dt\right]}{\mathbb{E}[D_{i+1}(\beta)\!-\!D_i(\beta)]} \nonumber\\
 =& \frac{\mathbb{E}\left[\int_{D_i(\beta)}^{D_{i+1}(\beta)}\! (O_{t-{S_i}} + Q_{S_i} e^{-\theta (t-S_i)})^2 dt\right]}{\mathbb{E}[D_{i+1}(\beta)\!-\!D_i(\beta)]} \nonumber\\
 =& \frac{\mathbb{E}\left[\int_{D_i(\beta)}^{D_{i+1}(\beta)}\! O^{2}_{t-{S_i}} dt\right]}{\mathbb{E}[D_{i+1}(\beta)\!-\!D_i(\beta)]}+\frac{\mathbb{E}\left[\int_{D_i(\beta)}^{D_{i+1}(\beta)}\! Q^{2}_{S_i} e^{-2 \theta (t-S_i)} dt\right]}{\mathbb{E}[D_{i+1}(\beta)\!-\!D_i(\beta)]}.
 \end{align}
Equation \eqref{mse_quant1} holds as we assume the quantization noise be zero mean. In Lemma \ref{eq_expectation}, we have found the expression of the first fraction term in equation \eqref{mse_quant1}.
Let us consider the quantization noise is independent of the observed OU process and sampling times. Then, the numerator of the second fraction term in \eqref{mse_quant1} can be expressed as
\begin{align} 
& {\mathbb{E}\left[\int_{D_i(\beta)}^{D_{i+1}(\beta)}\! Q^{2}_{S_i} e^{-2 \theta (t-S_i)} dt\right]} \nonumber\\
=& {\mathbb{E} \bigg\{{Q^{2}_{S_i}}\left[\int_{Y_i}^{Y_i +Y_{i+1} + Z_i}\! e^{-2 \theta s} ds\right]}\bigg\} \nonumber\\
= & \frac{1}{2 \theta} {\mathbb{E} [{Q^{2}_{S_i}}}] \bigg(\mathbb{E} [e^{-2 \theta Y_i}] - \mathbb{E} [e^{-2 \theta (Y_{i+1} + Y_i + Z_i)}]\bigg). \label{quant2}
\end{align}
Next, we need to compute $\mathbb{E} [e^{-2 \theta (Y_{i+1} + Y_i + Z_i)}]$. As OU process with initial state $O_0$ is a scaled Wiener process, we can use the property of hitting time of Brownian motion here. 
From \eqref{eq_OU}, OU process $O_t$ is a time-shifted Brownian motion. Also, from the definition of $S_{i+1}$ in \eqref{eq_opt_solution} and from \cite[Exercise 7.5.1]{Durrett2011}, it can be written as
\begin{align}
\mathbb{E} [e^{-2 \theta (Y_{i+1} + Y_i + Z_i)}] = \frac{1}{\text{cosh}(v(\beta) \sqrt{2 \theta})}
\end{align}
Therefore, \eqref{quant2} becomes
\begin{align}
& {\mathbb{E}\left[\int_{D_i(\beta)}^{D_{i+1}(\beta)}\! Q^{2}_{S_i} e^{-2 \theta (t-S_i)} dt\right]} \nonumber\\
=&  \frac{1}{2 \theta} {\mathbb{E} [{Q^{2}_{S_i}}}] \bigg(\mathbb{E} [e^{-2 \theta Y_i}] - \frac{1}{\text{cosh}(v(\beta) \sqrt{2 \theta})}\bigg).
\end{align}

}
}
\ignore{When there is no sampling rate constraint, the dual variable $\lambda = 0$. Hence, from \eqref{eq_beta_value1}, {$\beta = \mathsf{mse}_{\text{opt}}$}. From \eqref{eq_DPExpected}, {$\mathsf{mse}_{\text{opt}}$} is the minimum of a real-valued set. For any real-valued set there must be one minimum. Therefore, {$\mathsf{mse}_{\text{opt}}$} is unique and hence, $\beta$ is also unique. So, Algorithm \ref{alg1} converges to a unique solution.

When there is a sampling rate constraint, the dual variable $\lambda \geq 0$. Hence, from \eqref{eq_beta_value1}, \emph{$\beta = \mathsf{mse}_{\text{opt}} + \lambda$}. In theorem \ref{thm6_strong_duality}, it is proved that the duality gap between \eqref{eq_SD} and \eqref{eq_dual} is zero. In appendix \ref{app_thm_strong_duality}, it is shown that there exists a geometric multiplier and optimal solution pair that satisfies the necessary and sufficient optimality condition. From \cite[Prop. 6.2.3(a)]{Bertsekas2003}, this geometric multiplier $\lambda^{\star}$ is equal to the optimal dual solution. The remaining task is to prove that $\lambda^{\star}$ is unique.

    For $\lambda^{\star} > 0$, \eqref{beta_unique_1} holds where $\mathbb{E}\left[Y_i+Z_i(\beta)\right]$ is a strictly decreasing function. As $\lambda^{\star}$ goes from 0 to $\infty$, $\mathbb{E}\left[Y_i+Z_i(\beta)\right]$ decreases to 0. So, there is a unique ${\lambda^{\star}}{\geq}0$ and Algorithm \ref{alg2} also converges to a unique solution. This completes the proof.}

\ignore{ 
\section{Proof of Lemma  \ref{lem1_stop}}\label{app_optimal_stopping}
\emph{Case 1:} If $|v| \geq v_*$, then \eqref{eq_optimal_stopping123} tells us that 
\begin{align}\label{eq_zero}
\tau_* = 0.
\end{align}
Hence, $V_{\tau_*}=V_{0} = v$ and
\begin{align}\label{eq_case1}
\mathbb{E}_v\left[\gamma  V_{\tau_*}^2 - \int_{0}^{\tau_*} \!\!\!(V_{s}^2-\beta)ds\right] = \gamma  v^2.
\end{align}
\emph{Case 2:} If $|v| < v_*$, then \eqref{eq_optimal_stopping123} implies $\tau_* > 0$ and $\left|V_{\tau_*}\right| = {v_*}$. Hence,
\begin{align}\label{eq_change_time1}
\left|v+\frac{\sigma}{\sqrt{2\theta}}e^{-\theta \tau_*} W_{e^{2 \theta \tau_*}-1}\right| = v_*.
\end{align}
We use the method of time change \cite[Chapter 10]{Peskir2006}, 
where a new variable $s_*$ is introduced as
\begin{align}\label{eq_change_time}
s_* = e^{2 \theta \tau_*}-1.
\end{align}
Substituting \eqref{eq_change_time} in \eqref{eq_change_time1}, yields
\begin{align}\label{eq_change_time2}
\left|v+\frac{\sigma}{\sqrt{2\theta}} \frac{W_{s_*}}{\sqrt{s_*+1}} \right| = v_*.
\end{align}
Define $a = -\frac{ \sqrt{2\theta}}{\sigma} (v_* +v)$ and
$b = \frac{ \sqrt{2\theta}}{\sigma} (v_* - v)$, then $s_*$ can be determined as 
\begin{align}
s_* = \inf\left\{t \geq 0: \frac{W_t}{\sqrt{t+1}} \notin (a, b)  \right\}.
\end{align}
According to Eq. (7.3.0.4) of \cite{Borodin1996}, we get
\begin{align}
p = \Pr \left({W_{s_*}} = a {\sqrt{s_*+1}}\right) =\frac{\text{erfi}(\frac{b}{\sqrt 2})}{\text{erfi}(\frac{b}{\sqrt 2})-\text{erfi}(\frac{a}{\sqrt 2})}, \\
1-p = \Pr \left({W_{s_*}} = b {\sqrt{s_*+1}}\right) =\frac{-\text{erfi}(\frac{a}{\sqrt 2})}{\text{erfi}(\frac{b}{\sqrt 2})-\text{erfi}(\frac{a}{\sqrt 2})}, 
\end{align}
where $\text{erfi}(x)$ is the imaginary error function in \eqref{eq_erfi}.
Hence, 
\begin{align}
&\mathbb{E}[W_{s_*}^2]\nonumber\\
= &\mathbb{E}\left[W_{s_*}^2|W_{s_*} = a \sqrt{s_*+1}\right] p + \mathbb{E}\left[W_{s_*}^2|W_{s_*} = b \sqrt{s_*+1}\right] (1-p) \nonumber\\
=&\mathbb{E}(s_*+1) [a^2 p + b^2(1-p)].
\end{align}
In addition, by Wald's identity, we have 
\begin{align}
\mathbb{E}s_* = \mathbb{E}[W_{s_*}^2]= (\mathbb{E}s_*+1) [a^2 p + b^2(1-p)].
\end{align}
Hence,
\begin{align}
\EE \left[e^{2 \theta \tau_*}\right] = \frac{1}{1-[a^2 p + b^2(1-p)]}.
\end{align}

\begin{align}
\mathbb{E}\tau_* = \frac{\frac{ {2\theta}y_*^2}{\sigma^2}}{1-\frac{ {2\theta}y_*^2}{\sigma^2}}.
\end{align}

Invoking Theorem 8.5.5 in \cite{Durrettbook10}, yields 
\begin{align}\label{eq_expectation}
\mathbb{E}^x \tau_* = ??. 
\end{align}
Using this, we can obtain
\begin{align}\label{eq_case2}
u(x) &= \mathbb{E}^w g(W(\tau_*)) \nonumber\\
    &= \alpha (s + \mathbb{E}^w \tau_*) + \mathbb{E}^w \left[(y+ Y(\tau_*))^2\right] \nonumber\\
& = \alpha (s + ??)  + y_*^2\nonumber\\
& = .
\end{align}
Hence, in Case 2,
\begin{align}
u(x)-g(x)=\frac{3}{2}\alpha^2 - b^2\alpha + \frac{1}{6}b^4 =\frac{1}{6} (b^2-3\alpha)^2\geq 0.\nonumber
\end{align}
By combining these two cases, Lemma \ref{lem1_stop} is proven. %The expression of $u(x)$ can be obtained from \eqref{eq_case1} and \eqref{eq_case2}. This completes the proof.
}

\ignore{
 which requires to  consider the following two cases in Theorem \ref{thm1}: 

\emph{Case 1}: If the sampling rate constraint \eqref{eq_constraint} is inactive at the optimal $\beta$, 
by using Lemma \ref{lem_stop}, we get
\begin{align}
\beta = &\frac{\mathbb{E}\left[\int_{D_i(\beta)}^{D_{i+1}(\beta)}\! (X_t-\hat X_t)^2dt\right]}{\mathbb{E}[D_{i+1}(\beta)\!-\!D_i(\beta)]}\nonumber\\
\leq&\frac{\mathsf{mse}_{\infty}\mathbb{E}[D_{i+1}(\beta)\!-\!D_i(\beta)] }{\mathbb{E}[D_{i+1}(\beta)\!-\!D_i(\beta)]}\leq \mathsf{mse}_{\infty}.
\end{align}

\emph{Case 2}: In this case,  $\beta$ is the root of $\mathbb{E}[D_{i+1}(\beta)-D_i(\beta)] = {1}/{f_{\max}}$. To find the value of $\beta$, let us consider the threshold $v(\beta)$ determined by \eqref{eq_threshold}. According to \eqref{eq_mse_Yi} and \eqref{eq_mse_infty}, we know ${\mathsf{mse}}_{Y_i} \leq \mathsf{mse}_{\infty}$. Because $G(\cdot)$ is an increasing function, as $\beta$ increases towards $\mathsf{mse}_{\infty}$, $\frac{\mathsf{mse}_{\infty} - \mathsf{mse}_{Y_i}}{\mathsf{mse}_{\infty} - {\beta}}$ and $v(\beta)$ in \eqref{eq_threshold} both increases to infinite. As a result, $\mathbb{E}[D_{i+1}(\beta)-D_i(\beta)]$ also increases to infinite. Hence, there exists a $\beta \in [0,\mathsf{mse}_{\infty}]$ satisfying $\mathbb{E}[D_{i+1}(\beta)-D_i(\beta)] = {1}/{f_{\max}}$.

Therefore, $\beta\leq \mathsf{mse}_{\infty}$ holds in both cases. 
}

\ignore{
we prove $\mathsf{mse}_{Y_i} \leq {\mathsf{mse}}_{\text{opt}}$. Because ${\mathsf{mse}}_{\text{opt}}$ is decreasing in $f_{\max}$, if $\mathsf{mse}_{Y_i} < {\mathsf{mse}}_{\text{opt}}$ holds in the special case $f_{\max}=\infty$, then $\mathsf{mse}_{Y_i} < {\mathsf{mse}}_{\text{opt}}$ holds for all possible values of $f_{\max}$. Therefore, we only need to prove $\mathsf{mse}_{Y_i} < {\mathsf{mse}}_{\text{opt}}$ for the case of $f_{\max}=\infty$. 

%We prove  $\mathsf{mse}_{Y_i} \leq {\mathsf{mse}}_{\text{opt}}$ by contradiction. 
According to \eqref{thm1_eq22} and \eqref{thm1_eq23}, ${\mathsf{mse}}_{\text{opt}} = \beta$ holds when $f_{\max}=\infty$. As the value of ${\mathsf{mse}}_{\text{opt}}$ reduces to $\mathsf{mse}_{Y_i}$, $\frac{\mathsf{mse}_{\infty} - \mathsf{mse}_{Y_i}}{\mathsf{mse}_{\infty} - {\beta}}=\frac{\mathsf{mse}_{\infty} - \mathsf{mse}_{Y_i}}{\mathsf{mse}_{\infty} - \mathsf{mse}_{\text{opt}}}$ decreases to 1. Because $G(0) =1$, $G(x)$ is strictly increasing on $[0, \infty)$, and $G(x)$ is even, as the value of ${\mathsf{mse}}_{\text{opt}}$ reduces to $\mathsf{mse}_{Y_i}$, the threshold ${v}(\beta)$ in \eqref{eq_threshold} decreases to 0.  

Suppose that ${v}(\beta)=0$, by using the arguments in \eqref{eq_expectation_12}-\eqref{eq_expectation_9} below and ${\mathsf{mse}}_{Y_i} = \EE[O_{Y_i}^2] \leq \mathsf{mse}_{\infty}$, one can show that
\begin{align}\label{eq_boundary_2}
{\mathsf{mse}}_{\text{opt}} = \beta =&\frac{\mathbb{E}\left[\int_{D_i(\beta)}^{D_{i+1}(\beta)}\! (X_t-\hat X_t)^2dt\right]}{\mathbb{E}[D_{i+1}(\beta)\!-\!D_i(\beta)]} > \mathsf{mse}_{Y_i}.
\end{align} 
However,  \eqref{eq_threshold} and ${v}(\beta)=0$ imply  ${\mathsf{mse}}_{\text{opt}}=\mathsf{mse}_{Y_i}$, which contradicts with \eqref{eq_boundary_2}. Hence, it must hold that ${v}(\beta)>0$. By \eqref{eq_threshold}, ${\mathsf{mse}}_{\text{opt}}= \beta > \mathsf{mse}_{Y_i}$ for the case of $f_{\max}=\infty$. Therefore, $\mathsf{mse}_{Y_i} < {\mathsf{mse}}_{\text{opt}}$ holds for all possible values of $f_{\max}$.
This completes the proof.
}

%\end{description}

%\title{Supplementary Materials for the paper "Sampling and Remote Estimation for the Ornstein-Uhlenbeck Process through Queues: Age of Information and Beyond"}
%\maketitle
%%\blindtext[5]
% you can choose not to have a title for an appendix
% if you want by leaving the argument blank
%\section{}
%Appendix two text goes here.

% use section* for acknowledgment
%\section*{Acknowledgment}

%The authors would like to thank...

% Can use something like this to put references on a page
% by themselves when using endfloat and the captionsoff option.
\ifCLASSOPTIONcaptionsoff
  \newpage
\fi

\end{document}